	\providecommand\BibTeX{{%
			\normalfont B\kern-0.5em{\scshape i\kern-0.25em b}\kern-0.8em\TeX}}}
\newsavebox{\tablebox}
\renewcommand\footnotetextcopyrightpermission[1]{} 
\newcommand{\graycode}[1]{\colorbox{gray!20}{\small #1}}
\lstdefinestyle{py}{
    language=Python,
    basicstyle=\ttfamily,
    breaklines=true,
    keywordstyle=\bfseries\color{NavyBlue},
    morekeywords={as},
    emph={self},
    emphstyle=\bfseries\color{Rhodamine},
    commentstyle=\itshape\color{black!50!white},
    stringstyle=\bfseries\color{PineGreen!90!black},
    columns=flexible,
}
\lstdefinestyle{Quire}{
    columns=fullflexible,
    mathescape = true,
    sensitive=true,
    basicstyle=\small,
    breaklines=true,
    morekeywords=[1]{Def, Test, Eval, End, Refine, Extract, Show, Pause, Prog, Step, Choose, Inv, Seq, If, While, WeakenPre},
    morekeywords=[2]{abort, skip, assert, if, then, else, end, while, do, proc, IQOPT}, 
    morekeywords=[3]{Goal, Clear},
    keywordstyle=[1]\bfseries\color{NavyBlue},
    keywordstyle=[2]\bfseries\color{black}, 
    keywordstyle=[3]\bfseries\color{black}, 
    morecomment=[l]{//},
    commentstyle=\itshape\color{black!50!white},
    string=[b]"
}
\lstdefinestyle{Quire-tiny}{
    columns=fullflexible,
    mathescape = true,
    sensitive=true,
    basicstyle=\footnotesize,
    breaklines=true,
    morekeywords=[1]{Def, Test, Eval, End, Refine, Extract, Show, Pause, Prog, Step, Choose, Inv, Seq, If, While, WeakenPre},
    morekeywords=[2]{abort, skip, assert, if, then, else, end, while, do, proc, IQOPT}, 
    morekeywords=[3]{Goal, Clear},
    keywordstyle=[1]\bfseries\color{NavyBlue},
    keywordstyle=[2]\bfseries\color{black}, 
    keywordstyle=[3]\bfseries\color{black}, 
    morecomment=[l]{//},
    commentstyle=\itshape\color{black!50!white},
    string=[b]"
}
\begin{document}
	
	\title{Refinement calculus of quantum programs with projective assertions}
	

	
\author{Yuan Feng}
\affiliation{
\institution{Centre for Quantum Software and Information, University of Technology Sydney}
\city{Sydney}
\country{Australia}
}
\email{yuan.feng@uts.edu.au}
\author{Li Zhou}
\affiliation{
\institution{State Key Laboratory of Computer Science, Institute of Software, Chinese Academy of Sciences}
\city{Beijing}
\country{China}
}
\email{li.zhou@ios.ac.cn}
\author{Yingte Xu}
\affiliation{
\institution{Max Planck Institute for Security and Privacy (MPI-SP)}           
\city{Bochum}
\country{Germany}
}
\email{lucianoxu@foxmail.com}
 
	\begin{abstract}
		
		Refinement calculus provides a structured framework for the progressive and modular development of programs, ensuring their correctness throughout the refinement process. This paper introduces a refinement calculus tailored for quantum programs. To this end, we first study the partial correctness of nondeterministic programs within a quantum while language featuring prescription statements. Orthogonal projectors, which are equivalent to subspaces of the state Hilbert space, are taken as assertions for quantum states. In addition to the denotational semantics where a nondeterministic program is associated with a set of trace-nonincreasing super-operators, we also present their semantics in transforming a postcondition to the weakest liberal postconditions and, conversely, transforming a precondition to the strongest postconditions. Subsequently, refinement rules are introduced based on these dual semantics, offering a systematic approach to the incremental development of quantum programs applicable in various contexts. To illustrate the practical application of the refinement calculus, we examine examples such as the implementation of a $Z$-rotation gate, the repetition code, and the quantum-to-quantum Bernoulli factory. Furthermore, we present \texttt{Quire}, a Python-based interactive prototype tool that provides practical support to programmers engaged in the stepwise development of correct quantum programs.
		
	\end{abstract}

\begin{CCSXML}
	<ccs2012>
	<concept>
	<concept_id>10003752.10010124.10010138.10010140</concept_id>
	<concept_desc>Theory of computation~Program specifications</concept_desc>
	<concept_significance>500</concept_significance>
	</concept>
	<concept>
	<concept_id>10003752.10010124.10010138.10010144</concept_id>
	<concept_desc>Theory of computation~Assertions</concept_desc>
	<concept_significance>500</concept_significance>
	</concept>
	<concept>
	<concept_id>10003752.10010124.10010138.10010141</concept_id>
	<concept_desc>Theory of computation~Pre- and post-conditions</concept_desc>
	<concept_significance>500</concept_significance>
	</concept>
	<concept>
	<concept_id>10003752.10010124.10010138.10011119</concept_id>
	<concept_desc>Theory of computation~Abstraction</concept_desc>
	<concept_significance>500</concept_significance>
	</concept>
	<concept>
	<concept_id>10003752.10010124.10010138.10010143</concept_id>
	<concept_desc>Theory of computation~Program analysis</concept_desc>
	<concept_significance>500</concept_significance>
	</concept>
	</ccs2012>
\end{CCSXML}

\ccsdesc[500]{Theory of computation~Program specifications}
\ccsdesc[500]{Theory of computation~Assertions}
\ccsdesc[500]{Theory of computation~Pre- and post-conditions}
\ccsdesc[500]{Theory of computation~Abstraction}
\ccsdesc[500]{Theory of computation~Program analysis}


	\keywords{Refinement calculus, quantum programming, semantics, verification}  

	\maketitle

\newcommand \sker[1] {\mathcal{N}\left(#1\right)}
\newcommand {\empstr} {\Lambda}
\def\qcoin{\mathbf{QC}}

\newcommand {\qcf}[1] {{\sf{#1}}}

\newcommand {\qc}[1] {{\sf{#1}}}
\def\>{\ensuremath{\rangle}}
\def\<{\ensuremath{\langle}}
\def\sl {\ensuremath{\llparenthesis}}
\def\sr{\ensuremath{\rrparenthesis}}
\def\-{\ensuremath{\textrm{-}}}
\def\ott{t}
\def\otu{u}
\def\ots{s}
\def\apply{\mathrel{*\!\!=}}

\def\comm{\ensuremath{\leftrightarrow^*}}
\def\reach{\ensuremath{\rightarrow^*}}
\newcommand \alert[1] {{\color{red} #1}}
\newcommand{\pcom}[1]{\ {}_{#1}\!\!\oplus}
\def\ctp{P}
\def\ctq{Q}

\def\change{\ensuremath{\mathit{change}}}

\def\qVar{\ensuremath{\mathit{qv}}}
\def\qv{\ensuremath{\mathit{qv}}}
\def\cVar{\ensuremath{\mathit{cv}}}
\def\QVar{\ensuremath{\mathit{V}}}
\def\CVar{\ensuremath{\mathit{cVar}}}
\def\Var{\ensuremath{\mathit{var}}}
\def\Chan{\mathit{chan}}
\def\cChan{\mathit{cChan}}
\def\qChan{\mathit{qChan}}
\def\BExp{\mathit{BExp}}
\def\Exp{\mathbb{E}}

\def\fdmu{\Delta}
\def\fdnu{\dnu}
\def\fdomega{\domega}

\def\dmu{\mu}
\def\dnu{\nu}
\def\domega{\omega}
\def\expect{\mathbb{E}}
\def\preexpect{\mathrm{pre}\mathbb{E}}

\def\rassign{:=_{\$}}
\def\fpi{\widehat{\pi}}
\def\h{\ensuremath{\mathcal{H}}}
\def\p{\ensuremath{\mathcal{P}}}
\def\l{\ensuremath{\mathcal{L}}}
\def\g{\ensuremath{\mathcal{G}}}
\def\lh{\ensuremath{\mathcal{L(H)}}}
\def\dh{\ensuremath{\mathcal{D(H})}}
\def\dhv{\ensuremath{\d(\h_V)}}
\def\shv{\ensuremath{\s(\h_V)}}
\def\q{\bold Q}
\def\Q{\ensuremath{\mathbb Q}}
\def\P{\ensuremath{\mathbb P}}
\def\SO{\ensuremath{\mathcal{SO}}}
\def\HP{\ensuremath{\mathcal{HP}}}
\def\hpe{\ensuremath{\mathcal{\e}}}

\def\r{\ensuremath{\mathcal{R}}}
\def\R{\ensuremath{\mathbb{R}}}
\def\m{\ensuremath{\mathcal{M}}}
\def\u{\ensuremath{\mathcal{U}}}
\def\k{\ensuremath{\mathcal{K}}}
\def\K{\ensuremath{\mathfrak{K}}}
\def\S{\ensuremath{\mathfrak{S}}}
\def\s{\ensuremath{\mathcal{S}}}
\def\t{\ensuremath{\mathcal{T}}}
\def\u{\ensuremath{\mathcal{U}}}
\def\U{\ensuremath{\mathfrak{U}}}
\def\L{\ensuremath{\mathfrak{L}}}
\def\x{\ensuremath{\mathcal{X}}}
\def\y{\ensuremath{\mathcal{Y}}}
\def\z{\ensuremath{\mathcal{Z}}}

\def\st{\ensuremath{\mathfrak{t}}}
\def\su{\ensuremath{\mathfrak{u}}}
\def\ss{\ensuremath{\mathfrak{s}}}

\def\ra{\ensuremath{\rightarrow}}
\def\a{\ensuremath{\mathcal{A}}}
\def\b{\ensuremath{\mathcal{B}}}
\def\c{\ensuremath{\mathcal{C}}}

\def\e{\ensuremath{\mathcal{E}}}
\def\f{\ensuremath{\mathcal{F}}}
\def\l{\ensuremath{\mathcal{L}}}
\def\X{\mbox{\bf{X}}}
\def\N{\mathbb{N}}
\def\sreal{\mathbb{R}}
\def\Z{\mathbb{Z}}

\def\qzz{\ensuremath{|0\>_q\<0|}}
\def\qoo{\ensuremath{|1\>_q\<1|}}
\def\qzo{\ensuremath{|0\>_q\<1|}}
\def\qoz{\ensuremath{|1\>_q\<0|}}
\def\qii{\ensuremath{|i\>_q\<i|}}
\def\qiz{\ensuremath{|i\>_q\<0|}}
\def\qzi{\ensuremath{|0\>_q\<i|}}

\def\quzz{\ensuremath{|0\>_{\bar{q}}\<0|}}
\def\quoo{\ensuremath{|1\>_{\bar{q}}\<1|}}
\def\quzo{\ensuremath{|0\>_{\bar{q}}\<1|}}
\def\quoz{\ensuremath{|1\>_{\bar{q}}\<0|}}
\def\quii{\ensuremath{|i\>_{\bar{q}}\<i|}}
\def\quiz{\ensuremath{|i\>_{\bar{q}}\<0|}}
\def\quzi{\ensuremath{|0\>_{\bar{q}}\<i|}}

\DeclarePairedDelimiter{\ceil}{\lceil}{\rceil}

\def\d{\ensuremath{\mathcal{D}}}
\def\dh{\ensuremath{\mathcal{D(H)}}}
\def\lh{\ensuremath{\mathcal{L(H)}}}
\def\le{\ensuremath{\sqsubseteq}}
\def\ge{\ensuremath{\sqsupseteq}}
\def\lt{\ensuremath{\sqsubset}}
\def\gt{\ensuremath{\sqsupset}}
\def\eval{\ensuremath{{\psi}}}
\def\aeq{\ensuremath{{\ \equiv\ }}}
\def\osnt{\ensuremath{\sl \ott, \e\sr}}
\def\snt{\st}
\def\snti{\ensuremath{\sl \ott_i, \e_i\sr}}
\def\osnu{\ensuremath{\sl \otu, \f\sr}}
\def\osns{\ensuremath{\sl s, \g\sr}}
\def\snu{\su}
\def\fdist{\ensuremath{\d ist_\h}}
\def\dist{\ensuremath{Dist}}
\def\wtx{\ensuremath{\widetilde{X}}}

\def\bv{1{v}}
\def\bV{\mathbf{V}}
\def\bf{\mathbf{f}}
\def\bw{\mathbf{w}}
\def\zo{\mathbf{0}}
\def\bX{\mathbf{X}}
\def\bDelta{\mathbf{\Delta}}
\def\bdelta{\boldsymbol{\delta}}
\def\next{\mathcal{X}}
\def\until{\mathcal{U}}

\def\leqI{\ensuremath{\mathcal{SI}(\h)}}
\def\leqIq{\ensuremath{\mathcal{SI}_{\eqsim}(\h)}}
\def\oact{\ensuremath{\alpha}}
\def\oactb{\ensuremath{\beta}}
\def\sact{\ensuremath{\gamma}}
\def\fpi{\ensuremath{\widehat{\pi}}}
\newcommand{\supp}[1]{\ensuremath{\left\lceil{#1}\right\rceil}}
\newcommand{\support}[1]{\lceil{#1}\rceil}

\newcommand{\abis}{\stackrel{\lambda}\approx}
\newcommand{\abisa}[1]{\stackrel{#1}\approx}
\newcommand {\qbit} {\mbox{\bf{new}}}

\renewcommand{\theenumi}{(\arabic{enumi})}
\renewcommand{\labelenumi}{\theenumi}
\newcommand{\tr}{{\rm tr}}
\newcommand{\rto}[1]{\stackrel{#1}\rightarrow}
\newcommand{\orto}[1]{\stackrel{#1}\longrightarrow}
\newcommand{\srto}[1]{\stackrel{#1}\longmapsto}
\newcommand{\sRto}[1]{\stackrel{#1}\Longmapsto}

\newcommand{\ass}[3]{\left\{#1\right\}#2\left\{#3\right\}}
\newcommand{\iass}[3]{\left[#1\right] #2 \left[#3\right]}
\newcommand{\oass}[3]{\left\<#1\right\> #2 \left\<#3\right\>}

\newcommand{\andor}{\ \&\ }

\newcommand {\true} {\ensuremath{{\mathbf{true}}}}
\newcommand {\false} {\ensuremath{{\mathbf{false}}}}
\newcommand {\abort}{\ensuremath{{\mathbf{abort}}}}
\newcommand {\sskip} {\mathbf{skip}}

\newcommand {\then} {\ensuremath{\mathbf{then}}}
\newcommand {\eelse} {\ensuremath{\mathbf{else}}}
\newcommand {\while} {\ensuremath{\mathbf{while}}}
\newcommand {\ddo} {\ensuremath{\mathbf{do}}}
\newcommand {\pend} {\ensuremath{\mathbf{end}}}
\newcommand {\inv} {\ensuremath{\mathbf{inv}}}

\newcommand {\mymeas} {\mathbf{meas}}

\newcommand \assert[1] {\mathbf{assert}\ #1}
\newcommand {\fail} {\mathbf{fail}}
\newcommand {\iif} {\mathbf{if}}
\newcommand {\fii} {\mathbf{fi}}
\newcommand {\od} {\mathbf{od}}
\newcommand {\irepeat} {\mathbf{repeat}}
\newcommand {\iuntil} {\mathbf{until}}
\newcommand {\blocal} {\mathbf{begin\ local}}
\def\mstm{\iif\ b\ \then\ S_1\ \eelse\ S_0\ \pend}
\def\wstm{\while\ b\ \ddo\ S\ \pend}

\newcommand\pmeasstm[3]{\iif\ #1\ \then\ #2\ \eelse\ #3\ \pend}
\def\pmstm{\iif\ P[\bar{q}] \ \then\ S_1\ \eelse\ S_0\ \pend}
\def\pwstm{\while\ P[\bar{q}] \ \ddo\ S\ \pend}

\newcommand\measstm[3]{\iif\ #1\ra\ #2\ \square\ \neg (#1)\ra #3\ \fii}

\newcommand\whilestm[2]{\while\ #1\ \ddo\ #2\ \pend}

\newcommand\repstm[2]{\irepeat\ #1\ \iuntil\ #2}

\newcommand\alterex{\iif\ B_1\ra S_1 \square\ldots\square B_n\ra S_n\ \fii}

\newcommand\altercom{\iif\ \square_{i=1}^n B_i\ra S_i\ \fii}

\newcommand\repex{\ddo\ B_1\ra S_1 \square\ldots\square B_n\ra S_n\ \od}

\newcommand\repcom{\ddo\ \square_{i=1}^n B_i\ra S_i\ \od}

\newcommand\seqcom{\ddo\ \square_{i=1}^n B_i;\alpha_i\ra S_i\ \od}

\newcommand {\spann} {\mathrm{span}}

\newcommand{\rrto}[1]{\xhookrightarrow{#1}}
\newcommand{\con}[3]{\iif\ {#1}\ \then\ {#2}\ \eelse\ {#3}\ \pend}

\newcommand{\Rto}[1]{\stackrel{#1}\Longrightarrow}
\newcommand{\nrto}[1]{\stackrel{#1}\nrightarrow}

\newcommand{\Rhto}[1]{\stackrel{\widehat{#1}}\Longrightarrow}
\newcommand{\define}{\ensuremath{\triangleq}}
\newcommand{\rsim}{\simeq}
\newcommand{\obis}{\approx_o}
\newcommand{\sbis}{\ \dot\approx\ } 
\newcommand{\stbis}{\ \dot\sim\ } 
\newcommand{\nssbis}{\ \dot\nsim\ } 

\newcommand{\bis}{\sim}
\newcommand{\rat}{\rightarrowtail}
\newcommand{\wbis}{\approx}
\newcommand{\id}{\mathcal{I}}
\newcommand{\stet}[1]{\{ {#1}  \}  } 
\newcommand{\unw}[1]{\stackrel{{#1}}\sim}
\newcommand{\rma}[1]{\stackrel{{#1}}\approx}

\def\step{\textsf{step}}
\def\obs{\textsf{obs}}
\def\dom{\textsf{dom}}
\def\purge{\textsf{ipurge}}
\def\source{\textsf{sources}}
\def\cnt{\textsf{cnt}}
\def\read{\textsf{read}}
\def\alter{\textsf{alter}}
\def\dirac#1{\delta_{#1}}

\def \srho {\sqrt{\rho}}
\def\tybool{\ensuremath{\mathbf{Boolean}}}
\def\tyint{\ensuremath{\mathbf{Integer}}}
\def\tyqubit{\ensuremath{\mathbf{Qubit}}}
\def\tyqudit{\ensuremath{\mathbf{Qudit}}}
\def\tyqureg{\ensuremath{\mathbf{Qureg}}}
\def\tyunitreg{\ensuremath{\mathbf{Unitreg}}}
\def\type{\ensuremath{\mathit{type}}}

\def\qconc{\mathcal{Q}}
\def\ps{\rm{PS}}
\def\is{\rm{IN}}
\def\aS{\sem{S}^{\#}}
\def\amap#1{\sem{#1}^{\#}}
\def\qstate{\rho}
\def\qassert{\Theta}
\def\qassertp{\Psi}
\def\casserts{\a}
\def\cstate{S}
\def\cstates{\prog}
\def\cassert{p}
\def\emptydis{\bot}
\def\qset{Q}
\def\qsetp{R}

\newcommand {\tot} {\mathit{tot}}
\newcommand {\pal} {\mathit{par}}

\def\supoprset{\mathbb{E}}

\def\leinf{\le_{\mathit{inf}}}
\def\geinf{\ge_{\mathit{inf}}}
\def\qstates{\s_V}
\def\qasserts{\a_V}
\def\qstatesh#1{\d(\mathcal{H}_{#1})}

\def\qassertsh#1{\mathcal{A}_{#1}}

\def\qstatesp{\mathcal{S}(\h')}

\newcommand\prog{\mathit{Prog}}
\def\ph{\ensuremath{\mathcal{P}(\h)}}
\def\phv{\ensuremath{\mathcal{P}(\h_V)}}

\def\l{\mathcal{L}}
\def\k{\mathcal{K}}
\def\qmc {\color{red}}
\def\dtmc {\color{black}}
\newcommand{\ysim}[1]{\stackrel{#1}\sim}
\def\z{\mathbf{0}}
\newcommand{\TRANDA}[3]{#1\xrightarrow{#2}_{{\sf D}}#3}
\def\pdist{\mathit{pDist}}

\def\C{\mathbb{C}}

\newcommand{\subs}[2]{{#2}/{#1}}

\def \Rm#1{\mbox{\rm #1}}
\def \lsem      {\llbracket}
\def \rsem      {\rrbracket}
\def \sem#1{\lsem #1 \rsem}

\def \asemp#1{\mbox{\lsem$#1$\rsem$_1^\#$}}
\def \asems#1{\mbox{\lsem$#1$\rsem$_\sigma^\#$}}
\def \asemsr#1{\mbox{\lsem$#1$\rsem$_{\sigma,\r}^\#$}}

\newtheorem{remark}{Remark}
	\def\qassert{P}
	\def\qassertp{Q}
	
	\section{Introduction}
	
	Refinement calculus~\cite{dijkstra1976discipline,morgan1994programming,back1998refinement} is a formal method in software engineering used to specify and verify the correctness of software systems. Unlike Hoare logic~\cite{hoare1969axiomatic} or other static analysis methods~\cite{nielson1999principles},  refinement calculus does not verify a specific executable program. Instead, it specifies the desired behavior of a system through the use of abstract specifications, often expressed in terms of preconditions and postconditions. Preconditions define assumptions that must be satisfied before a program can be executed, and postconditions specify the expected results or properties that the final states should satisfy after execution of the program. The abstract specification is progressively refined into more concrete and detailed implementations by utilizing a set of rules that govern the steps of refinement. Each refinement step introduces additional details while maintaining correctness, resulting in a series of refined programs that converge to the final executable implementation. This approach helps to lay a rigorous foundation for software engineering in classical computing systems, ensuring the correctness, reliability, and maintainability of complex software~\cite{kourie2012correctness,mciver2020correctness}.

	By exploiting quantum parallelism caused by quantum superposition, quantum computing promises to provide unparalleled computing power~\cite{nielsen2002quantum,shor1994algorithms,grover1996fast,harrow2009quantum,childs2003exponential}. Developing reliable quantum programs is critical to effectively utilizing the power of quantum computers. However, due to the subtle and counterintuitive nature of quantum computing, the design of quantum programs is highly error-prone, thus requiring systematic approaches to design, analyze, and verify these programs. Current efforts related to the development and verification of quantum programs include mainly quantum Hoare logic~\cite{ying2012floyd,zhou2019applied,feng2021quantum,chadha2006reasoning,Kakutani:2009} and testing methods~\cite{li2020projection,liu2020quantum,huang2019statistical}. However, these techniques always assume that the program in question is fully developed and therefore cannot be applied in the early stages of program development.

	In this paper, we propose a refinement calculus for quantum programs. The predicates used to represent pre- and postconditions of quantum programs are taken to be subspaces (or equivalently, orthogonal projectors) of the associated Hilbert space. This design decision is motivated by the following considerations. Firstly, the collection of all subspaces within a finite-dimensional Hilbert space $\h$ constitutes a complete modular lattice. In contrast, the set of effects (positive operators with all eigenvalues not larger than 1), which are taken as quantum assertions in, say, ~\cite{d2006quantum,ying2012floyd} for verification of deterministic quantum programs, is merely a partially ordered set (CPO). As a result, utilizing projectors as assertions leads to a more elegant and richer theory, which can simplify the verification of quantum programs, especially those involving nondeterminism.
    In particular, this enables the utilization of quantum logic to aid in reasoning about our assertion logic, partially avoiding traditional matrix/linear algebraic calculations, which is of considerable importance for tooling and automation. Secondly, we observe that for the loop construct in our language, the weakest liberal preconditions and the strongest postconditions can be expressed as the meet and join of certain subspaces, respectively. For example, 
	$
			wlp.(\pwstm).Q  = \bigwedge_{n\geq 0} wlp.\while^n.Q
	$
	where $\while^n$ is the $n$-th approximation of $\pwstm$. Note that the sequence of subspaces $wlp.\while^n.Q$ is decreasing with respect to the set-inclusion order. Thus, their dimensions are also decreasing. As a result, the weakest liberal preconditions of quantum programs can be computed, and therefore their correctness verification can be done, effectively.
	Finally, taking projectors as quantum assertions does not significantly compromise expressiveness, contrary to initial impressions. In fact, it has the same power as effect assertions, provided that both pre- and postconditions are represented as projectors. More specifically, it has been demonstrated in~\cite{zhou2019applied} that for any quantum program $S$ and projectors $P$ and $Q$, the Hoare triple $\ass{P}{S}{Q}$ is semantically valid in the applied Hoare logic proposed in~\cite{zhou2019applied} if and only if it is semantically valid in the original quantum Hoare logic introduced in~\cite{ying2012floyd}. It is worth highlighting that this equivalence holds for both partial correctness and total correctness.

	To establish a robust theoretical foundation for the refinement calculus, we introduce a nondeterministic quantum language that expands upon the extensively researched quantum while language. This extension incorporates a construct known as \emph{prescription}, akin to the classical language in~\cite{morris1987theoretical}. A prescription consists of a pair of projectors defining the pre- and post-conditions of a fully executable program. Our main contributions in this paper include:

	\begin{enumerate}
		\item \emph{Formal semantics}. In addition to the denotational semantics associating a nondeterministic program with a set of trace-nonincreasing super-operators, we also present their semantics in terms of backward and forward predicate transformers, respectively. Notably, we discover unexpected applications of the Sasaki implication and conjunction~\cite{herman1975implication} in the weakest liberal preconditions and the strongest postconditions, respectively. This stands in contrast to the roles played by ordinary implication and conjunction in classical programs.
		
		\item \emph{Refinement calculus}. We introduce two types of refinement rules based on the weakest liberal precondition and the strongest postcondition semantics, respectively. These rules provide a structured approach to the incremental development of quantum programs that are applicable in various contexts. Moreover, we present refinement rules for assertion statements, useful in scenarios such as projection-based runtime testing and debugging of quantum programs~\cite{li2020projection}. A number of examples, including the implementation of a $Z$-rotation gate, the repetition code, and the quantum-to-quantum Bernoulli factory, have been examined to illustrate the practicality of the refinement calculus.
		
		\item \emph{Prototype implementation}.  To aid quantum programmers in the stepwise development of correct quantum programs, we develop \texttt{Quire}, a Python-based interactive prototype tool that implements the refinement rules proposed in this paper. In addition, this tool can check the well-formedness of operator terms and quantum programs, perform classical simulations of program execution, and verify whether specified prescriptions are satisfied by a fully implemented program.
	Furthermore, all the examples studied in this paper, excluding the quantum-to-quantum Bernoulli factory, which requires the formalization and analysis of parametric programs, have been successfully developed with the help of this tool.
	\end{enumerate}
	
 	\textbf{Related works}. It is well known that refinement calculus is closely related to Hoare logic. Essentially, both the refinement rules in the former and the proof rules in the latter can be derived from the weakest (liberal) precondition semantics of the programs under consideration. However, in refinement calculus, it is common to extend the language by incorporating \emph{specifications} (prescription statements in this paper), making it more comprehensive than Hoare logic, which exclusively deals with executable programs. The refinement calculus proposed in this paper provides a broader framework than the applied Hoare logic in~\cite{zhou2019applied} precisely for the same reason: we consider a richer language here and, as a result, we discover novel applications of Sasaki operators in the refinement rules.  
	
    The technique of stepwise refinement has previously been employed in~\cite{zuliani2007formal} to derive Grover’s algorithm from a probabilistic specification of the search problem. Throughout the refinement process, the rules from~\cite{morgan1994programming} for probabilistic programs, along with new ones addressing unitary transformations and quantum measurements, are applied. However, the specification language used is classical, and thus, it cannot describe the correctness of general quantum algorithms. Additionally, only a specific example of generating quantum programs from their specifications is presented, and a comprehensive refinement calculus tailored for quantum programs, like the one proposed in this paper, is absent in~\cite{zuliani2007formal}.
	
	Another related work is~\cite{tafliovich2009programming}, where the specification and analysis of quantum communication protocols in a predictive programming language~\cite{tafliovich2006quantum} are demonstrated. However, the considered specifications are limited to Boolean or probabilistic expressions with atomic propositions characterizing pure quantum states. Furthermore, the emphasis of~\cite{tafliovich2009programming} is on quantum communication, leading to refinement rules designed for message passing through classical and quantum channels. 
			
	Dynamic quantum logic was proposed and used in~\cite{baltag2011quantum,baltag2008dynamic,brunet2004dynamic,baltag2006lqp} to analyze the properties of quantum programs in an abstract way. Notably, the significant roles played by Sasaki implication and conjunction in this context have already been observed in the works of Baltag and Smets. However, in this paper, we take a step further by incorporating these notions into the analysis of quantum programs with unspecified components and proposing a refinement calculus based on them.

	Several verification tools have been developed for quantum programs. VOQC~\cite{hietala2021verified} is a verified optimizer for quantum circuits written in Coq. QWire~\cite{paykin2017qwire,Rand2018QWIREPF} and Qbricks~\cite{chareton2021automated} specialize in the proving of the properties of quantum circuits. In contrast, QHLProver~\cite{liu2019formal} and CoqQ~\cite{zhou2023coqq} employ quantum Hoare logic to verify quantum programs with complex structures like conditional branching and while loops. 
	
	Upon completion of this paper, we became aware of an independently developed work~\cite{peduri2023qbc}. In this work, an approach called Quantum Correctness by Construction (QbC) is proposed, which, similar to our refinement calculus, aims to assist in constructing quantum programs from their specifications. They also consider a quantum while language equipped with prescription statements. However, the distinction lies in the fact that the quantum predicates considered in~\cite{peduri2023qbc} are effects, while ours are projectors. As mentioned earlier, projectors enjoy much more favorable properties than effects, enabling us to provide significantly richer refinement rules for program generation. In particular, the existence of the strongest postconditions leads to a whole set of rules that cannot be obtained by using the weakest preconditions.

	\textbf{Structure of the paper}. The rest of this paper is organized as follows. We introduce in Sec.~\ref{sec:spec-lang} the target programming language of our analysis, and define its denotational semantics. The weakest liberal precondition and strongest postcondition semantics are given in Sec.~\ref{sec:wlpsp}. The main part of this paper is Sec.~\ref{sec:calculus} in which we propose a set of refinement rules based on the pre/postcondition semantics in Sec.~\ref{sec:wlpsp}. 	 
	 Illustrative examples are explored in Sec.~\ref{sec:case} to show the effectiveness of our refinement calculus. Sec.~\ref{sec:quire} is devoted to a Python-based prototype that implements the program refinement techniques developed in this paper. 
	 Finally, Sec.~\ref{sec:conclusion} concludes the paper and points out some directions for future study. 
  
    A brief introduction to the basic notions of quantum computing and quantum logic is presented in the Supplementary Material \ref{sec-preliminaries} to help understand the content of the paper. For the sake of readability, we omit all proofs in the main text. Interested readers may find the details in the Supplementary Material \ref{sec:deferred-proofs} as well.
	
	\section{A specification quantum language}
	\label{sec:spec-lang}
	The target programming language of our analysis is
an extension of the purely quantum while language defined in~\cite{zhou2019applied,feng2023abstract} with assertions and prescriptions.
	Let $\QVar$, ranged over by $q, r, \cdots$, be a finite set of (qubit-type) quantum variables. For any subset $W$ of $\QVar$, let
	$\h_W \define \bigotimes_{q\in W} \h_{q},
	$
	where $\h_{q}$ is the 2-dimensional Hilbert space associated with $q$ and its computational basis is denoted as $\{|0\>_q, |1\>_q\}$. As we use subscripts to distinguish Hilbert spaces with different quantum variables, their order in the tensor product is irrelevant. 		
	
	In this paper, we regard projectors as the \emph{qualitative} predicates for quantum states. We use the same symbol, such as $P$, to denote both a subspace and its corresponding projector. The intended meaning of these notations should be clear from the context. Consequently, a quantum state $|\psi\> \in P$ iff $P|\psi\> = |\psi\>$. Here, the former $P$ denotes a subspace, while the latter represents the corresponding projector. 
	Note that the collection of subspaces in $\h_V$ forms a complete lattice under the L\"{o}wner order $\le$ defined between linear operators such that $M\le N$ if and only if $N-M$ is positive. When applied to projectors, this order coincides with the set inclusion order between the corresponding subspaces; that is, $P\le Q$ (regarding as projectors) if and only if $P\subseteq Q$ (regarding as subspaces). Let $P_i$, $i\geq 0$, be a sequence of projectors.  
	Denote by $\bigwedge_{i\geq 0} P_i$ and $\bigvee_{i\geq 0} P_i$ the meet and join of $P_i$, respectively.
	 The satisfaction relation $\models$ between quantum states and projectors is defined as follows: Given a state $\rho$ and a projector $P$, $\rho \models P$ if and only if $\supp{\rho} \subseteq P$
where $\supp{\rho}$ denotes the support subspace of $\rho$; that is, the subspace spanned by eigenvectors of $\rho$ associated with non-zero eigenvalues. Note that $\supp{\rho} = \mathcal{N}(\rho)^\bot$ where $\mathcal{N}(\rho)\define\left\{|\psi\>\in \h_{V} : \<\psi|\rho|\psi\> = 0\right\}$ is the null subspace of $\rho$, and $Q^\bot$ denotes the ortho-complement of a subspace $Q$. For any finite dimensional Hilbert space $\h$, denote by $\d(\h)$ and $\mathcal{SO}(\h)$ the sets of partial density operators (positive operators with trace not larger than 1), and completely positive and trace-nonincreasing super-operators on $\h$, respectively.
	
	\begin{definition}\label{def:synsem}
		The syntax and denotational semantics of our language are given inductively as follows:
		\begin{equation*}
			\begin{array}{cclrl}
				& & \mbox{Syntax} & & \mbox{Semantics } \sem{S}\!: \dhv\ra  \dhv\\
				S & ::= & \sskip & \textit{(no-op)} & \left\{\id\right\}\\ 
				&|&  \abort& \textit{(abortion)} &  \{0\}\\
				&|&  \bar{q}:=0 & \textit{(initialisation)}& \left\{\mathit{Set}^{|0\>}_{\bar{q}}\right\}\\
				&| & 	\bar{q}\apply U &\textit{(unitary operation)}&\left\{\mathcal{U}_{\bar{q}}\right\}\\
				&|&   \assert{P[\bar{q}]}& \textit{(assertion)}& \left\{ \p_{\bar{q}}\right\}\\	
				&|&   [P,Q]_{\bar{q}}& {\textit{(prescription)}}& \left\{\e\in \mathcal{SO}(\h_{\bar{q}}): \forall \rho\models P, \e(\rho)\models Q\right\}\\	
				&|&   S_0\pcom{p} S_1&\textit{(prob. choice)}& p\sem{S_0} + (1-p)\sem{S_1}\\	
				&|&   S_0;S_1&\textit{(sequence)}& \sem{S_1}\circ \sem{S_0}\\
				&|&  \pmstm	& \textit{(conditional)}&\sem{S_1}\circ \p_{\bar{q}} +\sem{S_0}\circ \p^\bot_{\bar{q}}\\
				&|&  \pwstm	& \textit{(loop)}&  \sum_{k=0}^\infty \p^\bot_{\bar{q}} \circ
				\left(\sem{S}\circ \p_{\bar{q}}\right)^k
			\end{array}
		\end{equation*}
			where $S,S_0$ and $S_1$ are quantum programs, $\bar{q} \define q_1, \ldots, q_n$ a (ordered) tuple of distinct quantum variables from $\QVar$, $U$ a unitary operator on
			$\h_{\bar{q}}$, $P$ and $Q$ subspaces of $\h_{\bar{q}}$, and $p\in [0,1]$.
			Sometimes, we also use $\bar{q}$ to denote the (unordered) set $\{q_1,q_2,\dots,q_n\}$. Let $\qv(S)$ be the set of quantum variables that appear in $S$. 
			In the definition of denotational semantics, $\id$ and 0 are the identity and zero super-operators, respectively, $\mathit{Set}^{|0\>}_{\bar{q}}$ and $\u_{\bar{q}}$ with $|\bar{q}|=n$ are super-operators such that $\mathit{Set}^{|0\>}_{\bar{q}}(\rho) = \sum_{i=0}^{2^n-1}\quzi \rho\quiz$ and $\u_{\bar{q}}(\rho) = U_{\bar{q}} \rho U_{\bar{q}}^\dag$, and $\p_{\bar{q}}$ and $\p^\bot_{\bar{q}}$ are super-operators such that $\p_{\bar{q}}(\rho) = P_{\bar{q}} \rho P_{\bar{q}}$ and $\p^\bot_{\bar{q}}(\rho) = P^\bot_{\bar{q}} \rho P^\bot_{\bar{q}}$. Furthermore, operations such as composition $\circ$ and addition $+$ on individual super-operators are assumed to be extended to sets of super-operators in an element-wise way. For example, let $\supoprset$ and $\mathbb{F}$ be two sets of super-operators. Then $\supoprset \circ \e + \mathbb{F}\circ \f\define \{\e' \circ \e+ \f' \circ \f : \e'\in \supoprset, \f'\in \mathbb{F}\}.$ For clarification, we often use subscripts to emphasize the quantum system on which an operator is performed. For example, $P_W$ represents $P$ acting on system $W$. To simplify the notation, we do not distinguish between $P_W$ and its cylindrical extension $P_W\otimes I_{V\backslash W}$ to $\h_V$. We call a program \emph{executable} if it does not contain any prescription.
		\end{definition}

		Compared to the quantum while language considered in~\cite{zhou2019applied}, we have added two additional constructs here. Intuitively, $\assert P[\bar{q}]$ determines whether the current state is in the subspace $P$. If so, the program continues with the next command. Otherwise, it aborts without any output. On the contrary, $[P,Q]_{\bar{q}}$ is a nondeterministic construct that denotes the set of quantum programs containing only variables in $\bar{q}$ that, with the precondition $P$, establish the postcondition $Q$ upon termination.
	The following lemma justifies the semantic definition given in Definition.~\ref{def:synsem}.
		\begin{lemma}\label{lem:densemantics}
			For any programs $S$, the semantics $\sem{S}$ is a set of super-operators on $\h_{\qv(S)}$.
		\end{lemma}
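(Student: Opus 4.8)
The plan is to proceed by structural induction on the syntax of $S$, showing that in each case $\sem{S}$ is a subset of $\mathcal{SO}(\h_{\qv(S)})$, i.e.\ every element is a completely positive, trace-nonincreasing super-operator acting precisely on the Hilbert space indexed by the variables occurring in $S$. The statement really has two components that must be tracked simultaneously throughout the induction: first, that each individual element is a \emph{legitimate} super-operator (completely positive and trace-nonincreasing), and second, that its \emph{carrier} is exactly $\h_{\qv(S)}$. Both are preserved by the semantic clauses, but the trace-nonincreasing bound and the correct identification of the underlying system require separate attention.

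\medskip\noindent\textbf{Base cases.} For $\sskip$, $\abort$, initialisation, and unitary operation, the semantics is a singleton, so I would simply verify directly that $\id$, $0$, $\mathit{Set}^{|0\>}_{\bar q}$, and $\u_{\bar q}$ are each completely positive and trace-nonincreasing (indeed trace-preserving, except for $0$) on $\h_{\bar q}$, which is immediate from their Kraus representations (for instance $\mathit{Set}^{|0\>}_{\bar q}$ has Kraus operators $\{|0\>_{\bar q}\<i|\}_i$ satisfying $\sum_i |i\>_{\bar q}\<0|\cdot|0\>_{\bar q}\<i| = I_{\bar q}$). For the assertion $\assert{P[\bar q]}$, the single operator $\p_{\bar q}(\rho)=P_{\bar q}\rho P_{\bar q}$ is completely positive with Kraus operator $P_{\bar q}$, and trace-nonincreasing because $P_{\bar q}\le I_{\bar q}$ gives $\tr(P_{\bar q}\rho P_{\bar q})\le\tr(\rho)$. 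The prescription $[P,Q]_{\bar q}$ is defined explicitly as a subset of $\mathcal{SO}(\h_{\bar q})$, so there is nothing to prove beyond noting that its carrier is $\h_{\bar q}=\h_{\qv([P,Q]_{\bar q})}$.

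\medskip\noindent\textbf{Inductive cases.} For the compound constructs I would invoke the induction hypothesis on the subprograms and then check closure of $\mathcal{SO}$ under the relevant operations, keeping careful track of variable sets. The key facts are: $\mathcal{SO}(\h)$ is closed under composition $\circ$; under convex combinations $p\,\cdot\,+\,(1-p)\,\cdot$ for $p\in[0,1]$; and under the operations $\e\mapsto \e\circ\p_{\bar q}$ and $\e\mapsto \e\circ\p_{\bar q}^\bot$. Since composition and addition are lifted to sets elementwise (as the paper stipulates), membership of each resulting element follows from the single-operator closure properties together with the induction hypothesis. For probabilistic choice and conditional, $\qv(S)$ is the union of the variable sets of the constituents (together with $\bar q$ in the conditional case), and one extends each super-operator by the identity on the missing variables using the cylindrical-extension convention already adopted in Definition~\ref{def:synsem}.

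\medskip\noindent\textbf{The main obstacle: the loop.} The genuinely nontrivial case is the while-loop, whose semantics is the infinite series $\sum_{k=0}^{\infty}\p_{\bar q}^\bot\circ(\sem{S}\circ\p_{\bar q})^k$. Here I must argue that this series converges (in an appropriate topology on super-operators, e.g.\ the strong or operator-norm sense) and that its limit is again trace-nonincreasing. The standard approach is to show that the finite partial sums $\sum_{k=0}^{n}\p_{\bar q}^\bot\circ(\sem{S}\circ\p_{\bar q})^k$ form, for each input $\rho$, a monotonically nondecreasing sequence of partial density operators bounded above in trace by $\tr(\rho)$; this uses the fact that the total probability is conserved across the two measurement branches $\p_{\bar q}$ and $\p_{\bar q}^\bot$ at each iteration, so the accumulated trace never exceeds the input trace. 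Monotonicity plus the trace bound yields convergence to a limit that remains a partial density operator, and hence the limiting super-operator lies in $\mathcal{SO}(\h_{\bar q\cup\qv(S)})$. This is essentially the same well-definedness argument used for the deterministic quantum while language, and I would adapt it, being careful that in the nondeterministic setting the series is taken elementwise over the set $\sem{S}$, so convergence must hold for every choice of element of $\sem{S}$ at each iteration.
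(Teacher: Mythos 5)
Your proposal is correct and follows essentially the same route as the paper: the paper likewise treats the while loop as the only nontrivial case and resolves it by observing that the partial sums, for every choice of elements $\e_1,\e_2,\ldots$ of $\sem{S}$ across iterations, form a monotone chain in the CPO of super-operators (ordered by $\e\le\f$ iff $\f-\e$ is completely positive) and hence converge. Your pointwise trace-conservation argument is just a concrete unfolding of that same CPO/monotone-convergence fact, and your extra care about the carrier $\h_{\qv(S)}$ and the elementwise lifting is consistent with, if more explicit than, the paper's treatment.
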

		
		\begin{example}[A running example]\label{exam:pcircuit}
			We recall the algorithm in~\cite[Exercise 4.41]{nielsen2002quantum} designed to implement the $Z$-rotation gate $R_z(\theta) = \cos\frac{\theta}{2} I - i\sin\frac{\theta}{2} Z$ where $\cos\theta = 3/5$. The key component of this algorithm is a quantum circuit depicted on the left of Fig.~\ref{fig-rz} which, in the case that both measurement outcomes are 0, applies $R_z(\theta)$ to the last qubit. This happens with probability $5/8$. Otherwise, the circuit applies $Z$ to the last qubit, so we can apply another $Z$ to bring its state back to the original. 
			
			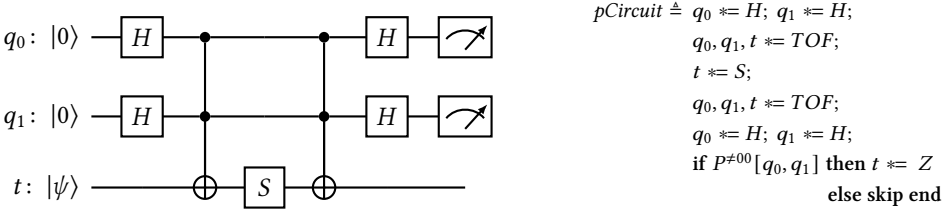
\begin{figure}[t]
				\centering
				\begin{minipage}[t]{0.5\textwidth}
					\centering
					\tikzset{
						my label/.append style={above right,xshift=0.5cm}
					}
					\begin{quantikz}[row sep=0.4cm,column sep=0.4cm]
						\lstick{$q_0\!:\  |0\>$} &\gate{H} & \ctrl{2}  & \qw& \ctrl{2} & \gate{H} &  \meter{} \\	  
						\lstick{$q_1\!:\  |0\>$} &\gate{H} & \ctrl{1}  & \qw & \ctrl{1} & \gate{H} &  \meter{} \\	
						\lstick{$t\!:\  |\psi\>$}  &\qw &\targ{}& \gate{S} & \targ{}&\qw  & \qw 
					\end{quantikz} 
				\end{minipage}
				\begin{minipage}[t]{0.48\textwidth}
					\centering
					\renewcommand{\arraystretch}{1.15}
					\footnotesize
					\begin{tabular}{rl}
						$\mathit{pCircuit}\define$
						&\hspace{-0.3cm} $q_0\apply H;\ q_1\apply H;$\\
						&\hspace{-0.3cm} $q_0,q_1,t\apply TOF;$\\
						&\hspace{-0.3cm} $t\apply S;$\\			
						&\hspace{-0.3cm} $q_0,q_1,t\apply TOF;$\\
						&\hspace{-0.3cm} $q_0\apply H;\ q_1\apply H;$\\
						&\hspace{-0.3cm} $\iif\ P^{\neq 00}[q_0,q_1]\ \then\ t\apply\ Z$\\
						&\hspace{1.5cm} $\eelse\ \sskip\ \pend$
					\end{tabular}
				\end{minipage}
				\caption{Quantum circuit (left) and program with correction (right) that implement $R_z(\theta)$ with probability 5/8.}\label{fig-rz}
			\end{figure}

			The algorithm for this key circuit plus the correction $Z$ gate when at least one measurement outcome is 1 can be written in our language shown on the right of Fig. \ref{fig-rz}, 
			where $P^{\neq 00} = {P^{00}}^\bot = |01\>\<01| \vee |10\>\<10| \vee |11\>\<11|$, $TOF$ is the Toffolli gate, which applies $X$ on the target (third) qubit if both the control qubits are in $|1\>$; otherwise, it does nothing. Note that $\mathit{Rz}$ does not contain any prescription, and so it is readily executable. 
			
			To compute the denotational semantics of $\mathit{pCircuit}$, we first note from Definition~\ref{def:synsem} that 
			\begin{align*}
				\sem{ \iif\ P^{\neq 00}[q_0,q_1]\ \then\ t\apply\ Z\ \eelse\ \sskip\ \pend} &= \sem{t\apply Z}\circ \p^{\neq 00}_{q_0, q_1} + \sem{\sskip}\circ \p^{00}_{q_0, q_1}\\
				&= \left\{\e^Z_t\circ \p^{\neq 00}_{q_0, q_1} + \p^{00}_{q_0, q_1}\right\}.
			\end{align*}
			Here, for a unitary gate $U$, we denote by $\e^U$ the corresponding super-operator such that $\e^U(\rho) = U\rho U^\dag$ for all $\rho$.
			Repeating using the rules in Definition~\ref{def:synsem}, we find
			\begin{equation}\label{eq:sempcircuit}
				\sem{\mathit{pCircuit}} = \left\{\left(\e^Z_t\circ \p^{\neq 00}_{q_0, q_1} + \p^{00}_{q_0, q_1}\right)\circ \e_{q_1}^H\circ\e_{q_0}^H\circ\e_{q_0,q_1,t}^{TOF}\circ \e_t^S\circ\e_{q_0,q_1,t}^{TOF}\circ\e_{q_1}^H\circ\e_{q_0}^H\right\}.
			\end{equation}
			We choose not to simplify the final result, since it is irrelevant to subsequent discussions.
		\end{example}		
  
		As usual, the correctness of a quantum program is expressed through a \emph{Hoare triple} $\ass{P}{S}{Q}$, which consists of a program $S$ and a pair of assertions $P$ and $Q$, called \emph{precondition} and \emph{postcondition} of $S$, respectively. Such a Hoare triple is said to be \emph{(partially) correct}, denoted $\models\ass{P}{S}{Q}$, if for any $\rho$ and $\e\in \sem{S}$,
		\begin{equation}\label{eq:defcor}
			\rho\models P \quad \mbox{implies}\quad \e(\rho)\models Q.
		\end{equation}	
        This means that starting from any quantum state supported in the subspace $P$, the final state after executing $S$ must be supported in the subspace $Q$.

		\begin{example}\label{exam:pcircuitcor} Let us revisit the running example. The correctness of the program $\mathit{pCircuit}$, disregarding its success probability, can be stated as for any state $|\psi\>\in \h_t$,
			\[
			\models \ass{|00\>_{q_0,q_1}\<00|\otimes |\psi\>_t\<\psi|}{\mathit{pCircuit}}{\left(P^{\neq 00}_{q_0,q_1}\otimes |\psi\>_t\<\psi|\right) \vee \left(P^{00}_{q_0,q_1}\otimes R_z(\theta)|\psi\>_t\<\psi|R_z(\theta)^\dagger\right)}.
			\]
			Intuitively, the precondition assumes that the input state of system $q_0$ and $q_1$ are both $|0\>$, while $t$ is in state $|\psi\>$. This can be seen from the fact that for any $\rho$, if $\rho \models|00\>_{q_0,q_1}\<00|\otimes |\psi\>_t\<\psi|$ then $\rho$ must be the pure state $|00\>_{q_0,q_1} |\psi\>_t$. Furthermore, the postcondition indicates that the output state has only two possibilities: either $q_0$ and $q_1$ are both $|0\>$ and the system $t$ is in state $R_z(\theta)|\psi\>$, thus implementing the desired rotation $R_z(\theta)$ successfully; or otherwise (either $q_0$ or $q_1$ is $|1\>$)  the system $t$ remains in the original state $|\psi\>$. To see this, suppose that the final state $\sigma$ satisfies the postcondition. Then $\sigma = \sigma_0 + \sigma_1$ such that 
			\[
			\sigma_0 \models P^{\neq 00}_{q_0,q_1}\otimes |\psi\>_t\<\psi| \qquad \mbox{and}\qquad \sigma_1 \models P^{00}_{q_0,q_1}\otimes R_z(\theta)|\psi\>_t\<\psi|R_z(\theta)^\dagger
			\]
			This in turn implies $\sigma_0$ is a product state and its reduced state on $t$ is $|\psi\>$, while  $\sigma_1$ is proportional to the pure state $|00\>_{q_0,q_1} R_z(\theta)|\psi\>_t$.

			By introducing the maximally entangled state $\Omega_{t,t'}\define |\omega\>_{t,t'}\<\omega|$, where $|\omega\> \define (|00\> + |11\>)/\sqrt{2}$, on $t$ and an auxiliary qubit $t'$, this correctness requirement can be simply stated as a \emph{single} formula
			\begin{equation}\label{eq:corpcircuit}
				\models \ass{|00\>_{q_0,q_1}\<00|\otimes \Omega_{t,t'}}{\mathit{pCircuit}}{\left(P^{\neq 00}_{q_0,q_1}\otimes  \Omega_{t,t'}\right) \vee \left(P^{00}_{q_0,q_1}\otimes \r_z(\theta)_t(\Omega_{t,t'})\right)}
			\end{equation}
			where $\r_z(\theta)$ is the super-operator corresponding to the unitary operator $R_z(\theta)$. To prove this, let $\sem{\mathit{pCircuit}} = \{\e\}$ as defined in~\eqref{eq:sempcircuit} and $\rho \models|00\>_{q_0,q_1}\<00|\otimes \Omega_{t,t'}$ be normalized. Then $\rho$ must be the pure state $|00\>_{q_0,q_1}|\omega\>_{t,t'}$. Note that all operations except the last one in $\e$ are unitary. We calculate 
			\begin{align*}
				|00\>_{q_0,q_1}|\omega\>_{t,t'} 
				\xrightarrow{\e^H_{q_0}}\xrightarrow{\e^H_{q_1}}\  & |++\>_{q_0,q_1}|\omega\>_{t,t'}\\
				\xrightarrow{\e^{TOF}_{q_0,q_1,t}}\xrightarrow{\e^{S}_{t}}\xrightarrow{\e^{TOF}_{q_0,q_1,t}}   \  & \frac{1}{2}\left[(|00\>+|01\>+|10\>)S|\omega\> + |11\>XSX|\omega\>\right]\\
				\xrightarrow{\e^H_{q_0}}\xrightarrow{\e^H_{q_1}} \ &\frac{1}{2}\left[(|++\>+|+-\>+|-+\>)S|\omega\> + |--\>XSX|\omega\>\right]\\
				=\ & \frac{1}{4}\left[|00\>(3S+XSX)|\omega\> + \left(|01\>+|10\>+|11\>\right)(S-XSX)|\omega\>\right]\\
				=\ & \frac{1-i}{4}\left[\sqrt{5}|00\>U|\omega\> + \left(|01\>+|10\>+|11\>\right)Z|\omega\>\right]
			\end{align*}
			where $U = diag\left(\frac{1+2i}{5}, \frac{-1+2i}{5}\right) = iR_{z}(\theta)$. Let $|\Phi\>$ be the last state in the above computation. Then 
			$$\p^{00}_{q_0, q_1}\left(|\Phi\>\<\Phi|\right) = \left|\frac{1-i}{4}\right|^2 \cdot 5 \cdot |00\>_{q_0,q_1}\<00| \otimes U|\omega\>_t\<\omega| U^\dag= \frac{5}{8} |00\>_{q_0,q_1}\<00| \otimes \r_z(\theta)_t(\Omega_{t,t'})$$
			while
			$\e^Z_t\circ \p^{\neq 00}_{q_0, q_1}\left(|\Phi\>\<\Phi|\right) = \frac{3}{8} |\phi\>_{q_0,q_1}\<\phi| \otimes \Omega_{t,t'}$
			where $|\phi\> = (|01\>+|10\>+|11\>)/\sqrt{3}$. It is then easy to see that $\e(\rho)=\e^Z_t\circ \p^{\neq 00}_{q_0, q_1}(\rho) + \p^{00}_{q_0, q_1}(\rho)$ indeed satisfies the postcondition of Eq.\eqref{eq:corpcircuit}. 
		\end{example}

		Note that from~\cite{zhou2019applied}, $\models \ass{P}{S}{Q}$ if and only if for any $\rho$,
		\begin{equation}\label{eq:corequiv}
			\tr(P\rho) \leq \inf \left\{\tr(Q\e(\rho)) + \tr(\rho) - \tr(\e(\rho)): \e\in \sem{S}\right\}.
		\end{equation}
            This gives a close relationship between quantum Hoare logics where quantum predicates are given as projectors or effects. In the following, we present two more equivalent conditions for partial correctness that are useful for discussions in the next section.
		\begin{lemma}\label{lem:equivpcorrectness}
			For any program $S$ and projectors $P$ and $Q$,
			\[
			\models \ass{P}{S}{Q} \quad \mbox{iff} \quad\forall \e\in \sem{S},\ P\le \mathcal{N}(\e^\dag(Q^\bot))\quad \mbox{iff}\quad \forall \e\in \sem{S},\ \supp{\e(P)}\le Q
			\]
			where $\e^\dag$ denotes the dual super-operator of $\e$; that is, $\tr\left(\e^\dag(M)\rho\right) = \tr\left(M\e(\rho)\right)$ for all $M$ and $\rho$.
		\end{lemma}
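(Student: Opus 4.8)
The plan is to prove the three-way equivalence as the chain ``first $\iff$ third $\iff$ second'', treating the middle-versus-last equivalence as a routine computation with the adjoint and reducing the outer equivalence to a single ``worst-case'' observation about the projector $P$. Before starting I would record two elementary dictionary facts. \emph{(i)} For a projector $R$ and a positive operator $\sigma$, one has $\supp{\sigma}\le R$ iff $\tr(R^\bot\sigma)=0$; dually, for a positive operator $M$ and projector $P$, $P\le\mathcal{N}(M)$ iff $\tr(MP)=0$. Both hold because, for a positive operator, the quadratic form $\<\psi|\cdot|\psi\>$ vanishes exactly when the operator annihilates $|\psi\>$, and the trace of a positive operator against a projector is a sum of nonnegative diagonal entries in a basis of that subspace. \emph{(ii)} The defining adjunction $\tr(\e^\dag(N)\sigma)=\tr(N\e(\sigma))$.

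Granting these, the equivalence of the second and third conditions is almost immediate. Fix $\e\in\sem{S}$. Since $\e^\dag$ is completely positive and $Q^\bot\ge 0$, the operator $\e^\dag(Q^\bot)$ is positive, so fact \emph{(i)} rewrites $P\le\mathcal{N}(\e^\dag(Q^\bot))$ as $\tr(\e^\dag(Q^\bot)P)=0$; by the adjunction \emph{(ii)} this equals $\tr(Q^\bot\e(P))=0$; and because $\e(P)\ge 0$, fact \emph{(i)} again rewrites the latter as $\supp{\e(P)}\le Q$. Quantifying over $\e\in\sem{S}$ shows the second condition is equivalent to the third.

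It remains to tie these to $\models\ass{P}{S}{Q}$, and here I would exploit that the projector $P$ is itself the hardest admissible input. For the implication $\models\ass{P}{S}{Q}\Rightarrow$ the third condition (assuming $P\neq 0$, the case $P=0$ being trivial), the normalized operator $P/\tr(P)$ is a density operator whose support is exactly $P$, hence $P/\tr(P)\models P$; correctness then gives $\e(P/\tr(P))\models Q$, which after rescaling is precisely $\supp{\e(P)}\le Q$. Conversely, suppose $\supp{\e(P)}\le Q$ for all $\e\in\sem{S}$ and take any $\rho\models P$, so $\supp{\rho}\le P$. Then $\rho\le\lambda_{\max}(\rho)\,P$ in the L\"owner order, and since $\e$ is completely positive, hence order-preserving, $\e(\rho)\le\lambda_{\max}(\rho)\,\e(P)$; as the support map is monotone under the L\"owner order, $\supp{\e(\rho)}\le\supp{\e(P)}\le Q$, i.e. $\e(\rho)\models Q$. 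This closes the loop.

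I expect the only genuine subtlety to be this last collapse of the universally quantified correctness condition, ranging over all $\rho\models P$, onto the single operator $P$. It rests on two monotonicity facts that I would isolate and justify: that $\supp{\rho}\le P$ forces $\rho\le\lambda_{\max}(\rho)P$ (via the spectral decomposition of $\rho$, whose eigenvectors lie in $P$), and that $0\le A\le B$ implies $\supp{A}\le\supp{B}$, equivalently $\mathcal{N}(B)\subseteq\mathcal{N}(A)$ (immediate from $0\le\<\psi|A|\psi\>\le\<\psi|B|\psi\>$). Complete positivity of $\e$ and $\e^\dag$ is what supplies both the positivity needed for fact \emph{(i)} and the order-preservation used above; the degenerate case $\sem{S}=\emptyset$ makes all three conditions vacuously true.
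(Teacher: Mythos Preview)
Your proof is correct and fully self-contained. The paper's own proof is essentially a two-line citation: it appeals to the quantitative trace inequality Eq.~\eqref{eq:corequiv} from \cite{zhou2019applied} for one equivalence and to the defining condition Eq.~\eqref{eq:defcor} for the other, without spelling out the intermediate steps. You instead work directly from first principles---the trace-zero characterization of $\supp{\cdot}\le R$ and $P\le\mathcal{N}(\cdot)$, the adjunction $\tr(\e^\dag(N)\sigma)=\tr(N\e(\sigma))$, and the ``worst-case'' observation that the normalized projector $P/\tr(P)$ already witnesses the universal quantifier over $\rho\models P$. This buys you a proof that does not depend on the external result Eq.~\eqref{eq:corequiv}; in particular, your argument never needs the full quantitative inequality $\tr(P\rho)\le\tr(Q\e(\rho))+\tr(\rho)-\tr(\e(\rho))$, only the qualitative support/null-space facts. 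The paper's route is shorter to state because it offloads the work to the cited lemma, but yours is more transparent and would stand on its own in a setting where that lemma is not already available.
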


		\section{Weakest liberal precondition and strongest postcondition}	\label{sec:wlpsp}
		
		We have defined the correctness of a quantum program in terms of a Hoare triple and demonstrated by an example how to prove it using the denotational semantics of our target language. However, in practical verification scenarios, we often encounter situations where we are provided with only a precondition $P$ or a postcondition $Q$ (but not both). In such cases, our aim is to compute the \emph{strongest postcondition} of $P$ or the \emph{weakest precondition} of $Q$ such that $\models\ass{P}{S}{Q}$ holds. This section is dedicated to addressing these scenarios. Moreover, the results obtained lay the groundwork for the refinement calculus that we will be developing in the following two sections.

		\begin{definition}\label{def:wlp}
			Given a quantum program $S$ and a postcondition $Q$, the \emph{weakest liberal precondition} of $Q$ with respect to $S$, if it exists, is the (unique) quantum assertion $wlp.S.Q$ such that (1) $\models\ass{wlp.S.Q}{S}{Q}$; and (2) for any $P$ such that $\models\ass{P}{S}{Q}$, it holds $P\le wlp.S.Q$.
		\end{definition}
		Intuitively, the first clause requires that $wlp.S.Q$ is indeed a precondition of $Q$ with respect to $S$, while the second clause stipulates that it is the weakest according to the partial order $\le$. It is easy to check from Lemma~\ref{lem:equivpcorrectness} that in our setting $wlp.S.Q$ exists for all programs $S$ and postconditions $Q$; specifically,
		\begin{equation}\label{eq:wlpabstract}
			wlp.S.Q = \bigwedge \left\{\mathcal{N}(\e^\dag(Q^\bot)): \e\in \sem{S}\right\}.
		\end{equation}
		The next lemma gives the explicit form of the weakest liberal precondition for each construct that appears in our target language. 
		\begin{lemma}\label{lem:wlp} The weakest liberal preconditions of quantum programs in our language can be given inductively as follows. For any projector $R$,
			\begin{enumerate}
				\item $wlp.S.I = I$, $wlp.\abort.R =I$, and $wlp.\sskip.R  = R$; 
				\item $wlp.(\bar{q}:=0).R  = E(\<0|_{\bar{q}}R|0\>_{\bar{q}})$ where $E(M)$ denotes the eigenspace of $M$ associated with eigenvalue 1 (we let $E(M)=0$, the null space, if $M$ does not have eigenvalue 1);
				\item $wlp.(\bar{q}\apply U).R  =  U^\dag_{\bar{q}}RU_{\bar{q}}$;
				\item $wlp.(\assert P[\bar{q}]).R  = P^\bot\vee (P\wedge R)= P \rightsquigarrow R\  \mbox{(the Sasaki implication)} $;
				\item $wlp.[P,Q]_{\bar{q}}.R  = \begin{cases}
					I & \mbox{ if } R = I\\
					P & \mbox{ if } Q\le R \mbox{ and } R\neq I\\
					0 & \mbox{o.w. } 
				\end{cases}$
				\item $wlp.(S_0\pcom{p} S_1).R  = wlp.S_0.R \wedge wlp.S_1.R\	\mbox{where} \ 0<p<1$;
				\item $wlp.(S_0;S_1).R  =  wlp.S_0.(wlp.S_1.R)$;
				\item $wlp.(\pmstm).R = (P \rightsquigarrow wlp.S_1.R)\wedge (P^\bot \rightsquigarrow wlp.S_0.R)$;
				\item\label{lemitem:wlprn} $wlp.(\pwstm).R  = \bigwedge_{n\geq 0}R_n$ where $R_0 = I$ and $$R_{n+1}  =  \left(P\rightsquigarrow wlp.S.R_n\right) \wedge\left(P^\bot\rightsquigarrow R\right).$$			\end{enumerate}
		\end{lemma}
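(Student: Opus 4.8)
The unifying tool is the abstract formula~\eqref{eq:wlpabstract} together with Lemma~\ref{lem:equivpcorrectness}, which reduce everything to computing, for each construct, the family $\{\mathcal{N}(\e^\dag(R^\bot)):\e\in\sem{S}\}$ and taking its meet. Throughout I will use the immediate consequence of Definition~\ref{def:wlp} that $P'\le wlp.S.R$ iff $\models\ass{P'}{S}{R}$, and I would first record a handful of reusable facts: for any projector $R$ one has $\mathcal{N}(R^\bot)=R$; for positive operators $A,B$ one has $\mathcal{N}(A+B)=\mathcal{N}(A)\wedge\mathcal{N}(B)$ and $\supp{A+B}=\supp{A}\vee\supp{B}$; and, crucially, the adjoint identity $\tr(\e^\dag(R^\bot)\rho)=\tr(R^\bot\e(\rho))$, giving the pointwise equivalence $\supp{\rho}\le\mathcal{N}(\e^\dag(R^\bot))\iff\supp{\e(\rho)}\le R$. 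The atomic cases (1)--(3) then follow by a direct computation of the dual super-operator: $\abort$ has dual sending everything to $0$ (null space $I$), $\sskip$ is self-dual (yielding $\mathcal{N}(R^\bot)=R$), the unitary dual is $M\mapsto U^\dag M U$, and for initialisation the Kraus operators $\{|0\>_{\bar{q}}\<i|_{\bar{q}}\}_i$ give dual image $I_{\bar{q}}\otimes\<0|_{\bar{q}}R^\bot|0\>_{\bar{q}}$, whose null space is exactly the eigenspace $E(\<0|_{\bar{q}}R|0\>_{\bar{q}})$. For the assertion (4), $\p^\dag(R^\bot)=PR^\bot P$, so $|\psi\>\in\mathcal{N}(PR^\bot P)$ iff $P|\psi\>\in R$; splitting $|\psi\>=P|\psi\>+P^\bot|\psi\>$ shows this set equals $P^\bot\vee(P\wedge R)$, i.e. the Sasaki implication.

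The prescription (5) is where genuine work is needed, since $\sem{[P,Q]_{\bar{q}}}$ is a large set and I must pin down the meet exactly by exhibiting extremal witnesses. The case $R=I$ is covered by (1). When $Q\le R$ and $R\ne I$, the inclusion $P\le wlp$ is immediate because every admissible $\e$ maps $P$-states into $Q\le R$; for the reverse $wlp\le P$, given $|\psi\>\notin P$ I would build a measure-and-prepare channel with Kraus operators $\{|\phi_Q\>\<u_j|\}\cup\{|\chi\>\<v_k|\}$, where $\{u_j\}$, $\{v_k\}$ are orthonormal bases of $P$, $P^\bot$, $|\phi_Q\>\in Q$, and $|\chi\>\in R^\bot$ (nonzero since $R\ne I$). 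This channel lies in $\sem{[P,Q]_{\bar{q}}}$ yet sends $|\psi\>\<\psi|$ to a state with nonzero component along $|\chi\>\notin R$, so $|\psi\>\notin\mathcal{N}(\e^\dag(R^\bot))$. In the remaining case ($Q\not\le R$, $R\ne I$) I pick $|\eta\>\in Q$ not lying in $R$ and use the constant ``replace-with-$|\eta\>$'' channel, which lies in the set and maps every nonzero $|\psi\>$ outside $R$, forcing $wlp=0$. Minor degeneracies ($Q=0$ or $P=0$) are handled by dropping the corresponding Kraus block.

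For the compound constructs I would argue through the correctness characterisation rather than the raw meet. Probabilistic choice (6) is the identity $\mathcal{N}(pA+(1-p)B)=\mathcal{N}(A)\wedge\mathcal{N}(B)$ applied to the duals, which distributes the meet over the two bodies. Sequencing (7) follows from the adjoint identity: for fixed $\e_0$, the condition $\forall\e_1\colon\supp{\e_1(\e_0(\rho))}\le R$ is exactly $\e_0(\rho)\models wlp.S_1.R$, and quantifying over $\e_0,\rho$ gives ${}\models\ass{P'}{S_0;S_1}{R}\iff{}\models\ass{P'}{S_0}{wlp.S_1.R}$, hence equality of the two weakest preconditions. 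For the conditional (8) I would use $\supp{A+B}=\supp{A}\vee\supp{B}$ to split the postcondition test on $\e_1(P\rho P)+\e_0(P^\bot\rho P^\bot)$ into the two branch conditions, then recognise each branch as $wlp.(\assert P;S_1).R=P\rightsquigarrow wlp.S_1.R$ and $wlp.(\assert P^\bot;S_0).R=P^\bot\rightsquigarrow wlp.S_0.R$ via (4) and (7), and take the meet.

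The loop (9) is the other main obstacle. I would introduce the syntactic approximants $\while^0=\abort$ and $\while^{n+1}=\iif P\then S;\while^n\eelse\sskip\pend$, and prove $wlp.\while^n.R=R_n$ by induction, the inductive step being a direct application of (7) and (8) to the definition of $R_{n+1}$. The core remaining claim is $wlp.(\pwstm).R=\bigwedge_{n\geq 0}wlp.\while^n.R$. Here the difficulty is that a single $\e\in\sem{\pwstm}$ is an infinite sum $\sum_{k\geq 0}\p^\bot\circ(\e_k\circ\p)\circ\cdots$ with independent nondeterministic choices at each level; I would show, using $\supp{\sum_k A_k}=\bigvee_k\supp{A_k}$, that $\supp{\e(\rho)}\le R$ holds iff the support of every finite truncation lies in $R$, and that each truncation's behaviour is subsumed by a corresponding approximant, so that correctness of the loop is equivalent to correctness of all $\while^n$ simultaneously. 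Combining with Definition~\ref{def:wlp} yields the meet. The principal technical care throughout is in the prescription witnesses and in this last continuity-style argument linking the infinite-sum semantics to the finite approximants.
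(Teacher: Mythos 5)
Your proposal is correct and follows essentially the same route as the paper's proof: reduce everything to the meet formula $wlp.S.Q=\bigwedge\{\mathcal{N}(\e^\dag(Q^\bot)):\e\in\sem{S}\}$, verify the atomic cases by computing duals, exhibit extremal witness channels for the prescription, use the lattice identities for null spaces and supports in the compound cases, and handle the loop via the syntactic approximants $S_0=\abort$, $S_{n+1}=\iif\ P\ \then\ S;S_n\ \eelse\ \sskip\ \pend$ with $wlp.S_n.R=R_n$. The only divergences are cosmetic: your prescription witness for the case $Q\le R$, $R\neq I$ is a trace-preserving measure-and-prepare channel where the paper uses $Set_\phi\circ\p^\bot_{\bar q}$, and you route sequencing through the correctness characterisation rather than the paper's direct chain of null-space identities — both yield the same computations.
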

		
		Note that every quantum state satisfies the predicate $I$, whereas the quantum state 0 satisfies every predicate $R$. Therefore, $wlp.\abort.R =I$ implies that the $\abort$ program can establish any postcondition $R$ starting from any initial state. This aligns with the concept of \emph{partial} correctness. To understand $wlp.(\bar{q}:=0).R  = E(\<0|_{\bar{q}}R|0\>_{\bar{q}})$, let us examine two simple examples. First, let $R=|0\>_{\bar{q}}\<0|$. Then $E(\<0|_{\bar{q}}R|0\>_{\bar{q}})=I$, indicating that from \emph{any} initial state, the final state of $\bar{q}$ after execution of $\bar{q}:=0$ is $|0\>$. Next, let $R=|+\>_{\bar{q}}\<+|$. Then $E(\<0|_{\bar{q}}R|0\>_{\bar{q}})=E(\frac{1}{2})=0$, indicating that from \emph{no} initial state, the final state of $\bar{q}$ is $|+\>$.

		The more intriguing case arises with the $\assert$statements, where we discover rather unexpected applications of the Sasaki implication. This contrasts to the role played by the ordinary implication in classical programs, where $wlp.(\assert{B}).p = \neg B\vee p = B\ra p$ for any Boolean guard $B$ and classical predicate $p$.
		Note that for any pure state $|\psi\>$, $|\psi\>\<\psi|\models P^\bot\vee (P\wedge R)$ if and only if $|\psi\> = |\psi_1\> + |\psi_2\>$ for some $|\psi_1\>$ and $|\psi_2\>$ such that (1) $P|\psi_1\> = 0$ and (2) $|\psi_2\> = P|\psi_2\> \in R$.
		The second part, $|\psi_2\>$, of $|\psi\>$ undergoes the program $\assert P[\bar{q}]$ without any change ($|\psi_2\> = P|\psi_2\>$), and satisfies the postcondition ($P|\psi_2\>\in R$). As for the first part, $|\psi_1\>$, the program $\assert P[\bar{q}]$ yields no results, thus the postcondition is also fulfilled according to the definition of partial correctness. 
		
		For $wlp.[P,Q]_{\bar{q}}.R$, the cases where $R=I$ or $Q\le R$ but $R\neq I$ are easy to understand. When $Q\not \le R$, first note that we can always find a quantum program $S$ in $\sem{[P,Q]_{\bar{q}}}$ that fails to establish the postcondition $R$. This program can, for example, transform any proper initial state into a pure state that is in $Q$ but not in $R$. Consequently, for \emph{no} initial state, all final states after executing a program in $\sem{[P,Q]_{\bar{q}}}$, including $S$, lie in $R$.
		Therefore, we are forced to set $wlp.[P,Q]_{\bar{q}}.R=0$ in this case.
		
		Recall that for classical branching construct: 
		\[wlp.(\iif\ B\ \then\ S_1\ \then\ S_0\ \pend).p = \left(B\ra wlp.S_1.p\right)\wedge \left(\neg B\ra wlp.S_0.p\right).\]
		The weakest liberal precondition for quantum conditional branching has a similar structure (and also a similar explanation). However, the implications inside the two conjuncts have to be replaced by the Sasaki implication, because branching in quantum programs is caused by measurements (or, by $\assert$statements). In fact, it can be easily checked that 
		\[
		\sem{\pmstm} = \sem{\assert{P[\bar{q}]}; S_1} + \sem{\assert{P^\bot[\bar{q}]}; S_0}.
		\]
		On the other hand, one may wonder why the conjunction is not replaced by the Sasaki conjunction $\doublecap$. The reason is that the quantum programs in our language follow a `quantum data, classical control' paradigm~\cite{selinger2004towards}. 
		This means that although different branches are caused by quantum measurements, the way in which they are combined to form the semantics of the entire program remains classical. Thus, we use the \emph{classical conjunction} $\wedge$ to connect the two subformulas $P \rightsquigarrow wlp.S_1.R$ and $P^\bot \rightsquigarrow wlp.S_0.R$ obtained from the individual branches.

		A dual notion of the \emph{strongest postcondition} can also be defined for quantum programs in our language.
		
		\begin{definition}
			Given a quantum program $S$ and a quantum assertion $P$, the strongest postcondition of $P$ with respect to $S$, if it exists, is the (unique) quantum assertion $sp.S.P$ such that (1) it is a valid postcondition: $\models\ass{P}{S}{sp.S.P}$; and (2) it is the strongest: for any $Q$ such that $\models\ass{P}{S}{Q}$, it holds $sp.S.P \le Q$.
		\end{definition}
		
		Again, it is easy to check from Lemma~\ref{lem:equivpcorrectness} that in our setting, $sp.S.P$ does exist for all programs $S$ and preconditions $P$; specifically,
		\begin{equation}\label{eq:spcabstract}
			sp.S.P = \bigvee \left\{\supp{\e(P)}: \e\in \sem{S}\right\}.
		\end{equation}
		The next lemma gives the explicit form of the strongest postcondition semantics for each construct that appears in our target language. 
		\begin{lemma}\label{lem:spost} The strongest postconditions of quantum program in our language can be given inductively as follows. For any projector $R$,
			\begin{enumerate}
				\item $sp.S.0 = 0$, $sp.\abort.R =0$, and $sp.\sskip.R=R$;
				\item $sp.(\bar{q}:=0).R  = |0\>_{\bar{q}}\<0|\otimes \supp{\tr_{\bar{q}}(R)}$;
				\item $sp.(\bar{q}\apply U).R  =  U_{\bar{q}}RU^\dag_{\bar{q}}$;
				\item $sp.(\assert P[\bar{q}]).R  = P\wedge (P^\bot\vee R) = P \doublecap R\ \mbox{(the Sasaki conjunction)} $;
				\item $sp.[P,Q]_{\bar{q}}.R  = \begin{cases}
					0 & \mbox{ if } R = 0\\
					Q & \mbox{ if } R\le P \mbox{ and } R\neq 0\\
					I & \mbox{o.w. } 
				\end{cases}$
				\item $sp.(S_0\pcom{p} S_1).R  =  sp.S_0.R \vee  sp.S_1.R$ where $0<p<1$;
				\item $sp.(S_0;S_1).R  =  sp.S_1.(sp.S_0.R)$;
				\item $sp.(\pmstm).R = (sp.S_1.(P\doublecap R)) \vee (sp.S_0.(P^\bot\doublecap R))$;
				\item $sp.(\pwstm).R  = \bigvee_{n\geq 0}  \left(P^\bot\doublecap R_n\right)$ where $R_0=0$ and
				$R_{n+1} = R \vee sp.S.(P\doublecap R_n).$
			\end{enumerate}
		\end{lemma}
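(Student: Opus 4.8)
The plan is to prove all nine clauses by structural induction on $S$, in each case starting from the abstract characterisation
\[
sp.S.P = \bigvee\left\{\supp{\e(P)}:\e\in\sem{S}\right\}
\]
established in \eqref{eq:spcabstract}, and reducing everything to the behaviour of the support functional $\supp{\cdot}$ under completely positive maps. Before touching the individual constructs I would isolate three auxiliary facts used repeatedly: (i) for a CP map $\e$, the subspace $\supp{\e(X)}$ depends only on $\supp{X}$, since $\supp{X}=\supp{Y}$ forces $c_1X\le Y\le c_2X$ for some $c_1,c_2>0$, whence $c_1\e(X)\le\e(Y)\le c_2\e(X)$; (ii) consequently $\supp{\e(\cdot)}$ is monotone and, in finite dimension where every join is attained by a finite sub-join, preserves arbitrary joins of projectors, $\supp{\e(\bigvee_i A_i)}=\bigvee_i\supp{\e(A_i)}$, because $\supp{\sum_iA_i}=\bigvee_iA_i$ and $\e$ commutes with finite sums; and (iii) the Sasaki identity $\supp{PRP}=P\wedge(P^\bot\vee R)=P\doublecap R$, which I would derive by noting that on $\mathrm{range}(P)$ the kernel of $PRP=(RP)^\dagger(RP)$ is exactly $P\wedge R^\bot$, so $\supp{PRP}=P\ominus(P\wedge R^\bot)$, and then applying orthomodularity ($A\le P\Rightarrow P\wedge A^\bot=P\ominus A$) with $A=P\wedge R^\bot$ together with the de Morgan law $(P\wedge R^\bot)^\bot=P^\bot\vee R$.

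With these in hand most clauses are short. For $\sskip$, $\abort$, $\bar{q}:=0$ and $\bar{q}\apply U$ the denotational set is a singleton and the formulas follow by a direct support computation (e.g. $\mathit{Set}^{|0\>}_{\bar{q}}(R)=|0\>_{\bar{q}}\<0|\otimes\tr_{\bar{q}}(R)$ gives clause (2), and $\supp{URU^\dagger}=URU^\dagger$ as projectors gives clause (3)); the clause $sp.S.0=0$ is immediate from $\supp{\e(0)}=0$. The assertion clause (4) is exactly fact (iii) applied to $\sem{\assert P[\bar{q}]}=\{\p_{\bar{q}}\}$. For probabilistic choice (6) I would use that for $0<p<1$ both weights are positive, so $\supp{(p\e_0+(1-p)\e_1)(R)}=\supp{\e_0(R)}\vee\supp{\e_1(R)}$, then factor the join over the product set $\sem{S_0}\times\sem{S_1}$. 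Sequencing (7) combines facts (i)--(ii): writing $T=sp.S_0.R=\bigvee_{\e_0}\supp{\e_0(R)}$ one gets $sp.S_1.T=\bigvee_{\e_1}\supp{\e_1(T)}=\bigvee_{\e_0,\e_1}\supp{\e_1(\supp{\e_0(R)})}=\bigvee_{\e_0,\e_1}\supp{\e_1(\e_0(R))}=sp.(S_0;S_1).R$. The conditional (8) then follows from $\sem{\pmstm}=\sem{S_1}\circ\p_{\bar{q}}+\sem{S_0}\circ\p^\bot_{\bar{q}}$: the $\e_1$-branch on $R$ has join-support $sp.S_1.(\supp{PRP})=sp.S_1.(P\doublecap R)$ and the $\e_0$-branch $sp.S_0.(P^\bot\doublecap R)$, combined by join as in (6).

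The first genuinely delicate case is the prescription (5). The subcases $R=0$ and ($R\le P$, $R\neq0$) are routine: in the latter every $\rho\models R$ also satisfies $\rho\models P$, so $\supp{\e(R)}\le Q$ for all admissible $\e$, giving $sp\le Q$, while $\rho\mapsto\tr(\rho)\,\sigma_Q$ with $\supp{\sigma_Q}=Q$ lies in $\sem{[P,Q]_{\bar{q}}}$ and realises $Q$. The work is the ``otherwise'' case $R\neq0$, $R\not\le P$, where I must show the join is all of $I$. Here I would fix any unit $|\psi_0\>\in R\setminus P$ and, for an arbitrary target unit $|\chi\>$, construct the explicit channel
\[
\e(\rho)=|\chi\>\<\tilde b|\rho|\tilde b\>\<\chi|+\tr\!\big((I-|\tilde b\>\<\tilde b|)\rho\big)\,\sigma_Q,\qquad |\tilde b\>=P^\bot|\psi_0\>\big/\big\|P^\bot|\psi_0\>\big\|,
\]
which is completely positive and trace-preserving. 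Since $|\tilde b\>\in P^\bot$ annihilates every $\rho$ supported in $P$, this $\e$ sends all $\rho\models P$ to $\tr(\rho)\sigma_Q\models Q$, so $\e\in\sem{[P,Q]_{\bar{q}}}$; but $\<\tilde b|\psi_0\>\neq0$ forces $|\chi\>\in\supp{\e(|\psi_0\>\<\psi_0|)}\le\supp{\e(R)}\le sp.[P,Q]_{\bar{q}}.R$. As $|\chi\>$ was arbitrary the join is $I$. I expect this construction to be one of the two main obstacles, since $\e$ must simultaneously respect the precondition-to-$Q$ obligation and push the $P^\bot$-component of $R$ to an arbitrary direction.

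The loop (9) is the second delicate case and where I would spend most care. Using that an element of $\sem{\pwstm}=\sum_{k\ge0}\p^\bot_{\bar{q}}\circ(\sem S\circ\p_{\bar{q}})^k$ makes an \emph{independent} choice of super-operators at each level $k$, and that $\supp{\sum_k\e^{(k)}(R)}=\bigvee_k\supp{\e^{(k)}(R)}$ for the convergent (trace-nonincreasing) series, the join over the whole set factorises as $sp.(\pwstm).R=\bigvee_{k\ge0}J_k$, where $J_k=\bigvee\{\supp{\p^\bot\e_k\p\cdots\e_1\p(R)}:\e_i\in\sem S\}$. Peeling the composition from the inside and applying facts (ii)--(iii) and the sequencing identity repeatedly gives $J_k=P^\bot\doublecap H_k$ with $H_0=R$ and $H_{k+1}=sp.S.(P\doublecap H_k)$. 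It then remains to match this with the stated recursion: I would show by induction that $R_{n+1}=\bigvee_{k=0}^{n}H_k$, using that both $P\doublecap(\cdot)$ (a support-of-CP-map by (ii)--(iii)) and $sp.S.(\cdot)$ (by the sequencing argument) distribute over finite joins, so that $sp.S.(P\doublecap R_n)=\bigvee_{k=0}^{n-1}sp.S.(P\doublecap H_k)=\bigvee_{k=1}^{n}H_k$ and hence $R_{n+1}=R\vee\bigvee_{k=1}^{n}H_k=\bigvee_{k=0}^{n}H_k$. Finally $\bigvee_n(P^\bot\doublecap R_n)=\bigvee_{k\ge0}(P^\bot\doublecap H_k)=\bigvee_kJ_k=sp.(\pwstm).R$. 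The subtle points, and the principal difficulty, are the interchange of the two infinite joins (justified by independence of the per-level choices) and the distributivity lemmas for $P\doublecap(\cdot)$ and $sp.S.(\cdot)$.
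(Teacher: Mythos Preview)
Your proposal is correct and follows essentially the same route as the paper: the same auxiliary support/join facts, the same Sasaki identity $\supp{PRP}=P\doublecap R$, and for the loop the same key distributivity lemma that $sp.S.(P\doublecap\cdot)$ preserves finite joins. Two minor differences are worth noting.

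For clause (5), your per-target construction is more elaborate than necessary. The paper dispatches the ``otherwise'' case with a \emph{single} super-operator $\e\define Set_\rho\circ\p^\bot_{\bar{q}}$ where $\supp{\rho}=I$: since $R\not\le P$ forces $P^\bot R P^\bot\neq 0$, one gets $\supp{\e(R)}=I$ in one stroke, avoiding the parametrised family $\e_\chi$ indexed by target states $|\chi\>$. Both arguments are valid; the paper's is shorter.

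For clause (9), your decomposition into per-level contributions $J_k=P^\bot\doublecap H_k$ with $H_{k+1}=sp.S.(P\doublecap H_k)$ and the subsequent identification $R_{n+1}=\bigvee_{k=0}^n H_k$ is a mild reorganisation of the paper's argument. The paper instead works with the finite unrollings $S_n$ (so $sp.(\pwstm).R=\bigvee_n sp.S_n.R$), introduces $W_{n+1}(X)=X\vee f(W_n(X))$ with $f(X)=sp.S.(P\doublecap X)$, proves $sp.S_n.T=P^\bot\doublecap W_n(T)$ by induction, and observes $R_n=W_n(R)$. Your $H_k$ is exactly the paper's $f^k(R)$ and your $R_{n+1}=\bigvee_{k\le n}H_k$ is the paper's closed form $W_{n+1}(R)=R\vee f(R)\vee\cdots\vee f^n(R)$; the content is identical.
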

		
		The strongest postcondition semantics is relatively easier to comprehend compared to the weakest precondition semantics, as it aligns with the forward manner of program execution. Notably, the Sasaki conjunction assumes a significant role in determining the strongest postconditions, akin to the role played by the Sasaki implication in the weakest preconditions. On the other hand, thanks to the Sasaki conjunction, the strongest postconditions for quantum programs bear a striking resemblance to those for classical programs, with the ordinary conjunctions being replaced by the Sasaki conjunctions.

		From Lemmas~\ref{lem:wlp} and~\ref{lem:spost}, or more abstractly, from Eqs.~\eqref{eq:wlpabstract} and~\eqref{eq:spcabstract}, it is easy to observe that both $wlp.S$ and $sp.S$ are Scott-continuous. In particular, they are monotonic with respect to $\le$.

	\section{Refinement calculus for partial correctness}
        \label{sec:calculus}
		
		In this section, we turn to construct a refinement calculus to facilitate stepwise development of quantum programs. The soundness of the refinement rules relies on the pre/post-condition semantics developed in the previous section.

		For any quantum programs $S$ and $S'$, we say that $S$ is refined by $S'$ in partial correctness, denoted $S\le S'$, if for any projectors $P$ and $Q$, $\models \ass{P}{S}{Q}$ implies $\models \ass{P}{S'}{Q}$.
		Intuitively, this means that $S'$ satisfies all the correctness specifications that $S$ satisfies. In simpler terms, $S'$ maintains the correctness of $S$. Notably, if $S = [P,Q]_{\bar{q}}$, then the relation $S\le S'$ implies that $S'$, which only operates on qubits in $\bar{q}$, achieves the postcondition $Q$ if starting with precondition $P$. Let $\equiv$ be the equivalence relation induced by $\le$. 
		The following theorem establishes a close connection between refinement relation and the concepts of weakest liberal preconditions and strongest postconditions.
		\begin{theorem}\label{thm:refinewlpspc} For any quantum programs $S$ and $S'$, 
			\[
			S\le S' \qquad \mbox{iff} \qquad \forall Q,\ wlp.S.Q\le wlp.S'.Q\qquad \mbox{iff} \qquad \forall P,\ sp.S.P\ge sp.S'.P.
			\]
		\end{theorem}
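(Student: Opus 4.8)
The plan is to distill the whole statement into a two-sided ``adjunction'' characterization of partial correctness, after which both equivalences become short order chases. Concretely, I would first prove that for every program $S$ and all projectors $P,Q$,
\[
\models\ass{P}{S}{Q} \iff P\le wlp.S.Q \iff sp.S.P\le Q.
\]
The ``$\Rightarrow$'' half of each is immediate from the defining clauses: clause~(2) of Definition~\ref{def:wlp} says that every valid precondition of $Q$ lies below $wlp.S.Q$, and dually clause~(2) of the strongest-postcondition definition says $sp.S.P$ lies below every valid postcondition of $P$. Existence of $wlp.S.Q$ and $sp.S.P$ for all inputs is already guaranteed by Eqs.~\eqref{eq:wlpabstract} and~\eqref{eq:spcabstract}, so these objects are always available to instantiate.

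For the ``$\Leftarrow$'' halves I would record the two monotonicity facts that partial correctness satisfies. From Eq.~\eqref{eq:defcor} together with $\rho\models P \iff \supp{\rho}\le P$, one reads off that correctness is antitone in the precondition and monotone in the postcondition: if $P\le P'$ and $Q\le Q'$, then $\models\ass{P'}{S}{Q}$ implies $\models\ass{P}{S}{Q}$, and $\models\ass{P}{S}{Q}$ implies $\models\ass{P}{S}{Q'}$. Combining the first with clause~(1) of $wlp$, namely $\models\ass{wlp.S.Q}{S}{Q}$, gives $P\le wlp.S.Q \Rightarrow \models\ass{P}{S}{Q}$; combining the second with clause~(1) of $sp$, namely $\models\ass{P}{S}{sp.S.P}$, gives $sp.S.P\le Q \Rightarrow \models\ass{P}{S}{Q}$. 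This closes the displayed characterization.

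With the characterization secured, the first equivalence is a direct chase. For the forward direction, assuming $S\le S'$ and fixing $Q$, I instantiate at $P = wlp.S.Q$: since $\models\ass{wlp.S.Q}{S}{Q}$, refinement yields $\models\ass{wlp.S.Q}{S'}{Q}$, whence $wlp.S.Q\le wlp.S'.Q$. For the reverse direction, if $wlp.S.Q\le wlp.S'.Q$ holds for all $Q$ and $\models\ass{P}{S}{Q}$, then $P\le wlp.S.Q\le wlp.S'.Q$, so $\models\ass{P}{S'}{Q}$, i.e.\ $S\le S'$. The second equivalence is the exact dual using the $sp$ half: instantiating at $Q = sp.S.P$ gives the forward direction ($sp.S'.P\le sp.S.P$, i.e.\ $sp.S.P\ge sp.S'.P$), and the chain $sp.S'.P\le sp.S.P\le Q$ gives the reverse.

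The argument contains no genuinely hard step; it is a Galois-connection-style manipulation once the displayed characterization is in place. The only point demanding care is the direction of the monotonicity used in the ``$\Leftarrow$'' halves: partial correctness weakens as the precondition subspace grows but strengthens as the postcondition subspace grows, and reversing either of these would invalidate the corresponding converse implication. I therefore expect that verifying these two monotonicities from Eq.~\eqref{eq:defcor} is the main (if modest) obstacle, with everything downstream following formally.
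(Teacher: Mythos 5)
Your proposal is correct and follows essentially the same route as the paper: the paper's proof reduces everything to the adjunction-style characterization $\models\ass{P}{S}{Q}$ iff $P\le wlp.S.Q$ (and its dual for $sp$) and declares the rest straightforward, which is exactly the order chase you carry out in detail. The monotonicity observations you flag as the one point of care are indeed the content the paper leaves implicit, and you have verified them correctly from Eq.~\eqref{eq:defcor}.
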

		
		To facilitate refinement reasoning, we first present a set of structural rules in Fig.~\ref{fig:pstructurerules} demonstrating that the refinement relation forms a preorder and is preserved by all program constructs (thus, the induced equivalence $\equiv$ also maintains this property). More rules will be presented in subsequent sections.

		{
			\renewcommand{\arraystretch}{1.3}
			\begin{figure}
				\footnotesize
				\begin{flushleft}\fbox{\textsf{Structural rules}}\end{flushleft}
				\vspace{1em}
				\begin{tabular}{llll}
					\textsc{(Stru-Ref)} & $S \equiv S$
					\qquad &
					\textsc{(Stru-Tran)} & \begin{tabular}{l}$
						\inferrule* [rightstyle={\footnotesize \sc}]{
							S_0 \le S_1
							\qquad 
							S_1\le S_2
						}{
							S_0 \le S_2
						}
						$	\end{tabular}\vspace{0.5em}
					\\ 
					\textsc{(Stru-Seq)} & \begin{tabular}{l}$
						\inferrule* [rightstyle={\footnotesize \sc}]{
							S_0 \le T_0\qquad S_1\le T_1
						}{
							S_0;S_1 \le T_0;T_1
						}
						$	\end{tabular}							
					\qquad &
					\textsc{(Stru-Cond)} & \begin{tabular}{l}$
						\inferrule* [rightstyle={\footnotesize \sc}]{
							S_0 \le T_0\qquad S_1\le T_1
						}{
							\pmstm \le \con{P[\bar{q}]}{T_1}{T_0}
						}
						$	\end{tabular}
					\vspace{0.5em}
					\\ 
					\textsc{(Stru-Pcho)} & \begin{tabular}{l}$
						\inferrule* [rightstyle={\footnotesize \sc}]{
							S_0 \le T_0\qquad S_1\le T_1
						}{
							S_0\pcom{p} S_1\le T_0\pcom{p} T_1
						}
						$	\end{tabular}
					\qquad &
					\textsc{(Stru-While)} & \begin{tabular}{l}$
						\inferrule* [rightstyle={\footnotesize \sc}]{
							S \le T
						}{
							\whilestm{P[\bar{q}]}{S} \le \whilestm{P[\bar{q}]}{T}
						}
						$	\end{tabular}
				\end{tabular}
				\caption{Structure rules for the refinement relation}
				\label{fig:pstructurerules}
			\end{figure}
		}

		\subsection{Refinement rules for prescriptions}
		
		This section is devoted to refinement rules where prescriptions are refined by different program constructs in which the subprograms may also include prescriptions. We first present a general condition for a program to refine a prescription, which can be proven directly from Theorem~\ref{thm:refinewlpspc}.
		
		\begin{theorem}\label{thm:generalref} 
			For any projectors $P, Q$ and quantum program $S$,
			\[
			[P,Q]\le S\quad \mbox{iff}\quad  P \le wlp.S.Q \quad \mbox{iff}\quad sp.S.P \le Q.
			\]		
		\end{theorem}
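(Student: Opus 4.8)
The plan is to derive both equivalences from Theorem~\ref{thm:refinewlpspc}, applied with the prescription $[P,Q]$ in the role of the refined program and $S$ in the role of the refining one. This immediately rewrites $[P,Q]\le S$ as the two universally quantified statements $\forall R,\ wlp.[P,Q].R\le wlp.S.R$ and $\forall R,\ sp.[P,Q].R\ge sp.S.R$. The remaining work is to collapse each of these infinite families of inequalities into a single representative inequality, using the explicit prescription formulas from Lemma~\ref{lem:wlp}(5) and Lemma~\ref{lem:spost}(5) together with the monotonicity of $wlp.S$ and $sp.S$ recorded after Lemma~\ref{lem:spost}.

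For the weakest-liberal-precondition side, I would substitute the three-case formula for $wlp.[P,Q].R$. In the ``otherwise'' case the left-hand side is $0$, so $0\le wlp.S.R$ holds trivially, and when $R=I$ the inequality reads $I\le wlp.S.I$, which holds since $wlp.S.I=I$ by Lemma~\ref{lem:wlp}(1); hence the only genuine constraints arise for $Q\le R\neq I$, where the family reduces to $P\le wlp.S.R$. I then argue this family is equivalent to the single inequality $P\le wlp.S.Q$: if $P\le wlp.S.Q$, monotonicity gives $P\le wlp.S.Q\le wlp.S.R$ for every $R\ge Q$, while conversely instantiating $R=Q$ (legitimate whenever $Q\neq I$) recovers $P\le wlp.S.Q$. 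This yields $[P,Q]\le S\iff P\le wlp.S.Q$.

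The strongest-postcondition side is entirely dual. Substituting the three-case formula for $sp.[P,Q].R$, the constraint is trivial in the ``otherwise'' case (left-hand side $I$) and when $R=0$ (left-hand side $0$, matching $sp.S.0=0$ from Lemma~\ref{lem:spost}(1)), so the only genuine constraints occur for $0\neq R\le P$, where they read $sp.S.R\le Q$. Monotonicity of $sp.S$ then makes this family equivalent to $sp.S.P\le Q$: one direction propagates $sp.S.R\le sp.S.P\le Q$ for all $R\le P$, and the other instantiates $R=P$. Combining the two sides gives the full chain of equivalences. Everything beyond the case split is a direct reading of the prescription formulas, so the one point demanding care will be the boundary cases $Q=I$ and $P=0$: there the representative choice $R=Q$ (resp.\ $R=P$) falls outside the nontrivial range, and one checks directly that both sides hold vacuously --- indeed $[P,Q]\le S$ then holds for every $S$, consistently with $P\le wlp.S.I=I$ and $sp.S.0=0\le Q$.
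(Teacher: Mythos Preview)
Your proposal is correct and follows exactly the approach the paper indicates: the paper simply notes that Theorem~\ref{thm:generalref} ``can be proven directly from Theorem~\ref{thm:refinewlpspc}'', and your argument spells out precisely that derivation, using the prescription clauses of Lemmas~\ref{lem:wlp}(5) and~\ref{lem:spost}(5) together with monotonicity to collapse the universally quantified families to single representative inequalities. The handling of the boundary cases $Q=I$ and $P=0$ is also correct.
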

		
		Given that we consider both pre- and postcondition semantics in this paper, there are naturally two types of refinement rules. One is based on weakest liberal preconditions, while the other is based on strongest postconditions. These rules are presented in Fig.~\ref{fig:prulespres}. We say $\vdash_{\textsc{Wpc}}
		[P,Q]_{\bar{q}}\le S$ if the refinement relation $[P,Q]_{\bar{q}}\le S$ can be derived from the structural rules in Fig.~\ref{fig:pstructurerules}, the common rules, and the rules based on the weakest liberal preconditions in Fig.~\ref{fig:prulespres}.
		Similarly, let $\vdash_{\textsc{Spc}}
		[P,Q]_{\bar{q}}\le S$ if the refinement relation $[P,Q]_{\bar{q}}\le S$ can be derived from the structural rules in Fig.~\ref{fig:pstructurerules}, the common rules, and the rules based on the strongest postconditions in Fig.~\ref{fig:prulespres}. To help understand how to facilitate the refinement rules, let us examine some examples.
		
		{
			\renewcommand{\arraystretch}{1.2}
			\begin{figure}
				\footnotesize
				\begin{flushleft}\fbox{\textsf{Common rules}}\end{flushleft}
				\smallskip
				\begin{tabular}{llcl}
					\textsc{(Comm-Pres-All)} & $[0,Q]_{\bar{q}}$ & \le & $S$ \qquad\qquad\qquad $\ \ [P,I]_{\bar{q}}\quad  \le \quad S$
					\\
					\textsc{(Comm-Pres-Abort)} & $[P,Q]_{\bar{q}}$ & \le & $ \abort	$
					\\     
					\textsc{(Comm-Pres-Skip)} & $[P,P]_{\bar{q}}$ & \le & $ \sskip$
					\\
					\textsc{(Comm-Pres-Cons)} & $[P,Q]_{\bar{q}}$ & \le & $ [R,T]_{\bar{q}}$\quad if $P\le R$ and $T\le Q$
					\\
					\textsc{(Comm-Pres-Seq)} & $[P,Q]_{\bar{q}}$ & \le & $ [P,R]_{\bar{q}};[R,Q]_{\bar{q}}$
				\end{tabular}
				\vspace{1em}
				\begin{flushleft}\fbox{\textsf{Rules based on weakest liberal preconditions}}\end{flushleft}
				\vspace{1em}
				\begin{tabular}{llcl}
					\textsc{(Wpc-Pres-Init)} & $[E(\<0|_{\bar{q}} Q|0\>_{\bar{q}}),Q]_{\bar{q}}$ & \le & $\bar{q}:=0$
					\\
					\textsc{(Wpc-Pres-Unit)} & $[U_{\bar{q}}^\dag Q U_{\bar{q}},Q]_{\bar{q}}$& \le & $\bar{q}\apply U$
					\\
					\textsc{(Wpc-Pres-Ass)} & $[P\rightsquigarrow Q, Q]_{\bar{q}} $ & \le & $ \assert P[\bar{q}]$
					\\
					\textsc{(Wpc-Pres-Pcho)} &  $[P,Q]_{\bar{q}}$ & $\equiv$ & $ [P_1,Q]_{\bar{q}}\pcom{p}[P_2,Q]_{\bar{q}}$ if $0<p<1$ and $P= P_1\wedge P_2$
					\\
					\textsc{(Wpc-Pres-Cond)} &  $[(P\rightsquigarrow P_1) \wedge (P^\bot \rightsquigarrow P_0),Q]_{\bar{q}}$ & $\equiv$ & $\iif\ P[\bar{q}]\ \then\ [P_1, Q]_{\bar{q}}\ \eelse\ [P_0,Q]_{\bar{q}}\ \pend$
					\\
					\textsc{(Wpc-Pres-While)} & $[(P\rightsquigarrow Q) \wedge (P^\bot \rightsquigarrow R),R]_{\bar{q}}$ & \le & $ \while\ P[\bar{q}]\ \ddo\ [Q, (P\rightsquigarrow Q) \wedge (P^\bot \rightsquigarrow R)]_{\bar{q}}\ \pend$
				\end{tabular}
				\vspace{1em}
				\begin{flushleft}\fbox{\textsf{Rules based on strongest postconditions}}\end{flushleft}
				\vspace{1em}
				\begin{tabular}{llcl}
					\textsc{(Spc-Pres-Init)} & $[P,|0\>_{\bar{q}}\<0|\otimes \supp{\tr_{\bar{q}}(P)}]_{\bar{q}}$ & \le & $  \bar{q}:=0$
					\\ 
					\textsc{(Spc-Pres-Unit)} & $[P, U_{\bar{q}} P U^\dag]_{\bar{q}}$ & \le & $  \bar{q}\apply U$
					\\
					\textsc{(Spc-Pres-Ass)} & $[P, Q\doublecap P]_{\bar{q}} $ & \le & $  \assert Q[\bar{q}]$
					\\
					\textsc{(Spc-Pres-Pcho)} &  $[P,Q]_{\bar{q}}$ & $\equiv$ & $ [P,Q_1]_{\bar{q}}\pcom{p}[P,Q_2]_{\bar{q}}$ if $0<p<1$ and $Q_1\vee Q_2= Q$
					\\
					\textsc{(Spc-Pres-Cond)} &  $[P,Q]_{\bar{q}}$
					& $\le$ & $  \iif\ R[\bar{q}]\ \then\ [R\doublecap P, Q]_{\bar{q}}\ \eelse\ [R^\bot\doublecap P, Q]_{\bar{q}}\ \pend$
					\\
					\textsc{(Spc-Pres-While)} & $[Inv, P^\bot \doublecap Inv]_{\bar{q}}$ & \le & $  \while\ P[\bar{q}]\ \ddo\ [P\doublecap Inv, Inv]_{\bar{q}}\ \pend$
				\end{tabular}
				\smallskip
				\caption{Refinement rules for prescriptions. }
				\label{fig:prulespres}
			\end{figure}
		}

		\begin{example}\label{exam:set1}
			Suppose that our goal is to initialize a qubit $q$ in $|1\>$; that is, our specification is $[I, |1\>_q\<1|]_q$. We have at least three different approaches to achieve this, depending on which program construct we choose to use first.
			\begin{enumerate}
				\item \emph{Sequential composition}. 				
				We can utilize rule (\textsc{Comm-Pres-Seq}) to break down the specification into the sequential composition of two sub-ones. For example, we derive from this rule that
				$
				[I, |1\>_q\<1|]_q \le [I, |0\>_q\<0|]_q; [|0\>_q\<0|, |1\>_q\<1|]_q.
				$
				For the first part, (\textsc{Wpc-Pres-Init}) or (\textsc{Spc-Pres-Init}) tells us 
				$[I, |0\>_q\<0|]_q \le q:=0$. For the second part, we use either (\textsc{Wpc-Pres-Unit}) or (\textsc{Spc-Pres-Unit}) to derive
				$[|0\>_q\<0|, |1\>_q\<1|]_q \le q \apply X$. Finally, we have
				\[
				[I, |1\>_q\<1|]_q \le q:=0; q\apply X
				\]
				as desired, employing \textsc{(Stru-Seq)}.
				
				\item \emph{Conditional branching}. Note that 
				$
				\left(P^0 \rightsquigarrow |0\>\<0|\right) \wedge  \left(P^{\neq 0} \rightsquigarrow |1\>\<1|\right)= I
				$ where $P^0 \define |0\>\<0|$ and $P^{\neq 0} \define |1\>\<1|$.
				Then we have 
				$$
				[I, |1\>_q\<1|]_q \le \iif\ P^0[q]\ \then\ [|0\>_q\<0|, |1\>_q\<1|]_q\ \eelse\ [|1\>_q\<1|, |1\>_q\<1|]_q\ \pend
				$$
				from (\textsc{Wpc-Pres-Cond}), and so 
				\[
				[I, |1\>_q\<1|]_q \le \iif\ P^0[q]\ \then\ q\apply X\ \eelse\ \sskip\ \pend
				\]
				from (\textsc{Wpc-Pres-Unit}), (\textsc{Comm-Pres-Skip}), and  (\textsc{Stru-Cond}).
				
				Similarly, we can obtain the same implementation using (\textsc{Spc-Pres-Cond}), (\textsc{Spc-Pres-Unit}),  (\textsc{Comm-Pres-Skip}), and (\textsc{Stru-Cond}).
				
				\item \emph{While loop}. Note that
				$
				\left(P^0 \rightsquigarrow I\right) \wedge  \left(P^{\neq 0} \rightsquigarrow |1\>\<1|\right) = I$. Then we have 
				$
				[I, |1\>_q\<1|]_q \le \while\ \allowbreak P^0[q]\ \ddo\ [I, I]_q\ \pend$
				from (\textsc{Wpc-Pres-While}), and so 
				\[
				[I, |1\>_q\<1|]_q \le \while\ P^0[q]\ \ddo\ q\apply H\ \pend
				\]
				from (\textsc{Wpc-Pres-Unit}) and (\textsc{Stru-While}). Note that we can refine $[I, I]_q$ with $q\apply U$ for any single-qubit unitary $U$, or even any single-qubit quantum program $S$. This is due to the fact that only \emph{partial} correctness is considered here, and termination is not our concern.
				
				Again, we can derive the same implementation using (\textsc{Spc-Pres-While}), (\textsc{Spc-Pres-Unit}), and (\textsc{Stru-While}).
			\end{enumerate}
		\end{example}
		
		\begin{theorem}\label{thm:psoundpre}
			All the proof rules presented in Figs.~\ref{fig:pstructurerules} and~\ref{fig:prulespres} are sound; that is,
			\[
			\mbox{If}\quad \vdash_{\textsc{Wpc}}
			[P,Q]_{\bar{q}}\le S\quad \mbox{or}\quad \vdash_{\textsc{Spc}}
			[P,Q]_{\bar{q}}\le S\quad \mbox{then}\quad \models \ass{P}{S}{Q}. 
			\]
		\end{theorem}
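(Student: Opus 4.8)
The plan is to prove a single uniform statement that immediately yields the theorem: \emph{every refinement rule in Figs.~\ref{fig:pstructurerules} and~\ref{fig:prulespres} is sound with respect to the semantic refinement relation $\le$}, meaning that whenever the premises hold semantically, so does the conclusion. Granting this, an induction on the length of a derivation of $\vdash_{\textsc{Wpc}} [P,Q]_{\bar{q}}\le S$ (respectively $\vdash_{\textsc{Spc}}$) produces the genuine refinement $[P,Q]_{\bar{q}}\le S$. Since $\models\ass{P}{[P,Q]_{\bar{q}}}{Q}$ holds directly from the denotational semantics of the prescription (every $\e\in\sem{[P,Q]_{\bar{q}}}$ sends states satisfying $P$ to states satisfying $Q$), the definition of $\le$ then forces $\models\ass{P}{S}{Q}$. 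Thus the whole burden reduces to a rule-by-rule soundness check.

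For the axioms of the form $[P,Q]_{\bar{q}}\le S$ (the common rules together with the non-equivalence \textsc{Wpc}/\textsc{Spc} rules), I would discharge each using Theorem~\ref{thm:generalref}, which reduces the goal to either $P\le wlp.S.Q$ or $sp.S.P\le Q$, and then compute the transformer via Lemma~\ref{lem:wlp} or Lemma~\ref{lem:spost}. Most collapse to a reflexive inequality: e.g.\ (\textsc{Wpc-Pres-Unit}) needs $U_{\bar q}^\dag Q U_{\bar q}\le wlp.(\bar q\apply U).Q = U_{\bar q}^\dag Q U_{\bar q}$, (\textsc{Wpc-Pres-Ass}) needs $P\rightsquigarrow Q\le wlp.(\assert P).Q = P\rightsquigarrow Q$, and (\textsc{Spc-Pres-Init}) needs $sp.(\bar q:=0).P\le|0\>_{\bar q}\<0|\otimes\supp{\tr_{\bar q}(P)}$, an equality. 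The only care required is the case analysis built into Lemma~\ref{lem:wlp}(5)/\ref{lem:spost}(5): for (\textsc{Comm-Pres-Cons}) and (\textsc{Comm-Pres-Seq}) I must split on whether the relevant postcondition equals $I$ (resp.\ precondition equals $0$), but in each branch the inequality is immediate. For the equivalence rules (\textsc{Wpc-Pres-Pcho}), (\textsc{Wpc-Pres-Cond}), (\textsc{Spc-Pres-Pcho}) I would instead invoke Theorem~\ref{thm:refinewlpspc} and verify the transformer identity for \emph{all} $R$ by a three-way split ($R=I$; $Q\le R\ne I$; $Q\not\le R$), using $P=P_1\wedge P_2$ or the Sasaki-implication bookkeeping to match the two sides.

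For the structural rules (\textsc{Stru-Seq}), (\textsc{Stru-Cond}), (\textsc{Stru-Pcho}), (\textsc{Stru-While}) the argument is compositional: assuming the premises as semantic refinements, I would translate them via Theorem~\ref{thm:refinewlpspc} into pointwise inequalities between weakest liberal preconditions, feed these into the inductive $wlp$ formula for the relevant construct from Lemma~\ref{lem:wlp}, and conclude by monotonicity of every $wlp.S_0$ transformer (observed just before this section). Transitivity (\textsc{Stru-Tran}) and reflexivity (\textsc{Stru-Ref}) are inherited from $\le$ being a preorder.

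The genuine obstacle is the pair of loop rules (\textsc{Wpc-Pres-While}) and (\textsc{Spc-Pres-While}), where the transformer is an infinite meet/join of approximants. For the \textsc{Wpc} rule I would set $Inv=(P\rightsquigarrow Q)\wedge(P^\bot\rightsquigarrow R)$ and prove by induction that the sequence $R_n$ of Lemma~\ref{lem:wlp}(9) satisfies $Inv\le R_n$ for every $n$: the base case is $R_0=I$, and in the step I use $Inv\le R_n$ together with the prescription clause of Lemma~\ref{lem:wlp}(5) to obtain $wlp.[Q,Inv].R_n\ge Q$ (distinguishing $R_n=I$ from $R_n\ne I$), whence $R_{n+1}\ge(P\rightsquigarrow Q)\wedge(P^\bot\rightsquigarrow R)=Inv$; taking the meet over $n$ gives $Inv\le wlp.(\while\ P\ \ddo\ [Q,Inv])$. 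The \textsc{Spc} rule is dual: I would show $R_n\le Inv$ for the join approximants of Lemma~\ref{lem:spost}(9), the key extra ingredient being monotonicity of the Sasaki conjunction $A\doublecap(\cdot)=A\wedge(A^\bot\vee(\cdot))$ in its second argument, so that $P\doublecap R_n\le P\doublecap Inv$ feeds the prescription clause of Lemma~\ref{lem:spost}(5) and the final $P^\bot\doublecap(\cdot)$ cap stays below $P^\bot\doublecap Inv$. Isolating and proving these two monotonicity/invariant facts is where most of the real work lies; everything else is routine case-checking.
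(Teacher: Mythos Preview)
Your proposal is correct and follows essentially the same approach as the paper's proof: both reduce the theorem to a rule-by-rule soundness check via Theorems~\ref{thm:generalref} and~\ref{thm:refinewlpspc} together with the transformer computations of Lemmas~\ref{lem:wlp} and~\ref{lem:spost}, and both handle the while rules by the same induction on the approximants $R_n$ (showing $Inv\le R_n$ for \textsc{Wpc} and $R_n\le Inv$ for \textsc{Spc}, with the prescription case-split and Sasaki monotonicity as the key ingredients). The paper is terser---it only writes out the conditional and while cases explicitly---but the underlying argument is the one you outline.
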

		
		Conversely, both types of refinement rules are (relatively) complete, which means that if a Hoare triple $\ass{P}{S}{Q}$ with $\qv(S) = \bar{q}$ is semantically correct, then the program $S$, whether or not it is fully executable, can be obtained from the specification $[P,Q]_{\bar{q}}$ by applying rules from Fig.~\ref{fig:pstructurerules} and either type of refinement rules from Fig.~\ref{fig:prulespres}.
		\begin{theorem}\label{thm:pcompletepre}
			The proof rules presented in Figs.~\ref{fig:pstructurerules} and~\ref{fig:prulespres} are relatively complete; that is,
			\[
			\mbox{If}\quad \models \ass{P}{S}{Q} \quad \mbox{then}\quad \vdash_{\textsc{Wpc}}
			[P,Q]_{\bar{q}}\le S\quad \mbox{and}\quad \vdash_{\textsc{Spc}}
			[P,Q]_{\bar{q}}\le S. 
			\]
		\end{theorem}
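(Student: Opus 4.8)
The plan is to prove relative completeness by structural induction on the program $S$, carried out uniformly over the ambient space $\h_{\bar q}$ with $\bar q=\qv(S)$: every subprogram and every prescription occurring in a derivation is read on the full register $\bar q$ via cylindrical extension, so the induction hypothesis always speaks about the same variable set and no separate framing step is needed. The single semantic fact driving each case is that, by the definitions of $wlp$ and $sp$ together with Theorem~\ref{thm:generalref}, $\models\ass{P}{S}{Q}$ is equivalent to $P\le wlp.S.Q$ and to $sp.S.P\le Q$. The generic move in each case is the same: use the appropriate \textsc{Pres} rule of Fig.~\ref{fig:prulespres} (in its \textsc{Wpc}- or \textsc{Spc}-flavour) to rewrite $[P,Q]_{\bar q}$ as a structured prescription whose top constructor matches that of $S$, discharge the side inclusions with \textsc{(Comm-Pres-Cons)} (and, in degenerate cases $P=0$ or $Q=I$, with \textsc{(Comm-Pres-All)}), refine the inner prescriptions into the subprograms of $S$ by the induction hypothesis, and glue everything back with the structural rules of Fig.~\ref{fig:pstructurerules}. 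I would run two parallel derivations, one yielding $\vdash_{\textsc{Wpc}}$ and one yielding $\vdash_{\textsc{Spc}}$, sharing the common rules.

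\textbf{Base cases and the non-recursive compound cases.} For the primitives (initialisation, unitary, assertion, $\sskip$, $\abort$, and the prescription itself) I would instantiate the matching primitive rule and then weaken with \textsc{(Comm-Pres-Cons)}; for instance, from $\models\ass{P}{\bar q\apply U}{Q}$ we get $P\le U^\dag_{\bar q}QU_{\bar q}$, so $[P,Q]_{\bar q}\le[U^\dag_{\bar q}QU_{\bar q},Q]_{\bar q}\le\bar q\apply U$ via \textsc{(Wpc-Pres-Unit)} (dually \textsc{(Spc-Pres-Unit)}). The prescription base case uses the explicit form of $wlp.[P',Q']_{\bar q}.Q$ in Lemma~\ref{lem:wlp}(5): the subcases $Q=I$ and $P=0$ are handled by \textsc{(Comm-Pres-All)}, and the remaining subcase gives $P\le P'$ and $Q'\le Q$, whence \textsc{(Comm-Pres-Cons)}. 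For sequencing $S_0;S_1$ I would interpose the intermediate assertion $R:=wlp.S_1.Q$ in the \textsc{Wpc} derivation (resp. $R:=sp.S_0.P$ in the \textsc{Spc} one); the composition laws Lemma~\ref{lem:wlp}(7)/Lemma~\ref{lem:spost}(7) guarantee that both sub-triples $\ass{P}{S_0}{R}$ and $\ass{R}{S_1}{Q}$ are valid, so the induction hypothesis applies and \textsc{(Comm-Pres-Seq)} followed by \textsc{(Stru-Seq)} closes the case. For the conditional I would take the branch preconditions $wlp.S_i.Q$ and use the equivalence \textsc{(Wpc-Pres-Cond)} (the resulting outer precondition is exactly $wlp.(\iif\ G[\bar q]\ \then\ S_1\ \eelse\ S_0\ \pend).Q$, which dominates $P$), or dually project with the Sasaki conjunction and use \textsc{(Spc-Pres-Cond)} directly, which needs no outer weakening; in both cases \textsc{(Stru-Cond)} finishes. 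Probabilistic choice is easiest: since $wlp$ of $\pcom{p}$ is a meet and $sp$ is a join (Lemma~\ref{lem:wlp}(6)/Lemma~\ref{lem:spost}(6)), both sub-triples hold with the same $P$ and $Q$, and one applies \textsc{(Wpc-Pres-Pcho)} with $P_1=P_2=P$ (resp. \textsc{(Spc-Pres-Pcho)} with $Q_1=Q_2=Q$) and \textsc{(Stru-Pcho)}.

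\textbf{The loop case (the crux).} For $S=\while\ G[\bar q]\ \ddo\ S_b\ \pend$ the right invariants are the transformers of the whole loop. In the \textsc{Wpc} derivation take $W:=wlp.S.Q=\bigwedge_{n\ge0}R_n$ from Lemma~\ref{lem:wlp}(9); the core lemma to establish is the fixpoint identity $W=(G\rightsquigarrow wlp.S_b.W)\wedge(G^\bot\rightsquigarrow Q)$. Setting $Q':=wlp.S_b.W$, this identity lets \textsc{(Wpc-Pres-While)} yield $[W,Q]_{\bar q}\le\while\ G[\bar q]\ \ddo\ [Q',W]_{\bar q}\ \pend$; since $Q'=wlp.S_b.W$ makes $\ass{Q'}{S_b}{W}$ valid by definition of $wlp$, the induction hypothesis gives $[Q',W]_{\bar q}\le S_b$, and \textsc{(Stru-While)}, \textsc{(Comm-Pres-Cons)} (using $P\le W$) and \textsc{(Stru-Tran)} assemble $[P,Q]_{\bar q}\le S$. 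Dually, in the \textsc{Spc} derivation take $V:=\bigvee_{n\ge0}R_n$ with $R_0=0$ and $R_{n+1}=P\vee sp.S_b.(G\doublecap R_n)$ (Lemma~\ref{lem:spost}(9)); here one shows $P\le V$, the fixpoint $V=P\vee sp.S_b.(G\doublecap V)$ (so $\ass{G\doublecap V}{S_b}{V}$ holds), and the exit condition $G^\bot\doublecap V=sp.S.P\le Q$. Then \textsc{(Spc-Pres-While)}, the induction hypothesis on $S_b$, \textsc{(Stru-While)} and \textsc{(Comm-Pres-Cons)} finish symmetrically.

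\textbf{Main obstacle.} The only real work is the two fixpoint identities above. Each decomposes into a continuity step for the body transformer and a distributivity step for the Sasaki operator over the relevant chain: for \textsc{Wpc} one needs $\bigwedge_n wlp.S_b.R_n=wlp.S_b.W$ for the decreasing chain $R_n$, together with $\bigwedge_n\bigl(G\rightsquigarrow X_n\bigr)=G\rightsquigarrow\bigl(\bigwedge_n X_n\bigr)$ for the decreasing chain $X_n=wlp.S_b.R_n$; for \textsc{Spc} one needs the join-continuity of $sp.S_b$ (exactly the Scott-continuity recorded at the end of Section~\ref{sec:wlpsp}) together with $\bigvee_n\bigl(G\doublecap R_n\bigr)=G\doublecap\bigl(\bigvee_n R_n\bigr)$ for the increasing chain $R_n$. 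The \textsc{Spc} side matches the stated Scott-continuity directly, whereas the \textsc{Wpc} side additionally requires the dual meet-continuity of $wlp.S_b$, which follows from the abstract form~\eqref{eq:wlpabstract} (conjunctivity of $wlp$). The distributivity of the Sasaki implication over decreasing meets and of the Sasaki conjunction over increasing joins is a lattice fact about orthomodular lattices---valid because $G^\bot$ and the elements below $G$ commute---and establishing these commutations cleanly is where the bulk of the technical effort lies; the rest of the induction is bookkeeping with the consequence and structural rules.
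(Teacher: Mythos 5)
Your proposal is correct and follows essentially the same route as the paper's proof: structural induction reducing each case to the canonical prescriptions $[wlp.S.Q,Q]_{\bar q}$ and $[P,sp.S.P]_{\bar q}$ via \textsc{(Comm-Pres-Cons)}, with the while case resting on the fixpoint identities for $wlp$ and $sp$ obtained from Scott-continuity of the body transformers together with distributivity of the Sasaki operators over the approximation chains. The only differences are organizational—the paper strengthens the induction statement to the canonical triples up front rather than weakening locally in each case—and the Sasaki distributivity you single out as the main technical burden is already recorded as Lemma~\ref{lem:propsasaki}(3) for arbitrary families, so no separate commutation argument is needed.
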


		\subsection{Refinement rules for assertion statements}
		
		This section is dedicated to refinement rules where assertion statements are involved, motivated by~\cite{back1988calculus}. For simplicity, we write $\{P\}$ for $\assert P$, following the notation from~\cite{back1988calculus}.
		Similarly to the previous section, we first present a general theorem. 
		\begin{theorem}\label{thm:generalassert} 
			For any projectors $P, Q$ and quantum program $S$,
			\begin{align*}
				\{P\}; S \equiv \{P\}; S; \{Q\} & \quad \mbox{iff}\quad sp.S.P \le Q \ \qquad \mbox{iff}\quad P \le wlp.S.Q\qquad \mbox{iff}\quad \models \ass{P}{S}{Q}\\
				S; \{Q\} \equiv \{P\}; S; \{Q\}& \quad \mbox{iff}\quad sp.S.P^\bot \le Q^\bot  \quad \mbox{iff}\quad P^\bot \le wlp.S.Q^\bot\quad \mbox{iff}\quad \models \ass{P^\bot}{S}{Q^\bot}.
			\end{align*}
		\end{theorem}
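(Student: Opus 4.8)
The plan is to reduce each of the two displayed lines to the abstract characterizations of $wlp$ and $sp$ already available, using Theorem~\ref{thm:refinewlpspc} to convert the refinement-equivalences $\equiv$ into statements about predicate transformers, and then invoking two orthomodular-lattice identities to finish. First I would dispatch the three mutually equivalent conditions on the right of each line, which need no new work. By Lemma~\ref{lem:equivpcorrectness} together with \eqref{eq:spcabstract}, $\models\ass{P}{S}{Q}$ iff $\forall\e\in\sem{S},\ \supp{\e(P)}\le Q$ iff $sp.S.P\le Q$, since $sp.S.P=\bigvee\{\supp{\e(P)}:\e\in\sem{S}\}$ is the least upper bound; dually, by Lemma~\ref{lem:equivpcorrectness} with \eqref{eq:wlpabstract}, the same holds iff $P\le\bigwedge\{\mathcal{N}(\e^\dag(Q^\bot)):\e\in\sem{S}\}=wlp.S.Q$. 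Instantiating $P,Q$ by $P^\bot,Q^\bot$ gives the three-way equivalence in the second line. It therefore suffices to show $\{P\};S\equiv\{P\};S;\{Q\}$ iff $sp.S.P\le Q$, and $S;\{Q\}\equiv\{P\};S;\{Q\}$ iff $P^\bot\le wlp.S.Q^\bot$.

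For the first line I would work with the strongest-postcondition form of Theorem~\ref{thm:refinewlpspc}, under which $A\equiv B$ iff $sp.A.R=sp.B.R$ for every $R$. Using the sequencing and assertion clauses of Lemma~\ref{lem:spost} (namely $sp.(S_0;S_1).R=sp.S_1.(sp.S_0.R)$ and $sp.(\assert P).R=P\doublecap R$), I compute $sp.(\{P\};S).R=sp.S.(P\doublecap R)$ and $sp.(\{P\};S;\{Q\}).R=Q\doublecap sp.S.(P\doublecap R)$. Writing $X=sp.S.(P\doublecap R)$, the required equality $X=Q\doublecap X$ holds, by the orthomodular law, exactly when $X\le Q$: one direction is $Q\doublecap X=Q\wedge(Q^\bot\vee X)\le Q$, and $X\le Q$ yields $X=Q\wedge(Q^\bot\vee X)$ by orthomodularity. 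It then remains to show that $\forall R,\ sp.S.(P\doublecap R)\le Q$ is equivalent to $sp.S.P\le Q$; the forward direction follows by taking $R=I$ since $P\doublecap I=P$, and the converse from $P\doublecap R\le P$ together with the monotonicity of $sp.S$.

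For the second line I would dually use the weakest-liberal-precondition form of Theorem~\ref{thm:refinewlpspc}, so $A\equiv B$ iff $wlp.A.R=wlp.B.R$ for all $R$. By the sequencing and assertion clauses of Lemma~\ref{lem:wlp} (with $wlp.(\assert Q).R=Q\rightsquigarrow R$) I obtain $wlp.(S;\{Q\}).R=wlp.S.(Q\rightsquigarrow R)$ and $wlp.(\{P\};S;\{Q\}).R=P\rightsquigarrow wlp.S.(Q\rightsquigarrow R)$. Setting $Y=wlp.S.(Q\rightsquigarrow R)$, the equality $Y=P\rightsquigarrow Y$ holds, again by the orthomodular law, exactly when $P^\bot\le Y$, since $P\rightsquigarrow Y=P^\bot\vee(P\wedge Y)\ge P^\bot$ gives one direction and $P^\bot\le Y$ gives $Y=P^\bot\vee(P\wedge Y)$ for the other. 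Finally, $\forall R,\ P^\bot\le wlp.S.(Q\rightsquigarrow R)$ is equivalent to $P^\bot\le wlp.S.Q^\bot$: take $R=0$ and use $Q\rightsquigarrow 0=Q^\bot$ for the forward direction, and $Q\rightsquigarrow R\ge Q^\bot$ with monotonicity of $wlp.S$ for the converse.

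The substitutions and monotonicity steps are routine; the only real content, and the part I would flag as the crux, is the pair of identities $X=Q\doublecap X\Leftrightarrow X\le Q$ and $Y=P\rightsquigarrow Y\Leftrightarrow P^\bot\le Y$. Both are instances of the orthomodular law of the subspace lattice (for $a\le b$, $b=a\vee(a^\bot\wedge b)$), applied once directly and once after passing to orthocomplements. It is precisely here that the favorable lattice-theoretic properties of projector assertions are exploited, and this is also what makes the Sasaki operators surface symmetrically — $\doublecap$ on the strongest-postcondition side and $\rightsquigarrow$ on the weakest-liberal-precondition side.
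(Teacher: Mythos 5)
Your proposal is correct and follows essentially the same route as the paper's proof: reduce the right-hand equivalences to Lemma~\ref{lem:equivpcorrectness} and the abstract forms of $wlp$/$sp$, compute $sp.(\{P\};S;\{Q\}).R$ and $wlp.(\{P\};S;\{Q\}).R$ via the sequencing and assertion clauses, and finish with the lattice facts $Q\doublecap X=X\Leftrightarrow X\le Q$ and $P\rightsquigarrow Y=Y\Leftrightarrow P^\bot\le Y$ (which the paper records in Lemma~\ref{lem:propsasaki}(6)--(7)). Your explicit justification of the final quantifier elimination via $R=I$ (resp.\ $R=0$) and monotonicity only spells out a step the paper leaves implicit.
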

		
		A direct corollary of the above theorem is that $\{P\};  S \equiv \{P\}; S; \{sp.S.P\}$.
		However, the seemingly corresponding statement $S; \{Q\} \equiv \{wlp.S.Q\}; S; \{Q\}$ for $wlp$ is not true. To see this, let $Q\not\in \{0, I\}$ and $\sem{S} = \{Set_\psi\}$ where $Set_\psi$ sets the quantum system in a normalized state $|\psi\>$ neither in $Q$ nor in $Q^\bot$. Then $wlp.(S; \{Q\}).0 = wlp.S.Q^\bot =0$ while $$wlp.\left(\{wlp.S.Q\}; S; \{Q\}\right).0 = wlp.S.Q \rightsquigarrow wlp.S.Q^\bot = 0\rightsquigarrow 0 = I. $$
		In fact, from Theorem~\ref{thm:generalassert}, the correct equivalence reads as follows:
		$$S; \{Q\} \equiv \left\{\left(wlp.S.Q^\bot\right)^\bot\right\}; S; \{Q\}.$$
		
	\begin{figure}
			\footnotesize
			{
				\renewcommand{\arraystretch}{1.5}
				\begin{flushleft}\fbox{\textsf{Common rules}}\end{flushleft}
				\begin{tabular}{ll}
					\textsc{(Comm-Asst-True)} & $S\equiv \{I\}; S\equiv S; \{I\}$ \vspace{0.5em}\\
					\textsc{(Comm-Asst-Post)} & $\{P\}; S \equiv \{P\}; S; \{Q\}$\quad if \quad $sp.S.P \le Q$ \quad or\quad $P \le wlp.S.Q$\vspace{0.5em}\\					
					\textsc{(Comm-Asst-Pre)} & $S; \{Q\} \equiv \{P\}; S; \{Q\}$\quad if \quad $sp.S.P^\bot \le Q^\bot$\quad or\quad $P^\bot \le wlp.S.Q^\bot$ \vspace{0.5em}\\										
					\textsc{(Comm-Asst-Seq)} &  \begin{tabular}{l}$
						\inferrule* [rightstyle={\footnotesize \sc}]{
							\{P\};  S_1 \equiv \{P\}; S_1; \{Q\}\qquad \{Q\};  S_2 \equiv \{Q\}; S_2; \{R\}
						}{
							\{P\};  S_1 ; S_2  \equiv \{P\}; S_1; \{Q\}; S_2; \{R\} 
						}$	\end{tabular}
				\end{tabular}
				\vspace{1em}
				\begin{flushleft}\fbox{\textsf{Rules based on weakest liberal preconditions}}\end{flushleft}
				\vspace{1em}
				\begin{tabular}{ll}
					\textsc{(Wpc-Asst-Pres)} & $\{P\}; [Q,R] \equiv \{P\}; [P\rightsquigarrow Q, R]$ \vspace{0.5em}\\
										\textsc{(Wpc-Asst-Cond)} & {\scriptsize \begin{tabular}{l}$
												\inferrule* [rightstyle={\footnotesize \sc}]{
														S_1; \{R\} \equiv \{P\rightsquigarrow Q\}; S_1; \{R\}
														\qquad 
														S_0; \{R\}\equiv \{P^\bot\rightsquigarrow Q\}; S_0; \{R\}
													}{
														\iif\ P[\bar{q}]\ \then\ S_1\ \eelse\ S_0\ \pend; \{R\}
														\equiv \{Q\}; \iif\ P[\bar{q}]\ \then\ \{P\rightsquigarrow Q\}; S_1\ \eelse\ \{P^
														\bot\rightsquigarrow Q\}; S_0\ \pend; \{R\}
													}
												$	\end{tabular} }\vspace{0.5em}
										\\ 
										\textsc{(Wpc-Asst-While)} & { \begin{tabular}{l}$
												\inferrule* [rightstyle={\footnotesize \sc}]{
														S; \{Q\} \equiv \{P\rightsquigarrow Q\}; S; \{Q\}
													}{
														\pwstm; \{P^\bot \rightsquigarrow Q\}
														\equiv \{Q\}; \while\ P[\bar{q}]\ \ddo\ \{P\rightsquigarrow Q\}; S; \{Q\}\ \pend; \{P^\bot \rightsquigarrow Q\}
													}
												$ 	\end{tabular}}
				\end{tabular}
				\vspace{1em}
				\begin{flushleft}\fbox{\textsf{Rules based on strongest postconditions}}\end{flushleft}
				
				\vspace{1em}
				
				\begin{tabular}{ll}
					\textsc{(Spc-Asst-Post)} & $	\{P\}; \{Q\} \equiv \{P\}; \{Q\};\{Q\doublecap P\} 
					\equiv  \{P\doublecap Q\}; \{P\}; \{Q\}$\\
					\textsc{(Spc-Asst-Pres)} & $[Q,R]; \{P\} \equiv [Q, P\doublecap R]; \{P\}$\vspace{0.5em}\\
					\textsc{(Spc-Asst-Cond)}& \hspace{-1em} \begin{tabular}{l}$
						\inferrule* [rightstyle={\footnotesize \sc}]{
							\{P\doublecap Q\};  S_1 \equiv \{P\doublecap Q\}; S_1; \{R\}
							\qquad 
							\{P^\bot\doublecap Q\};  S_0 \equiv \{P^\bot\doublecap Q\}; S_0; \{R\}
						}{
							\{Q\}; \iif\ P[\bar{q}]\ \then\ S_1\ \eelse\ S_0\ \pend
							\equiv \{Q\}; \iif\ P[\bar{q}]\ \then\ \{P\doublecap Q\}; S_1\ \eelse\ \{P^
							\bot\doublecap Q\}; S_0\ \pend; \{R\}
						}
						$	\end{tabular}\vspace{0.5em}
					\\ 
					\textsc{(Spc-Asst-While)} &  \begin{tabular}{l}$
						\inferrule* [rightstyle={\footnotesize \sc}]{
							\{P\doublecap Inv\};  S \equiv \{P\doublecap Inv\}; S; \{Inv\}
						}{
							\{Inv\}; \pwstm
							\equiv\  \{Inv\}; \while\ P[\bar{q}]\ \ddo\ \{P\doublecap Inv\}; S; \{Inv\}\ \pend; \{P^\bot\doublecap Inv\}
						}
						$ 	\end{tabular}
				\end{tabular}
			}
			\caption{Refinement rules for assertion statements}
			\label{fig:prulesass}
		\end{figure}
		As in the previous section, there are naturally two types of refinement rules presented in Fig.~\ref{fig:prulesass}: one based on the weakest liberal preconditions and the other on strongest postconditions. Note that all the rules in Fig.~\ref{fig:prulesass} are expressed in terms of the equivalence relation $\equiv$, meaning that appropriate assertions can be freely added or removed between subprograms as long as the rules are adhered to. This can be particularly useful in scenarios such as projection-based run-time testing and debugging of quantum programs~\cite{li2020projection}. Specifically, let us revisit the loop program $S\define \while\ P^0[q]\ \ddo\ q\apply H\ \pend$ in Example~\ref{exam:set1}. Note that $P^0\doublecap I = P^0$ and $\{P^0\}; q\apply H \equiv \{P^0\}; q\apply H; \{I\}$ from rule \textsc{(Comm-Asst-Post)}. Thus, 
		\begin{align*}
			S \equiv \{I\}; S
			&\equiv \{I\}; \while\ P^0[q]\ \ddo\ \{P^0\}; q\apply H; \{I\}\ \pend; \{|1\>_q\<1|\}\\
			&\equiv \{I\}; \while\ P^0[q]\ \ddo\ \{P^0\}; q\apply H\ \pend; \{|1\>_q\<1|\}
		\end{align*}
		where the first and third equivalences follow from \textsc{(Comm-Asst-True)} while the second from \textsc{(Spc-Asst-While)}.

		\begin{theorem}\label{thm:psoundass}
			All the proof rules presented in Fig.~\ref{fig:prulesass} are sound.
		\end{theorem}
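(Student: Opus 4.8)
The plan is to reduce each rule in Fig.~\ref{fig:prulesass} to an identity between predicate transformers and then discharge that identity with the explicit formulas of Lemmas~\ref{lem:wlp} and~\ref{lem:spost}. Concretely, applying Theorem~\ref{thm:refinewlpspc} in both directions and using antisymmetry of $\le$ gives $S\equiv S'$ if and only if $wlp.S.Q=wlp.S'.Q$ for every $Q$, equivalently $sp.S.P=sp.S'.P$ for every $P$. So for a rule in the weakest-precondition group I would compare the two $wlp$ transformers, for a rule in the strongest-postcondition group the two $sp$ transformers, and for a common rule whichever is more convenient. The hypotheses attached to the conditional and loop rules are themselves assertion-equivalences, and Theorem~\ref{thm:generalassert} rewrites them as the algebraic side conditions $sp.S.P\le Q$ or $P\le wlp.S.Q$ that the calculations consume.

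The common rules and the prescription rules need only short computations. \textsc{(Comm-Asst-True)} holds because $wlp.(\assert I).R = I\rightsquigarrow R = R$, so $\{I\}$ denotes $\{\id\}$ and pre- or post-composing with it does not alter the transformer; \textsc{(Comm-Asst-Post)} and \textsc{(Comm-Asst-Pre)} are verbatim the first two lines of Theorem~\ref{thm:generalassert}; and \textsc{(Comm-Asst-Seq)} is obtained by composing its two premises with \textsc{(Stru-Seq)} and chaining through \textsc{(Stru-Tran)}. For \textsc{(Wpc-Asst-Pres)} I would evaluate $wlp.(\{P\};[Q,R]).T = P\rightsquigarrow wlp.[Q,R].T$ and the analogous expression for $[P\rightsquigarrow Q,R]$, then match them by a case split on $T$ using the prescription clause of Lemma~\ref{lem:wlp} and the absorption $P\rightsquigarrow(P\rightsquigarrow X)=P\rightsquigarrow X$; \textsc{(Spc-Asst-Pres)} and \textsc{(Spc-Asst-Post)} are the dual computations, resting on the Sasaki-projection identity $a\doublecap x=x$ whenever $x\le a$ together with the monotonicity of $\doublecap$ in its second argument.

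For the conditional rules I would work out \textsc{(Spc-Asst-Cond)} in detail (the $wlp$ version being dual). Reading its premises through Theorem~\ref{thm:generalassert} yields $sp.S_1.(P\doublecap Q)\le R$ and $sp.S_0.(P^\bot\doublecap Q)\le R$. Both sides of the conclusion begin with $\{Q\}$, so on input $T$ the effective precondition is $Q\doublecap T$, and Lemma~\ref{lem:spost}(8) feeds $P\doublecap(Q\doublecap T)$ into $S_1$ and $P^\bot\doublecap(Q\doublecap T)$ into $S_0$. Since $Q\doublecap T\le Q$, monotonicity of $\doublecap$ gives $P\doublecap(Q\doublecap T)\le P\doublecap Q$, so the inserted head assertion $\{P\doublecap Q\}$ acts as the identity there by the Sasaki-projection identity; moreover $sp.S_1.(P\doublecap(Q\doublecap T))\le sp.S_1.(P\doublecap Q)\le R$, and symmetrically for the else-branch, so both joinands lie below $R$, whence their join does too and the trailing $\{R\}$ leaves it unchanged. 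The two transformers therefore coincide.

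The loop rules \textsc{(Wpc-Asst-While)} and \textsc{(Spc-Asst-While)} are the main obstacle, and I would carry out \textsc{(Spc-Asst-While)} in full. Theorem~\ref{thm:generalassert} turns its premise into $sp.S.(P\doublecap Inv)\le Inv$, i.e.\ $Inv$ is an $sp$-invariant of the body. The leading $\{Inv\}$ on both sides feeds the loop the precondition $R:=Inv\doublecap T$, which satisfies $R\le Inv$, so it suffices to compare the two loop transformers on $R$. With the plain loop's approximants $R_0=0$, $R_{n+1}=R\vee sp.S.(P\doublecap R_n)$ and $sp.(\pwstm).R=\bigvee_{n\geq 0}(P^\bot\doublecap R_n)$ from Lemma~\ref{lem:spost}(9), and the analogous $R_n'$ for the instrumented body $\{P\doublecap Inv\};S;\{Inv\}$, I would prove by induction on $n$ that $R_n'=R_n\le Inv$. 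In the step the invariant is consumed twice: the head assertion $\{P\doublecap Inv\}$ is the identity because the body receives $P\doublecap R_n\le P\doublecap Inv$, and the tail assertion $\{Inv\}$ is the identity because $sp.S.(P\doublecap R_n)\le sp.S.(P\doublecap Inv)\le Inv$ — both using monotonicity of $sp.S$ (recorded after Lemma~\ref{lem:spost}) and the Sasaki-projection identity. Taking the join over $n$ identifies the two loop postconditions, and since every $P^\bot\doublecap R_n\le P^\bot\doublecap Inv$ the trailing $\{P^\bot\doublecap Inv\}$ on the right is again the identity, which closes the equivalence. The genuinely delicate part is precisely this inductive bookkeeping, showing that no inserted assertion ever truncates an approximant $R_n$ — exactly what the invariant hypothesis guarantees.
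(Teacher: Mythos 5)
Your proposal is correct and follows essentially the same route as the paper's proof: reduce each rule to an identity of predicate transformers, discharge the premises via Theorem~\ref{thm:generalassert}, case-split on the postcondition for the prescription rules using the absorption laws $P\rightsquigarrow(P\rightsquigarrow X)=P\rightsquigarrow X$ and $P\doublecap(P\doublecap X)=P\doublecap X$, and induct on the loop approximants for the while rules. The only cosmetic difference is in \textsc{(Spc-Asst-While)}, where the paper inducts on program approximants and proves the twisted identity $(P^\bot\doublecap Inv)\doublecap sp.S_n'.(Inv\doublecap X)=sp.S_n.(Inv\doublecap X)$, whereas you prove $R_n'=R_n\le Inv$ directly on the predicate approximants and handle the trailing assertion separately --- the same invariant-preservation argument in slightly cleaner bookkeeping.
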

		
		To conclude this section, we remark that some refinement rules applicable to classical programs do not hold in the quantum setting. A notable example is as follows: in classical programming, if $p \rightarrow p'$, then $\{p\} \le \{p'\}$ where $p$ and $p'$ are classical predicates (first-order logic formulas)~\cite{back1988calculus}. This classical rule allows the elimination of assertions from a program; that is, $S[\{p\}] \le S$, since $p \rightarrow \top$ and $\{\top\} \equiv \sskip$. However, this rule is no longer valid for quantum programs. To illustrate this, note that 
		$
		wlp.\{|0\>_q\<0|\}.|+\>_q\<+|  = |1\>_q\<1|
		$ and
		$wlp.\{I\}.|+\>_q\<+|  = |+\>_q\<+|.
		$
		Since $|1\>_q\<1|\not \le |+\>_q\<+|$, we have $\{|0\>_q\<0|\}\not \le \{I\}$ from Theorem~\ref{thm:refinewlpspc}.

	\section{Case studies}
		\label{sec:case}
  
          To illustrate the effectiveness of the refinement rules proposed in this paper, we employ them to implement a $Z$-rotation gate, the repetition code, and the quantum-to-quantum Bernoulli factory.

	\subsection{Implementing a Z-rotation gate}\label{subsec:zrotation}
        
        We have shown in Examples~\ref{exam:pcircuit} and~\ref{exam:pcircuitcor} that the circuit in Fig.~\ref{fig-rz} (or the program $\mathit{pCircuit}$) either implements $R_z(\theta)$ (when both measurement outcomes are 0) or does nothing (otherwise). It is then evident that the following program
		\begin{align*}
			\mathit{Rz}\quad\define\quad & q_0, q_1:=0;\ q_0\apply X;\\
			& \while\ P^{\neq 00}[q_0,q_1]\ \ddo\\
			& \quad q_0, q_1:=0;\\
			& \quad \mathit{pCircuit}\\
			&\pend
		\end{align*}
	    using $\mathit{pCircuit}$ as a subprogram, will realize $R_z(\theta)$ whenever it terminates.
		
		In the following, we will see how the refinement calculus presented in this paper can be used to derive $\mathit{Rz}$ in a stepwise manner (so that the correctness is guaranteed). We start with the specification (prescription) $S_0 \define \left[\Omega_{t,t'}, \r_z(\theta)_t(\Omega_{t,t'})\right]$ and try to refine it until we get $\mathit{Rz}$. The process consists of several steps.
		
    For the first step, from rules \textsc{(Comm-Pres-Seq)}, \textsc{(Spc-Pres-Init)}, and \textsc{(Comm-Pres-Cons)} we have
    \begin{align*}
        S_0 &\le \left[\Omega_{t,t'}, P^{\neq 00}_{q_0,q_1}\otimes \Omega_{t,t'}\right]; \left[P^{\neq 00}_{q_0,q_1}\otimes \Omega_{t,t'}, \r_z(\theta)_t(\Omega_{t,t'})\right]\\
        & \le q_0, q_1:=0;\ q_0\apply X;\ \left[P^{\neq 00}_{q_0,q_1}\otimes \Omega_{t,t'}, \r_z(\theta)_t(\Omega_{t,t'})\right]
    \end{align*}
    since 
    $
    |0\>_{q_0,q_1}\<0|\otimes \supp{\tr_{q_0, q_1} \Omega_{t,t'}} = P^{00}_{q_0,q_1}\otimes \Omega_{t,t'}
    $
    while
    \[
    X_{q_0}\left(P^{00}_{q_0,q_1}\otimes \Omega_{t,t'}\right)X_{q_0}^\dag = |10\>_{q_0,q_1}\<10|\otimes \Omega_{t,t'} \le P^{\neq 00}_{q_0,q_1}\otimes \Omega_{t,t'}.
    \]
	
    We now introduce the while loop. For this purpose, we take
			\[
			Inv \define \left(P^{\neq 00}_{q_0,q_1}\otimes \Omega_{t,t'}\right) \vee \left(P^{00}_{q_0,q_1}\otimes \r_z(\theta)_t(\Omega_{t,t'})\right)
			\]
			as the loop invariant which captures the intuition that if the measurement outcomes are both 0, then $\r_z(\theta)$ has been successfully applied on $t$; otherwise, the state of $t$ remains unchanged. Then from rule \textsc{(Spc-Pres-While)},
			\[
			\left[Inv, P^{00}_{q_0,q_1}\doublecap Inv\right] \le \while\ P^{\neq 00}[q_0,q_1]\ \ddo\ \left[P^{\neq 00}_{q_0,q_1}\doublecap Inv, Inv\right]\ \pend.
			\]
			Note that $P^{\neq 00}_{q_0,q_1}\otimes \Omega_{t,t'} \le Inv$ and $P^{00}_{q_0,q_1}\doublecap Inv = P^{00}_{q_0,q_1}\otimes \r_z(\theta)_t(\Omega_{t,t'})\le \r_z(\theta)_t(\Omega_{t,t'}).$
			We have from \textsc{(Comm-Pres-Cons)} that
			\[
			\left[P^{\neq 00}_{q_0,q_1}\otimes \Omega_{t,t'}, \r_z(\theta)_t(\Omega_{t,t'})\right] \le  \while\ P^{\neq 00}[q_0,q_1]\ \ddo\ \left[P^{\neq 00}_{q_0,q_1}\doublecap Inv, Inv\right]\ \pend.
			\]
	
 Next, we refine the loop body $\left[P^{\neq 00}_{q_0,q_1}\doublecap Inv, Inv\right]$. This can be done as follows. 
			\begin{align*}
				\left[P^{\neq 00}_{q_0,q_1}\otimes \Omega_{t,t'}, Inv\right] &\le \left[P^{\neq 00}_{q_0,q_1}\otimes \Omega_{t,t'}, P^{00}_{q_0,q_1}\otimes \Omega_{t,t'}\right]; \left[P^{00}_{q_0,q_1}\otimes \Omega_{t,t'}, Inv\right] & \textsc{(Comm-Pres-Seq)}\\
				&\le q_0,q_1 :=0; \left[P^{00}_{q_0,q_1}\otimes \Omega_{t,t'}, Inv\right] & \textsc{(Spc-Pres-Init)}\\
				&\le q_0,q_1 :=0; \mathit{pCircuit} & \mbox{Eq.~\eqref{eq:corpcircuit} and Theorem~\ref{thm:generalref}}
			\end{align*}

    Putting the above three steps together, we derive $S_0\le \mathit{Rz}$ as desired using the structural rules in Fig.~\ref{fig:pstructurerules}.

        \subsection{Repetition code}
        \label{subsec: rep3}

  	Quantum repetition codes are one of the simplest classes of quantum error correction codes designed to protect quantum information from noise in quantum communication channels or during quantum computation. A typical three-qubit repetition code encodes one logical qubit state $\alpha|0\> + \beta|1\>$ into a three-physical-qubit state $\alpha|000\> + \beta|111\>$ (depicted in the left dashed box in Fig.~\ref{fig-repetition} for a circuit implementation of the encoding process).
		Subsequently, all the three qubits pass through a noisy quantum channel, which flips the qubit (that is, applies a Pauli-$X$ operator on it) nondeterministically.
		For simplicity, we assume that at most one of the three qubits is flipped, without knowing which one. Interestingly, by applying a properly designed error correction procedure, which can be expressed in our target language (refer to the right part of Fig.~\ref{fig-repetition}), the error can be detected and corrected perfectly.

  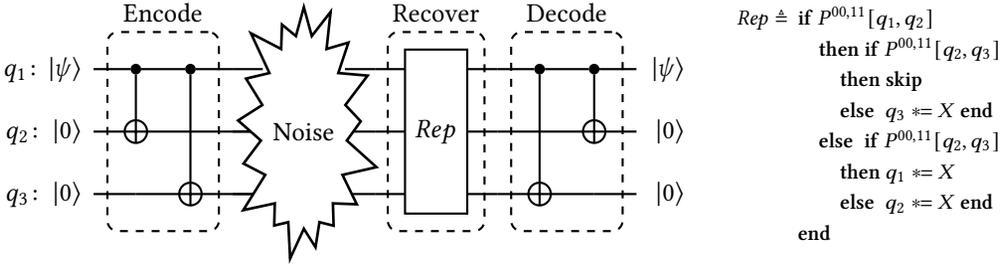
\begin{figure}
    \begin{minipage}[t]{0.7\textwidth}
    \centering
      \tikzset{
            my label/.append style={above right,xshift=0.3cm}
          }
          \begin{quantikz}[row sep=0.4cm,column sep=0.4cm]
            \lstick{$q_1\!:\,  |\psi\>$} & \ctrl{1}\gategroup[3,steps=2,style={dashed,rounded
            corners,inner sep=3pt}]{Encode} & 
            \ctrl{2} & \qw & \gate[3,style={starburst,line
                width=1pt,inner xsep=-4pt,inner ysep=-5pt},
              label style=black]{\text{Noise}} & \qw  & \gate[3,disable auto
              height]{Rep}\gategroup[3,steps=1,style={dashed,rounded
              corners,inner sep=3pt}]{Recover} & \qw & \ctrl{2}\gategroup[3,steps=2,style={dashed,rounded
              corners,inner sep=3pt}]{Decode} & \ctrl{1} & \qw & & \lstick{$|\psi\>$}\\
            \lstick{$q_2\!:\  |0\>$} & \targ{} & \qw     & \qw & \qw & \qw &  & \qw & \qw & \targ{} & \qw & & \lstick{$|0\>$} \\
            \lstick{$q_3\!:\  |0\>$} & \qw     & \targ{} & \qw & \qw & \qw &  & \qw & \targ{} & \qw & \qw & & \lstick{$|0\>$}
            \end{quantikz}   
          \end{minipage}
          \begin{minipage}[t]{0.28\textwidth}
            \centering
            \renewcommand{\arraystretch}{1.15}
            \footnotesize
            \begin{tabular}{rl}
              $\mathit{Rep}\define$ &\hspace{-0.3cm}  $\iif\ P^{00,11}[q_1,q_2]$\\
              &\hspace{-0.3cm} $\quad\then\ \iif\ P^{00,11}[q_2,q_3]$\\
              &\hspace{-0.3cm} $\quad\quad\then\ \sskip$\\
              &\hspace{-0.3cm} $\quad\quad\eelse\ \; q_3 \apply X\ \pend$\\
              &\hspace{-0.3cm} $\quad\eelse\ \; \iif\ P^{00,11}[q_2,q_3]$\\
              &\hspace{-0.3cm} $\quad\quad\then\ q_1 \apply X$\\
              &\hspace{-0.3cm} $\quad\quad\eelse\ \; q_2 \apply X \ \pend$\\
              &\hspace{-0.3cm} $\pend$
            \end{tabular}
          \end{minipage}       
    \caption{Repetition code to resist noise (left) and the program $Rep$ for recovery (right). The projector that appears in the guard is defined as $P^{00,11} \triangleq |00\>\<00| \vee |11\>\<11|$. }\label{fig-repetition}
  \end{figure}
	
		To describe the correctness of $\mathit{Rep}$ using one single Hoare triple, we employ the same technique as in the previous section to introduce an auxiliary qubit $a$. Let 
		\begin{align*}
			|e_0\>  \define (|000,0\> + |111,1)/\sqrt{2} &\qquad |e_1\>  \define (|100,0\> + |011,1)/\sqrt{2} \\
			|e_2\>  \define (|010,0\> + |101,1)/\sqrt{2}  &\qquad |e_3\>  \define (|001,0\> + |110,1)/\sqrt{2} 
		\end{align*}
		be the output states after the initial maximally entangled state $(|0,0\> + |1,1\>)_{q_1, a}/\sqrt{2}$ undergoes the encoding process and the noisy channel, corresponding to the four possibilities of error occurring: no error, bit-flip error on the first, second, and third qubit, respectively. 
		That is, $|e_i\> = X_{q_i}|e_0\>$ for $i=1,2,3$. For simplicity, we omit the qubit subscripts and assume the fixed order $q_1, q_2, q_3, a$, with a comma between the three principle qubits and the auxiliary one for clarity. Let $P_e = \bigvee_{i=0}^{3}|e_i\>\<e_i|$.

		With these notations, the correctness of $\mathit{Rep}$ can be represented as $\models \ass{P_e}{\mathit{Rep}}{|e_0\>\<e_0|}$.
		In the following, we will use the refinement calculus presented in this paper to derive $\mathit{Rep}$ from the specification (prescription) $S_0 \triangleq [P_e, |e_0\>\<e_0|].$
		
		Let $P^{00,11} \triangleq |00\>\<00| \vee |11\>\<11|$ and $P^{01,10} \triangleq \left(P^{00,11}\right)^\bot = |01\>\<01| \vee |10\>\<10|$.
		First, from \textsc{(Spc-Pres-Cond)} we have
		$
		S_0 \le \iif\ P^{00,11}[q_1,q_2]\ \then\ \left[P^{00,11}_{q_1,q_2} \doublecap P_e, |e_0\>\<e_0|\right]\ \eelse 
		\left[P^{01,10}_{q_1,q_2} \doublecap P_e, |e_0\>\<e_0|\right]\ \pend
		$
		and furthermore,
			\begin{align*}
				\left[P^{00,11}_{q_1,q_2} \doublecap P_e, |e_0\>\<e_0|\right]\le
				\iif\ P^{00,11}[q_2,q_3]\ &\then\ \left[P^{00,11}_{q_2,q_3} \doublecap \left(P^{00,11}_{q_1,q_2} \doublecap P_e\right), |e_0\>\<e_0|\right]\\
				&\eelse \ \
				\left[P^{01,10}_{q_2,q_3} \doublecap\left(P^{00,11}_{q_1,q_2} \doublecap P_e\right), |e_0\>\<e_0|\right]\ \pend
				\\
				 \left[P^{01,10}_{q_1,q_2} \doublecap P_e, |e_0\>\<e_0|\right]\le
				 \iif\ P^{00,11}[q_2,q_3]\ &\then\ \left[P^{00,11}_{q_2,q_3} \doublecap \left(P^{01,10}_{q_1,q_2} \doublecap P_e\right), |e_0\>\<e_0|\right]\\
				 &\eelse \ \
				 \left[P^{01,10}_{q_2,q_3} \doublecap\left(P^{01,10}_{q_1,q_2} \doublecap P_e\right), |e_0\>\<e_0|\right]\ \pend.
			\end{align*}
		We further refine using \textsc{(Wpc-Pres-Unit)} to obtain, say,
		\begin{align*}
			\left[P^{01,10}_{q_2,q_3} \doublecap \left(P^{00,11}_{q_1,q_2} \doublecap P_e\right), |e_0\>\<e_0|\right]&\equiv
			\left[|e_3\>\<e_3|, |e_0\>\<e_0|\right]\equiv q_3 \apply X.
		\end{align*}
		Similarly, the other branches can be refined into $\sskip$, $q_1\apply X$, and $q_2\apply X$, respectively. Finally, combining all of the above steps using the structural rule (\textsc{Stru-Cond}) leads to $S_0 \le \mathit{Rep}$.

  		\subsection{Quantum-to-quantum Bernoulli factory}
		
		Consider a quantum program $\qcoin$ that produces the state $|\psi_p\> = \sqrt{p}|0\>+\sqrt{1-p}|1\>$ where $p$ is an unknown parameter in the range $[0,1]$. Since measuring $|\psi_p\>$ in the computational basis produces the Bernoulli distribution $\mathrm{B}(p)$, the state $|\psi_p\>$ can be regarded as a quantum analogy of a classical coin with bias $p$ towards 0.
		The quantum-to-quantum Bernoulli factory problem aims to determine all complex functions $h: [0,1]\ra \mathbb{C}$ so that the quantum state $|h(p)\> \define \frac{1}{\sqrt{1+|h(p)|^2}}(h(p)|0\> + |1\>)$, corresponding to the Bernoulli distribution $\mathrm{B}(|h(p)|^2)$, can be produced using $\qcoin$ as a subroutine. 
		It is discovered in~\cite{JZS18} that the necessary and sufficient condition for $h$ to be implementable is that $h(p)$ belongs to the field generated by $\sqrt{\frac{p}{1-p}}$ and $\mathbb{C}$ in the sense that
		\begin{equation}\label{eq:hpfunc}
		h(p) = \frac{g_1(p)}{g_2(p)}\sqrt{\frac{p}{1-p}} + \frac{g_3(p)}{g_4(p)}
		\end{equation}
		where $g_i(p)$, $1\leq i\leq 4$, are polynomials of $p$ with coefficients in $\mathbb{C}$. Note also that $|\psi_p\> = \left|\sqrt{\frac{p}{1-p}}\right\>$.

			{
			\renewcommand{\arraystretch}{1.3}
			\begin{figure}
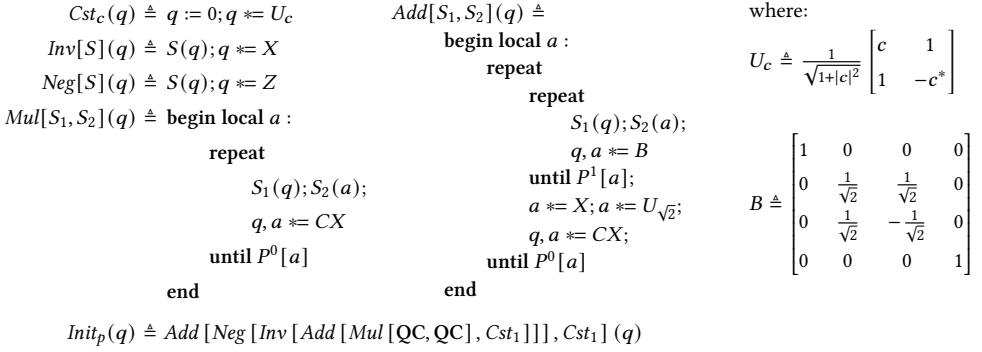

				\footnotesize
				\begin{tabular}{lll}
					\begin{tabular}{rl}
						$\mathit{Cst}_c(q)\triangleq$ & \hspace{-0.3cm} $q:=0; q\apply U_c$\\
						$\mathit{Inv}[S](q)\triangleq$ & \hspace{-0.3cm} $S(q); q \apply X$\\
						$\mathit{Neg}[S](q)\triangleq$ & \hspace{-0.3cm} $S(q); q \apply Z$ \\
						$\mathit{Mul}[S_1, S_2](q)\triangleq$ & \hspace{-0.3cm} $\blocal\ a :$ \\
						&\hspace{-0.3cm} $\qquad\irepeat$\\
						&\hspace{-0.3cm} $\qquad\qquad S_1(q); S_2(a);$\\
						&\hspace{-0.3cm} $\qquad\qquad q,a\apply CX$\\
						&\hspace{-0.3cm} $\qquad\iuntil\ P^{0}[a]$\\
						&\hspace{-0.3cm} $\pend$
					\end{tabular}
					\quad& 
                    {
                    \renewcommand{\arraystretch}{1.03}
					\begin{tabular}{rl}
						&\hspace{-1cm} $\mathit{Add}[S_1, S_2](q) \triangleq$\\
						& \hspace{-0.3cm} $\blocal\ a :$ \\
						& \hspace{-0.3cm} $\qquad \irepeat$\\
						& \hspace{-0.3cm} $\qquad\qquad \irepeat$\\
						& \hspace{-0.3cm} $\qquad\qquad\qquad S_1(q); S_2(a);$\\
						& \hspace{-0.3cm} $\qquad\qquad\qquad q,a\apply B$\\
						& \hspace{-0.3cm} $\qquad\qquad\iuntil\ P^{1}[a];$ \\ 
						& \hspace{-0.3cm} $\qquad\qquad a\apply X; a \apply U_{\sqrt{2}};$ \\
						& \hspace{-0.3cm} $\qquad\qquad q,a\apply CX;$\\
						& \hspace{-0.3cm} $\qquad \iuntil\ P^{0}[a]$\\
						& \hspace{-0.3cm} $\pend$
					\end{tabular}
                    }
					\qquad& 
					\begin{tabular}{l}
                        where: \\
						$U_{c}\define
						\frac{1}{\sqrt{1+|c|^2}}\begin{bmatrix}
							c & 1 \\
							1 & -c^*
						\end{bmatrix}
						$\\ \\
						$B\define
						 \begin{bmatrix}
							1 & 0 & 0 & 0 \\
							0 & \frac{1}{\sqrt{2}} & \frac{1}{\sqrt{2}} & 0 \\
							0 & \frac{1}{\sqrt{2}} & -\frac{1}{\sqrt{2}} & 0 \\
							0 & 0 & 0 & 1
						\end{bmatrix}
						$\\ \\
					\end{tabular}
				\end{tabular}
				\begin{tabular}{l}
					$\mathit{Init_p}(q) \triangleq \mathit{Add}\left[\mathit{Neg}\left[\mathit{Inv}\left[\mathit{Add}\left[\mathit{Mul}\left[\qcoin, \qcoin\right], \mathit{Cst}_1\right]\right]\right], \mathit{Cst}_1\right](q)$\hspace{3.56cm}
				\end{tabular}
				\caption{Constructs for quantum-to-quantum Bernoulli factory.
				}
				\label{fig:q2q-bernoulli}
			\end{figure}
		}

		Our aim of this section is to implement the quantum program which can produce $|h(p)\>$ for any given function $h$ of the form in Eq.~\eqref{eq:hpfunc}.
		Obviously, one way to achieve this is to implement subroutines that realize all the basic field operators. Specifically, we will construct the following programs: $\mathit{Cst}_c$ for constant $c\in \mathbb{C}$, $\mathit{Neg}$ for negation $-$, $\mathit{Inv}$ for inversion ${}^{-1}$, $\mathit{Mul}$ for multiplication $*$, and $\mathit{Add}$ for addition $+$. These programs, together with an additional program $\mathit{Init_p}$ that produces $|p\>$, are depicted in Fig.~\ref{fig:q2q-bernoulli}, with the help of the following extra program constructs. 
		The correctness of these programs will be seen and formally proved through the refinement processes below.
		
		\begin{enumerate}
			\item The \textit{block command} $\blocal\ \bar{a} : S\ \pend$, which allows the use of local variables $\bar{a}$. It is required that qubits $\bar{a}$ are initially not entangled with other qubits and are they discarded after executing $S$; 
			\item The \textit{program scheme} $S[A_1,\ldots, A_n]$, which invokes subroutines $A_1,\ldots, A_n$. This construct is indispensable for the current example as the implementation details of the initially given $\qcoin$ are unknown, and it can only be called as a black box. Furthermore, the programs realizing field operators like $\mathit{Add}$ are all implemented with subroutines.
			\item The \emph{repeat-until} construct as a syntactic sugar
			\[
			\irepeat\ S\ \iuntil\ P[\bar{q}]\ \triangleq\ S; \while\ P^\bot[\bar{q}]\ \ddo\ S\ \od.
			\]
			
		\end{enumerate}
		The first two constructs have been explored in detail in~\cite{ying2016foundations}. We also adopt the convention of defining a family of programs on different sets of qubits; for example, we denote $S(q)$ the one-qubit program $S$ acting on $q$. Then $S(a)$ means the same program, but with $q$ replaced by $a$. Consequently, in general, we will have a program of the form $P[A_{11},\cdots, A_{n1}](\bar{q})$ with subroutines $A_{11},\cdots, A_{n1}$ and acting on qubits $\bar{q}$. First, we present some refinement rules for the new constructs, illustrated in Fig. \ref{fig:block-scheme-rules}. The soundness of these rules are easy to check.

		To show how these subroutines can be used to build a program to implement any given function, let us consider the case where $h(p) = 2/(p+1)$. Then from the flow graph

  \vspace{0.15cm}
\begin{tikzpicture}[x=0.75pt,y=0.75pt,yscale=-1,xscale=1]
  \path (40,63);

  \draw (125,0) node [anchor=north west][inner sep=0.75pt]   [align=left] {$\displaystyle Init_{p}$};
  \draw [shift={(124,0)}, rotate = 0] [line width=0.08]  (0,-2) -- (31,-2) -- (31,15) -- (0,15) -- cycle ;
  \draw    (155,7) -- (175,7) ;
  \draw [shift={(177,7)}, rotate = 180] [fill={rgb, 255:red, 0; green, 0; blue, 0 }  ][line width=0.08]  [draw opacity=0] (6,-3) -- (0,0) -- (6,3) -- cycle    ;
  \draw (178,0) node [anchor=north west][inner sep=0.75pt]   [align=left] {$\displaystyle p$};

  \draw (128,30) node [anchor=north west][inner sep=0.75pt]   [align=left] {$\displaystyle Cst_{1}$};
  \draw [shift={(127,30)}, rotate = 0] [line width=0.08]  (0,-2) -- (28,-2) -- (28,15) -- (0,15) -- cycle ;
  \draw    (155,37) -- (175,37) ;
  \draw [shift={(177,37)}, rotate = 180] [fill={rgb, 255:red, 0; green, 0; blue, 0 }  ][line width=0.08]  [draw opacity=0] (6,-3) -- (0,0) -- (6,3) -- cycle    ;
  \draw (178,30) node [anchor=north west][inner sep=0.75pt]   [align=left] {$\displaystyle 1$};

  \draw    (190,7) -- (205,22) ;
  \draw    (190,37) -- (205,22) ;
  \draw    (205,22) -- (248,22) ;
  \draw [shift={(250,22)}, rotate = 180] [fill={rgb, 255:red, 0; green, 0; blue, 0 }  ][line width=0.08]  [draw opacity=0] (6,-3) -- (0,0) -- (6,3) -- cycle    ;
  \draw (211,7) node [anchor=north west][inner sep=0.75pt]   [align=left] {$\displaystyle Add$};
  \draw [shift={(210,7)}, rotate = 0] [line width=0.08]  (0,-2) -- (28,-2) -- (28,15) -- (0,15) -- cycle ;
  \draw (251,15) node [anchor=north west][inner sep=0.75pt]   [align=left] {$\displaystyle p+1$};
  \draw    (281,22) -- (321,22) ;
  \draw [shift={(323,22)}, rotate = 180] [fill={rgb, 255:red, 0; green, 0; blue, 0 }  ][line width=0.08]  [draw opacity=0] (6,-3) -- (0,0) -- (6,3) -- cycle    ;
  \draw (289,7) node [anchor=north west][inner sep=0.75pt]   [align=left] {$\displaystyle Inv$};
  \draw [shift={(288,7)}, rotate = 0] [line width=0.08]  (0,-2) -- (23,-2) -- (23,15) -- (0,15) -- cycle ;
  \draw (324,15) node [anchor=north west][inner sep=0.75pt]   [align=left] {$\displaystyle 1/( p+1)$};

  \draw (316,45) node [anchor=north west][inner sep=0.75pt]   [align=left] {$\displaystyle Cst_{2}$};
  \draw [shift={(315,45)}, rotate = 0] [line width=0.08]  (0,-2) -- (28,-2) -- (28,15) -- (0,15) -- cycle ;
  \draw    (343,52) -- (363,52) ;
  \draw [shift={(365,52)}, rotate = 180] [fill={rgb, 255:red, 0; green, 0; blue, 0 }  ][line width=0.08]  [draw opacity=0] (6,-3) -- (0,0) -- (6,3) -- cycle    ;
  \draw (366,45) node [anchor=north west][inner sep=0.75pt]   [align=left] {$\displaystyle 2$};

  \draw    (378,22) -- (393,37) ;
  \draw    (378,52) -- (393,37) ;
  \draw    (393,37) -- (436,37) ;
  \draw [shift={(438,37)}, rotate = 180] [fill={rgb, 255:red, 0; green, 0; blue, 0 }  ][line width=0.08]  [draw opacity=0] (6,-3) -- (0,0) -- (6,3) -- cycle    ;
  \draw (399,22) node [anchor=north west][inner sep=0.75pt]   [align=left] {$\displaystyle Mul$};
  \draw [shift={(398,22)}, rotate = 0] [line width=0.08]  (0,-2) -- (28,-2) -- (28,15) -- (0,15) -- cycle ;
  \draw (439,30) node [anchor=north west][inner sep=0.75pt]   [align=left] {$\displaystyle 2/( p+1)$};
\end{tikzpicture}
		
		\noindent we see how the program $\mathit{Mul}[\mathit{Inv}[\mathit{Add}[\mathit{Init}_p,\mathit{Cst}_1]] , \mathit{Cst}_2]$ implements the desired quantum coin.

		In the following, we will employ the refinement calculus to derive these basic subroutines; the whole program is shown in the Supplementary Material \ref{app:q2q-bernoulli}. More specifically, we need to show
			\begin{align*}
				&[I,|\widetilde{c}\>]_q \le \mathit{Cst}_c(q) &&
				[I,|1/h\>]_q \le \mathit{Inv}[F_h](q) &&
				[I,|-h\>]_q \le \mathit{Neg}[F_h](q) \\
				&[I, |h_1*h_2\>]_q \le \mathit{Mul}[F_{h_1}, F_{h_2}](q) &&
				[I,|h_1+h_2\>]_q \le \mathit{Add}[F_{h_1}, F_{h_2}](q) &&
				[I,|p\>]_q \le \mathit{Init}_p(q)
			\end{align*}
			where $F_h$ is a subroutine to produce the state $|h\>$ that has been realized; that is, $[I,|h\>]_q \le F_h(q)$. Furthermore, we add a tilde over $c$ in $|\widetilde{c}\>$ to emphasize that it represents $\frac{1}{|c|^2 +1}(c|0\>+|1\>)$. This avoids confusion when $c=0$ or 1.
			For simplicity, sometimes we slightly abuse the notation by using a pure state $|\psi\>$ to denote its corresponding projector $|\psi\>\<\psi|$.

		{
			\renewcommand{\arraystretch}{1.3}
			\begin{figure}
				\footnotesize
				\begin{tabular}{ll}
					\textsc{(Stru-Local)} & \begin{tabular}{l}$
						\inferrule* [rightstyle={\footnotesize \sc}]{
							S \le T
						}{
							\blocal\ \bar{a} : S\ \pend \le \blocal\ \bar{a} : T\ \pend
						}
						$	\end{tabular}\vspace{0.5em} \\
					\textsc{(Stru-Scheme)} & \begin{tabular}{l}$
						\inferrule* [rightstyle={\footnotesize \sc}]{
							\forall i, \bar{q}_i,\ A_{i1}(\bar{q}_i) \le A_{i2}(\bar{q}_i)
						}{
							P[A_{11},\cdots, A_{n1}](\bar{q}) \le P[A_{12},\cdots, A_{n2}](\bar{q})
						}
						$	\end{tabular}\vspace{0.5em} \\
						\textsc{(Stru-Repeat)} & \begin{tabular}{l}$
							\inferrule* [rightstyle={\footnotesize \sc}]{
								S \le T
							}{
								\repstm{S}{P[\bar{q}]} \le \repstm{T}{P[\bar{q}]}
							}
							$	\end{tabular}
				\end{tabular}\\
				\vspace{1em}
				\begin{tabular}{llcl}
					\textsc{(Pres-Local)} & $[P,Q]_{\bar{q}}$ & $\equiv$ & $ \blocal\ \bar{a} : [P, Q]_{\bar{q},\bar{a}}\ \pend$ 	 \qquad if $\bar{q}\cap \bar{a} = \emptyset$						\\
					\textsc{(Pres-Repeat)} & $[P,Q]_{\bar{q}}$ & \le & $ \irepeat\ [P, (R^\bot \rightsquigarrow P) \wedge (R \rightsquigarrow Q)]_{\bar{q}}\ \iuntil\ R[\bar{q}]$
				\end{tabular}
				\caption{Refinement rules for block command, program scheme, and repeat-until constructs. 
				}
				\label{fig:block-scheme-rules}
			\end{figure}
		}

		First, note that 
		$U_c|0\> \propto c|0\>+|1\> \propto |\widetilde{c}\>$, $X|h(p)\> \propto |0\> + h(p)|1\>
		\propto |1/h(p)\>$,  and $Z|h(p)\> \propto h(p)|0\> - |1\>
		\propto |-h(p)\>$. Here we use the symbol $\propto$ to ignore the unimportant normalization factor. We can easily construct these subroutines by employing (\textsc{Comm-Pres-Seq}) and (\textsc{Spc-Pres-Unit}):
		\begin{align*}
			\left[I, |\widetilde{c}\>\right]_q &\quad \le \quad \left[I, |0\>\right]_q;\ \left[|0\>, |\widetilde{c}\>\right]_q
			&&\hspace{-5em}\le\quad q:=0;\ q \apply U_c\\
			\left[I, |1/h\>\right]_q &\quad\le \quad \left[I, |h\>\right]_q;\ \left[|h\>, |1/h\>\right]_q
			&&\hspace{-5em} \le\quad F_h(q);\ q \apply X\\
			[I,|-h\>]_q &\quad\le \quad[I,|h\>]_q;\ [|h\>, |-h\>]_q &&\hspace{-5em}\le\quad F_h(q);\ q \apply Z.
		\end{align*}

		Next, we derive
        {\small
		\begin{align*}
			\left[I, |h_1*h_2\>\right]_q &\equiv \blocal\ a : [I, |h_1*h_2\>_q]_{q,a}\ \pend&\hspace{-7em}(\textsc{Pres-Local})\\
			&\le \blocal\ a : [I, |h_1*h_2, 0\>]_{q,a}\ \pend &\hspace{-7em}(\textsc{Comm-Pres-Cons})\\
			&\le \blocal\ a : \irepeat\ [I, (P^1_a \rightsquigarrow I) \wedge (P^{0}_a \rightsquigarrow |h_1*h_2,0\>)]_{q,a}\ \iuntil\ P^{0}[a]\ \pend \\
			& &\hspace{-7em} (\textsc{Pres-Repeat})\\
			&\le \blocal\ a : \irepeat\ [I, P^1_a \vee |h_1*h_2,0\>]_{q,a}\ \iuntil\ P^{0}[a]\ \pend &\hspace{-7em} \mbox{(Simplification)}
		\end{align*}
        }
		
        \noindent Note that $CX|h_1, h_2\> \propto (h_1|0\>+h_2|1\>)|1\> + (h_1h_2|0\>+|1\>)|0\> \le P^1_a \vee |h_1*h_2,0\>$. Then
		\begin{align*}
			[I, P^1_a \vee |h_1*h_2,0\>]_{q,a} 
			&\le [I, |h_1\>_q]_{q,a};\ [|h_1\>_q, |h_1, h_2\>]_{q,a};\ [|h_1, h_2\>, CX|h_1, h_2\>]_{q,a}\\
			&&\hspace{-11em}  (\textsc{Comm-Pres-Seq}), (\textsc{Comm-Pres-Cons})\\
			&\le F_{h_1}(q);\ F_{h_2}(a);\ q,a\apply CX &\hspace{-11em} (\textsc{Spc-Pres-Unit})
		\end{align*}
		and finally we obtain $\mathit{Mul}$ as shown in Fig. \ref{fig:q2q-bernoulli} by putting them all together.
		
		The case of $\mathit{Add}$ is similar to $\mathit{Mul}$. From the fact that $h_1+h_2 = \frac{h_1+h_2}{\sqrt{2}}* \sqrt{2}$, we have
        {\small
		\begin{align*}
			\left[I, |h_1+h_2\>\right]_q 
			&\le \blocal\ a : \irepeat\ \left[I, \left|\frac{h_1+h_2}{\sqrt{2}}, \widetilde{\sqrt{2}}\right\>\right]_{q,a}; 
			q,a\apply CX\ \iuntil\ P^{0}[a]\ \pend.
		\end{align*}
        }
		
        \noindent Furthermore, as
		$B|h_1, h_2\> \propto \left(h_1h_2|0\> + \frac{h_1-h_2}{\sqrt{2}}|1\>\right)|0\> + \left(\frac{h_1+h_2}{\sqrt{2}}|0\> + |1\>\right)|1\> \le P^1_a \vee \left|\frac{h_1+h_2}{\sqrt{2}},1\right\>$ and  $|\mbox{\footnotesize $\widetilde{\sqrt{2}}$}\> = U_{\sqrt{2}}|0\>$, we compute 
		{\small
			\begin{align*}
				\left[I, \left|\frac{h_1+h_2}{\sqrt{2}}, \widetilde{\sqrt{2}}\right\>\right]_{q,a}
				&\le \left[I, \left|\frac{h_1+h_2}{\sqrt{2}}, 1\right\>\right]_{q,a};\  \left[\left|\frac{h_1+h_2}{\sqrt{2}}, 1\right\>, \left|\frac{h_1+h_2}{\sqrt{2}}, 0\right\>\right]_{q,a};\ \left[\left|\frac{h_1+h_2}{\sqrt{2}}, 0\right\>, \left|\frac{h_1+h_2}{\sqrt{2}}, \widetilde{\sqrt{2}}\right\>\right]_{q,a} \\
				&\le \irepeat\ \left[I, \left(P^0_a \rightsquigarrow I\right) \wedge \left(P^{1}_a \rightsquigarrow \left|\frac{h_1+h_2}{\sqrt{2}},1\right\>\right)\right]_{q,a}\ \iuntil\ P^{1}[a];\ a\apply X;\ a \apply U_{\sqrt{2}} \\
				&\equiv \irepeat\ \left[I, P^0_a \vee \left|\frac{h_1+h_2}{\sqrt{2}},1\right\>\right]_{q,a}\ \iuntil\ P^{1}[a];\ a\apply X;\ a \apply U_{\sqrt{2}}
			\end{align*}
		}
		while 
					\begin{align*}
			\left[I, P^0_a \vee \left|\frac{h_1+h_2}{\sqrt{2}},1\right\>\right]_{q,a}
			&\le [I, |h_1\>_q]_{q,a};\ [|h_1\>_q, |h_1,h_2\>]_{q,a};\ [|h_1,h_2\>, B|h_1, h_2\>]_{q,a}\\
			&\le F_{h_1}(q);\ F_{h_2}(a);\ q,a\apply B.
		\end{align*}
		These lead to $\mathit{Add}$ shown in Fig. \ref{fig:q2q-bernoulli}.

		Finally, note that $\sqrt{\frac{p}{1-p}}\xrightarrow[]{\mathit{Mul}\ \sqrt{\frac{p}{1-p}}}\frac{p}{1-p}\xrightarrow[]{\mathit{Add}\ 1}\frac{1}{1-p}\xrightarrow[]{\mathit{Inv}}1-p\xrightarrow[]{\mathit{Neg}}p-1\xrightarrow[]{\mathit{Add}\ 1}p$. We can derive $\mathit{Init}_p$ as follows via (\textsc{Stru-Scheme})
		\begin{align*}
			[I,|p\>]_q 
			&\le \mathit{Add}\left[F_{p-1}, F_1\right]
			\le \mathit{Add}\left[\mathit{Neg}\left[F_{1-p}\right], \mathit{Cst}_1\right]
			\le \mathit{Add}\left[\mathit{Neg}\left[\mathit{Inv}\left[F_{1/(1-p)}\right]\right], \mathit{Cst}_1\right](q) \\
			&\le \mathit{Add}\left[\mathit{Neg}\left[\mathit{Inv}\left[\mathit{Add}\left[F_{p/1-p}, F_1\right]\right]\right], \mathit{Cst}_1\right](q) \\
			&\le \mathit{Add}\left[\mathit{Neg}\left[\mathit{Inv}\left[\mathit{Add}\left[\mathit{Mul}\left[F_{\sqrt{p/1-p}}, F_{\sqrt{p/1-p}}\right], \mathit{Cst}_1\right]\right]\right], \mathit{Cst}_1\right](q) \\
			&\le \mathit{Add}\left[\mathit{Neg}\left[\mathit{Inv}\left[\mathit{Add}\left[\mathit{Mul}\left[\qcoin, \qcoin\right], \mathit{Cst}_1\right]\right]\right], \mathit{Cst}_1\right](q)
		\end{align*}
		 since the given $\qcoin$ implements $F_{\sqrt{p/(1-p)}}$.
   
		\section{Prototype Implementation: \texttt{Quire}}
        \label{sec:quire}
		To showcase the practical utility of our theoretical framework, we present a prototype implementation named \texttt{Quire}, a Python-based \textbf{qu}antum \textbf{i}nteractive \textbf{r}efinement \textbf{e}ngine. \texttt{Quire} performs various tasks, including checking the well-formedness of operator terms and quantum programs, computing classical simulations of program executions, verifying adherence to specified prescriptions, and aiding in the stepwise refinement of programs in the sense of partial correctness. This tool can be accessed via the GitHub repository: \url{https://github.com/LucianoXu/Quire}.

        \begin{figure}
          \includegraphics*[width=\textwidth]{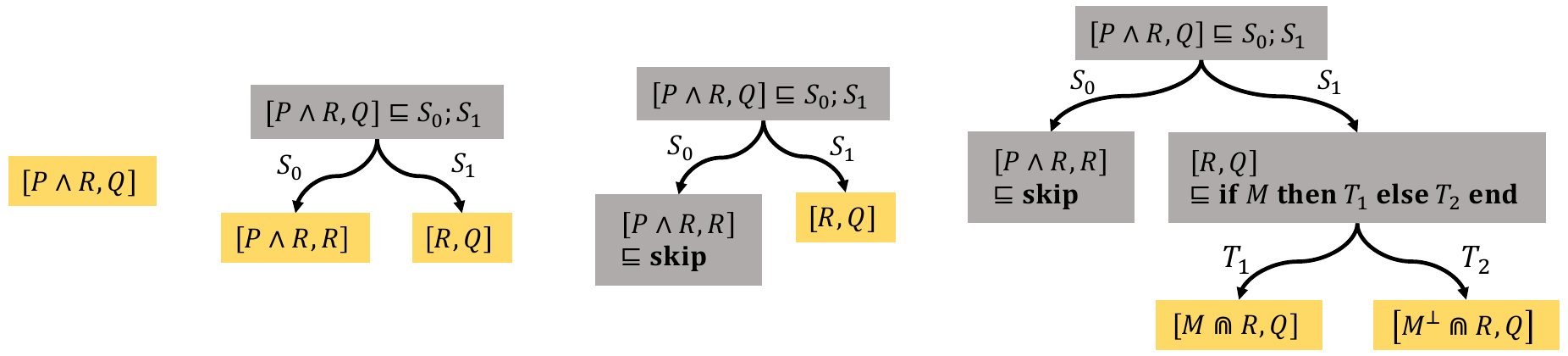}
          \vspace{-0.3cm}
          \caption{An illustration demonstrating the stepwise refinement of a random example. The abstract syntax tree expands at each step, with unresolved prescriptions (i.e., the goals) highlighted as yellow boxes at the leaf nodes.}
          \label{fig:refine illustration}
        \end{figure}

\begin{figure}

		\begin{tikzpicture}

			\node[draw=gray, fill=gray!20, dashed, rounded corners, inner xsep=5pt, inner ysep =0pt] (setup) {
			  \begin{minipage}{0.96\textwidth}
				\lstset{style=Quire-tiny}
				\begin{lstlisting}
// Setup for the refinement. Constant definitions be utilised later on. 
Def Peq := [|00$\>$] $\vee$ [|11$\>$]. // [|v>] is the syntax for the projector |v><v|
Def Pe0 := 0.5 [|0000$\>$ + |1111$\>$]. Def Pe1 := 0.5 [|1000$\>$ + |0111$\>$].
Def Pe2 := 0.5 [|0100$\>$ + |1011$\>$]. Def Pe3 := 0.5 [|0010$\>$ + |1101$\>$].
Def Pe := Pe0 $\vee$ Pe1 $\vee$ Pe2 $\vee$ Pe3.
				\end{lstlisting}
				\end{minipage}
        };
  
			\node[draw=gray, fill=gray!20, dashed, rounded corners, inner xsep=5pt, inner ysep =0pt, below= 5pt of setup ] (start) {
			  \begin{minipage}{0.96\textwidth}
				\lstset{style=Quire-tiny}
				\begin{lstlisting}
Refine Rep : < Pe[q1 q2 q3 a], Pe0[q1 q2 q3 a] >.
				\end{lstlisting}
				\end{minipage}
        };

        \node[draw=gray, fill=yellow!10, dashed, rounded corners, inner xsep=5pt, inner ysep =0pt, below=20pt of start.south west, anchor=west] (S1) {
      \begin{minipage}{0.575\textwidth}
				\lstset{style=Quire-tiny}
				\begin{lstlisting}
Goal (1/1)
< Pe[q1 q2 q3 a], Pe0[q1 q2 q3 a] >
      \end{lstlisting}
      \end{minipage}
			};

			\node[draw=gray, fill=gray!20, dashed, rounded corners, inner xsep=5pt, inner ysep =0pt, right= 5pt of S1] (C1) {
			  \begin{minipage}{0.345\textwidth}
				\lstset{style=Quire-tiny}
				\begin{lstlisting}
Step If Peq[q1 q2].
				\end{lstlisting}
				\end{minipage}
        };

        \node[draw=gray, fill=yellow!10, dashed, rounded corners, inner xsep=5pt, inner ysep =0pt, below=5pt of S1] (S2) {
      \begin{minipage}{0.575\textwidth}
				\lstset{style=Quire-tiny}
				\begin{lstlisting}
Goal (1/2)
< (Peq[q1 q2] $\Cap$ Pe[q1 q2 q3 a]), Pe0[q1 q2 q3 a] >
Goal (2/2)
< ((Peq[q1 q2]^$\bot$) $\Cap$ Pe[q1 q2 q3 a]), Pe0[q1 q2 q3 a] >
      \end{lstlisting}
      \end{minipage}
			};

      \node[draw=gray, fill=gray!20, dashed, rounded corners, inner xsep=5pt, inner ysep =0pt, right= 5pt of S2] (C2) {
			  \begin{minipage}{0.345\textwidth}
				\lstset{style=Quire-tiny}
				\begin{lstlisting}
Step If Peq[q2 q3].
WeakenPre Pe0[q1 q2 q3 a].
				\end{lstlisting}
				\end{minipage}
        };

      \node[draw=gray, fill=yellow!10, dashed, rounded corners, inner xsep=5pt, inner ysep =0pt, below=5pt of S2] (S3) {
      \begin{minipage}{0.575\textwidth}
				\lstset{style=Quire-tiny}
				\begin{lstlisting}
Goal (1/3)
< Pe0[q1 q2 q3 a], Pe0[q1 q2 q3 a] >
Goal (2/3)
< ((Peq[q2 q3]^$\bot$) $\Cap$ (Peq[q1 q2] $\Cap$ Pe[q1 q2 q3 a])), Pe0[q1 q2 q3 a] >
Goal (3/3)
< ((Peq[q1 q2]^$\bot$) $\Cap$ Pe[q1 q2 q3 a]), Pe0[q1 q2 q3 a] >
      \end{lstlisting}
      \end{minipage}
			};

      \node[draw=gray, fill=gray!20, dashed, rounded corners, inner xsep=5pt, inner ysep =0pt, right= 5pt of S3] (C3) {
			  \begin{minipage}{0.345\textwidth}
				\lstset{style=Quire-tiny}
				\begin{lstlisting}
Step skip.
WeakenPre Pe3[q1 q2 q3 a]. Step X[q3].
        \end{lstlisting}
				\end{minipage}
        };

      \node[draw=gray, fill=yellow!10, dashed, rounded corners, inner xsep=5pt, inner ysep =0pt, below=5pt of S3] (S4) {
        \begin{minipage}{0.575\textwidth}
          \lstset{style=Quire-tiny}
          \begin{lstlisting}
Goal (1/1)
< ((Peq[q1 q2]^$\bot$) $\Cap$ Pe[q1 q2 q3 a]), Pe0[q1 q2 q3 a] >     
        \end{lstlisting}
        \end{minipage}
        };

        \node[draw=gray, fill=gray!20, dashed, rounded corners, inner xsep=5pt, inner ysep =0pt, below= of C3] (C4) {
          \begin{minipage}{0.345\textwidth}
          \lstset{style=Quire-tiny}
          \begin{lstlisting}
Step If Peq[q2 q3].
WeakenPre Pe1[q1 q2 q3 a]. Step X[q1].
WeakenPre Pe2[q1 q2 q3 a]. Step X[q2].
          \end{lstlisting}
          \end{minipage}
          };

          \node [below= -5pt of S4] (vdots) {
            \begin{minipage}{0.575\textwidth}
              $$
              \vdots
              $$
            \end{minipage}
            };
          
          \node [draw=gray, fill=yellow!10, dashed, rounded corners, inner xsep=5pt, inner ysep =0pt, below= 2pt of vdots] (S5) {
            \begin{minipage}{0.575\textwidth}
              \lstset{style=Quire-tiny}
              \begin{lstlisting}
Goal Clear.
              \end{lstlisting}
            \end{minipage}
          };

          \node [draw=gray, fill=gray!20, dashed, rounded corners, inner xsep=5pt, inner ysep =0pt, below= 15pt of S5.south west, anchor=west] (end) {
            \begin{minipage}{0.96\textwidth}
              \lstset{style=Quire-tiny}
              \begin{lstlisting}
End.
              \end{lstlisting}
            \end{minipage}
          };
          

        \end{tikzpicture}
        \caption{Illustration of the interactive refinement process for the repetition code. 
        Gray boxes represent \texttt{Quire} commands, and the corresponding responses are in yellow boxes. In refinement mode, unsolved prescriptions appear as individual goals in the output. 
        The user suggests the next step through commands, \texttt{Quire} verifies their legality, and responds with the result, either a change in goals or rejection of the suggestion. 
        This entire procedure mirrors that in Sec.~\ref{subsec: rep3}.
        }
        \label{fig:rep3refine}
      \end{figure}

        As depicted in Fig. \ref{fig:refine illustration}, the refinement process in \texttt{Quire} resembles the growth of an abstract syntax tree. In addition to component programs, refinement relations $S \sqsubseteq S'$ are also represented as nodes in the syntax tree. The refinement begins with a single prescription as the root, and the syntax tree expands through the application of refinement rules. Consequently, unsolved prescriptions are shown as leaf nodes in the tree, managed by \texttt{Quire} as \emph{goals}. Once the leaf nodes no longer contain goals, the refinement process concludes and the history of stepwise refinement is accurately preserved within the syntax tree. We demonstrate the functionality and application of \texttt{Quire} using the repetition code example from Sec. \ref{subsec: rep3}, as depicted in Fig. \ref{fig:rep3refine}. Another example implementing the $Z$-rotation 
        gate from Sec.~\ref{subsec:zrotation} is provided in the Supplementary Material \ref{sec:Rz in Quire}, and comprehensive documentation is available in the Supplementary Material \ref{sec: doc_quire}.

	   We highlight two closely related tools, \texttt{CorC}~\cite{runge2019tool} and \texttt{CoqQ}~\cite{zhou2023coqq}, from which we draw design methodologies. \texttt{CorC} implements correctness by construction for classical programs, with the refinement procedure modeled and manipulated as a tree in its graphic editor. \texttt{CoqQ} is a tool embedded in the \texttt{Coq} proof assistant, facilitating the formalization and verification of quantum programs. It enhances readability and usability by allowing the use of Dirac notations. In contrast, we aim to strike a balance between user-friendliness, expressiveness, and automation in the implementation of \texttt{Quire}. Our examination of \texttt{Quire} covers several aspects, demonstrating that quantum refinement calculus can practically assist in designing correct quantum programs.

        As a showcase of refinement, \texttt{Quire} underscores the significance of automation and usability. We opt for Python and adhere to variable-free expressions to ensure that the tool is user-friendly and easy to develop. With these constrained expressions, we are able to achieve completeness in deciding correct refinement by leveraging the lattice structure of projective predicates and employing numerical calculations in Python. In this regard, \texttt{Quire} assists users by handling the burden of verification.

		Due to the same design choices, the prototype implementation \texttt{Quire} exhibits limitations in expressiveness, reliability, and performance. Tools built on theorem provers like CoqQ~\cite{zhou2023coqq} can reason in higher-order logic, but \texttt{Quire} restricts expressions to be variable-free. The Python foundation of \texttt{Quire} means that its methods, calculations, and rules lack a constructive basis, increasing the likelihood of bugs. Moreover, reliability concerns arise from numerical calculations, including issues with floating point precision and the linear algebra algorithms that we implemented. Given that we employ matrix methods in \texttt{NumPy} as the calculation back-end, the time and space complexity grows exponentially with the number of qubits.

		\section{Conclusion and Future Works}
				\label{sec:conclusion}
		In this paper, we explore the refinement calculus of non-deterministic quantum programs, employing projectors as state assertions and defining program properties in terms of partial correctness. We introduce refinement rules based on the weakest liberal precondition and the strongest postcondition, respectively, offering a structured approach to the incremental development of quantum programs applicable in various contexts. To demonstrate the real-world utility of these rules, we showcase their practical application through instances such as realizing a $Z$ rotation gate, constructing the repetition code, and developing the quantum-to-quantum Bernoulli factory. Furthermore, we present an interactive prototype tool crafted in Python, serving as a hands-on support system for programmers immersed in the iterative process of creating accurate quantum programs.

	For future work, there are three natural directions to extend our current results: (1) considering effects or even sets of effects as quantum state assertions. This extension requires the generalization of the qualitative relation of a state satisfying a projector assertion to the quantitative expectation of a state satisfying an effect (or a set of effects). (2) Defining program properties in terms of total correctness. While we anticipate that most refinement rules proposed in the current paper remain valid in this stronger setting, the incorporation of a notion of ranking functions is necessary to ensure (probabilistic) termination to deal with while loops.
      (3) Quire still needs to be improved in various aspects to become a practical tool. The primary focus is on expressiveness, with the aim of facilitating the use of free variables, arrays of quantum variables, and intricate constructs of quantum registers. Maintaining automation necessitates careful considerations, which may include the incorporation of the decision procedure of Dirac notations and a constructive type system, as well as existing works for automated reasoning about quantum logic.

\section*{Acknowledgment} We thank Prof Mingsheng Ying for inspiring discussions. This work is partially supported by the Australian Research Council (Grant No DP220102059).		        
		
		\bibliography{ref}
		
		\appendix
\newpage
	\noindent{\Large \textbf{Supplementary material and deferred proofs}}
	
		\section{Preliminaries}
  \label{sec-preliminaries}
				
		This section provides essential preliminaries necessary to grasp the content of this paper. We will start with a brief introduction to basic notations in quantum computing, followed by an overview of quantum logic, with a specific focus on the Sasaki implication and conjunction. These concepts are crucial in the refinement calculus developed in this paper. For an in-depth exploration of relevant backgrounds, we recommend referring to~\cite[Chapter 2]{nielsen2002quantum} for quantum computing and~\cite{Birkhoff1936TheLO} for quantum logic.

\subsection{Foundations of quantum computing}
\label{sec:preliminary-foundation}
According to von Neumann's formalism of quantum mechanics
\cite{vN55}, any quantum system with finite degrees of freedom is associated with a finite-dimensional Hilbert space $\h$ called its \emph{state space}. When the dimension of $\h$ is $2$, such a system is called a \emph{qubit}, the analogy of bit in classical computing. A {\it pure state} of the system is described by a normalized vector in $\h$. 
When the system is in state $|\psi_i\>$\footnote{In this paper, we follow the tradition in the quantum computing community to denote vectors in $\h$ in the Dirac form like $|\psi\>$.} with probability $p_i$, $i\in I$, it is said to be in a \emph{mixed} state, represented by the \emph{density operator} $\sum_{i\in I} p_i|\psi_i\>\<\psi_i|$ on $\h$. Here $\<\psi|$ is the \emph{adjoint} vector of $|\psi\>$ and $|\psi\>\<\psi|$ denotes the \emph{outer product} of $|\psi\>$ with itself. When an orthonormal basis of $\h$ is taken and $|\psi\>$ is written as a column vector under this basis, then $\<\psi|$ is merely the transpose and complex conjugate of $|\psi\>$, and $|\psi\>\<\psi|$ is the (matrix) product of $|\psi\>$ and $\<\psi|$. 

Obviously, a density operator is positive and has trace 1. 
In this paper, we follow Selinger's convention~\cite{selinger2004towards} to  regard \emph{partial density operators}, i.e. positive operators with traces not greater than 1 as (normalized) quantum states. Intuitively, a partial density operator $\rho$ denotes a legitimate quantum state $\rho/\tr(\rho)$ that is obtained with probability $\tr(\rho)$. Denote by $\dh$ and $\lh$ the sets of partial density operators and linear operators on $\h$, respectively. The state space of a composite system (e.g., a quantum system consisting of multiple qubits) is the tensor product of the state spaces
of its components. For any $\rho$ in $\d(\h_1 \otimes \h_2)$, the partial traces
$\tr_{\h_1}(\rho)$ and $\tr_{\h_2}(\rho)$ are
the reduced quantum states of $\rho$ on $\h_2$ and $\h_1$, respectively. Here, the \emph{partial trace} $\tr_{\h_2}$ is a linear function from $\l(\h_1\otimes \h_2)$ to $\l(\h_1)$ such that for any $|\psi_i\>, |\phi_i\> \in \h_i$, $i=1,2$,
$$\tr_{\h_2}(|\psi_1\>\<\phi_1|\otimes |\psi_2\>\<\phi_2|) = 
\<\phi_2|\psi_2\> |\psi_1\>\<\phi_1|$$
where $\<\phi_2|\psi_2\>$ is the inner product of $|\phi_2\>$ and $|\psi_2\>$.

The \emph{evolution} of a closed quantum system is described by a unitary
operator on its state space: If the states of the system at
$t_1$ and $t_2$ are $\rho_1$ and $\rho_2$, respectively, then
$\rho_2=U\rho_1U^{\dag}$ for some unitary $U$. Here $U^\dag$ is the \emph{adjoint} operator of $U$; that is, $U^\dag$ is the \emph{unique} operator such that $\<\psi|U^\dag |\phi\> = \<\phi|U |\psi\>^*$ and $z^*$ denotes the conjugate of complex number $z$. Typical unitary operators used throughout this paper include Pauli-$X$, $Y$, and $Z$, Hadamard $H$, and controlled-NOT (CNOT) operator $CX$ and two special gates $U_{c}$ ($c \in \mathbb{C}$) and $B$, represented in the matrix form (with respect to the computational basis $\{|0\>, |1\>\}$) respectively as follows:
\[
X\define \begin{bmatrix}
	0 & 1 \\
	1 & 0 
\end{bmatrix},\ Y\define \begin{bmatrix}
	0 & -i \\
	i & 0 
\end{bmatrix},\ Z\define \begin{bmatrix}
	1 & 0 \\
	0 & -1
\end{bmatrix},\ U_{c}\define \frac{1}{\sqrt{1+|c|^2}}\begin{bmatrix}
	c & 1 \\
	1 & -c^*
\end{bmatrix},
\]
and
\[H\define \frac{1}{\sqrt{2}} \begin{bmatrix}
	1 & 1 \\
	1 & -1
\end{bmatrix},\ CX\define \begin{bmatrix}
	1 & 0 & 0 & 0 \\
	0 & 1 & 0 & 0 \\
	0 & 0 & 0 & 1 \\
	0 & 0 & 1 & 0
\end{bmatrix},\ 
B\define \begin{bmatrix}
	1 & 0 & 0 & 0 \\
	0 & \frac{1}{\sqrt{2}} & \frac{1}{\sqrt{2}} & 0 \\
	0 & \frac{1}{\sqrt{2}} & -\frac{1}{\sqrt{2}} & 0 \\
	0 & 0 & 0 & 1
\end{bmatrix}
.\]
A (projective) quantum {\it measurement} $\m$ is described by a
collection $\{P_i : i\in O\}$ of projectors (hermitian operators with eigenvalues being either 0 or 1) in the state space $\h$, where $O$ is the set of measurement outcomes. It is required that the
measurement operators $P_i$'s satisfy the completeness equation
$\sum_{i\in O}P_i = I_\h$, the identity operator on $\h$. If the system was in state $\rho$ before measurement, then the probability of observing the outcome $i$ is given by
$p_i=\tr(P_i\rho),$ and the state of the post-measurement system
becomes $\rho_i = P_i\rho P_i/p_i$ whenever $p_i>0$. Sometimes we use
a hermitian operator $M$ in $\lh$ called \emph{observable} to represent a projective measurement. To be specific, let 
$
M=\sum_{m\in \mathit{spec}(M)}mP_m
$ 
where $\mathit{spec}(M)$ is the set of eigenvalues of $M$, and $P_m$ the projector onto the eigenspace associated with $m$. Then the projective measurement determined by $M$ is $\{P_m : m\in \mathit{spec}(M)\}$. Note that by the linearity of the trace function,
the expected value of outcomes when $M$ is measured on state $\rho$ is calculated as
$
\sum_{m\in \mathit{spec}(M)} m\cdot  \tr(P_m \rho) = \tr(M\rho).
$

Finally, the dynamics that can occur in a (not necessarily closed) physical system are described by a completely positive and trace-nonincreasing super-operator. Here a linear operator $\e$ from $\l(\h_1)$ to $\l(\h_2)$ is called a \emph{super-operator}, and it is said to be (1) \emph{positive} if it maps positive operators on $\h_1$ to positive operators on $\h_2$; (2) \emph{completely positive} if $\mathcal{I}_\h\otimes \e$ is positive for all finite dimensional Hilbert space $\h$, where $\mathcal{I}_\h$ is the identity super-operator on $\lh$; (3) \emph{trace-preserving} (resp. \emph{trace-nonincreasing}) if 
$\tr(\e(A)) = \tr(A)$ (resp. $\tr(\e(A)) \leq \tr(A)$ for any positive operator $A\in \l(\h_1)$. For simplicity, in this paper all super-operators are assumed to be completely positive and trace-nonincreasing unless otherwise stated. Typical examples of trace-preserving super-operators include the unitary transformation $\e_U(\rho)\define U\rho U^\dag$ and the state transformation $
 \e_\m(\rho) \define \sum_{i\in O} p_i \rho_i = \sum_{i\in O} P_i \rho P_i 
 $ caused by a measurement $\{P_i : i\in O\}$, when all the post-measurement states are taken into account. 
Note that for each $i$, the evolution $
\e_i(\rho) \define  p_i \rho_i = P_i \rho P_i 
$
is also a super-operator; it is in general trace-nonincreasing. Denote by $\mathcal{SO}(\h)$ the set of super-operators on $\h$.

	\subsection{Introduction to quantum logic}
	Birkhoff and von Neumann proposed quantum logic in~\cite{Birkhoff1936TheLO}, and since then there has been a lot of effort exploring the proof systems for Hilbert subspaces. However, existing works have not leveraged the power of quantum logic to reason about quantum programs. This work,  for the first time, uses the Sasaki implication and Sasaki conjunction (a.k.a. Sasaki hook and Sasaki projection) in quantum refinement calculus, instead of the direct calculation of subspaces. In this section, we give a brief introduction to quantum logic.

	Given a finite dimensional Hilbert space $\h$, let $\s(\h)$ be the set orthogonal projectors (operators with eigenvalues being either 0 or 1) on $\h$. Note that there is a one-to-one correspondence between the projectors in $\h$ and the subspaces in $\h$. In this paper, we do not distinguish these two notions for the sake of simplicity. It is well known that the poset 
	$$\left(\s(\h), \le, \wedge, \vee, \bot\define\{0\}, \top\define\s(\h), {}^\bot\right)$$ forms an atomic, completely atomistic, complete, orthocomplemented, orthomodular, and non-distributive\footnote{Suppose $\l$ is a lattice. $\l$ is called bounded if it has the least element $\bot$ and the greatest element $\top$. An element $a\in \l$ is an atom if $a \gt \bot$ and for any $b$, $b \le a$ implies $b = \bot$ or $b = a$. $\l$ is atomic if for every $b \gt \bot$, there exists an atom $a$ such that $a \le b$. $\l$ is completely atomistic if every $b \gt \bot$, $b = \bigvee_i a_i$ where $a_i$ range over all atoms less equal to $b$. $\l$ is complete if every subset of $\l$ has both a greatest lower bound and a least upper bound. $\l$ is modular if it satisfies the modular law, that is, for all $a,b,c$, if $a\le c$ then $a \vee (b \wedge c) = (a \vee b) \wedge c$. $\l$ is orthocomplemented if for any $a\in \l$, it holds $a^\bot \vee a = 1 $ and $a^\bot \wedge a = 0$ (complement law), $a^{\bot\bot}=a$ (involution law), and if $a \le b$ then $b^\bot \le a^\bot$ (order-reversing). $\l$ is orthomodular if it satisfies the orthomodular law, i.e., for all $a,b$, if $a\le b$ then $a \vee (a^\bot \wedge b) = b$. It is evident that distributivity implies modularity, which, in turn, implies orthomodularity. } 
	lattice \cite{Re13} where the meet $\wedge$ and the join $\vee$ are defined as the intersection and sum (union) of subspaces, respectively, and ${}^\bot$ denotes the orthogonal complement in $\h$. Furthermore, it is a modular lattice given that we have assumed that $\h$ is finite-dimensional.
	Finally, the L\"{o}wner order $\le$ between projectors coincides with the set inclusion order between the corresponding subspaces; that is, $P\le Q$ (regarding as projectors) iff $P\subseteq Q$ (regarding as subspaces). 
	
	\begin{definition}[Sasaki implication and conjunction]
		Given projectors $P,Q$, the Sasaki implication $\rightsquigarrow$ and the Sasaki conjunction $\doublecap$ are defined as:
		\[
		P \rightsquigarrow Q = P^\bot\vee(P\wedge Q)\qquad \mbox{and}\qquad P \doublecap Q = P \wedge (P^\bot \vee Q).
		\]
	\end{definition}
\newcommand{\oA} {{A^\bot}}
\newcommand{\oB} {{B^\bot}}
\newcommand{\oC} {{C^\bot}}
\newcommand{\si}{\;\!\!\!\rightsquigarrow\;\!\!\!}
\renewcommand{\sc}{\doublecap}
	Sasaki implication can be regarded as the quantum extension of implication since it satisfies Birkhoff-von Neumann requirement $P \rightsquigarrow Q = I$ if and only if $P \sqsubseteq Q$, and the compatible import-export law, i.e., if $P$ commutes with $Q$\footnote{$P$ commutes with $Q$ if $P = (P \wedge Q) \vee (P \wedge Q^\bot)$.}, then for any $R$, $P \wedge Q \sqsubseteq R$ if and only if $P \sqsubseteq Q \rightsquigarrow R$. Sasaki conjunction has also been widely explored since it is the skew join of the lattice. It is known that the Sasaki conjunction is left-distributive with respect to the join of subspaces, whereas the Sasaki implication is left-distributive with respect to the meet of subspaces. Here, we list some useful properties for $\rightsquigarrow$ and $\doublecap$.
    \begin{lemma}\label{lem:propsasaki}
      Suppose $A, B, C$ are projectors. Then
      \begin{enumerate}
        \item $(A\sc B)^\bot = A\si\oB,\qquad(A\si B)^\bot = A\sc \oB$;
        \item $A\sc (B_1\vee B_2) = (A\sc B_1)\vee (A \sc B_2),\qquad A\si (B_1\wedge B_2) = (A\si B_1)\wedge (A \si B_2)$;
        \item $A\sc \bigvee_i B_i = \bigvee_i(A\sc B_i),\qquad A\si \bigwedge_i B_i = \bigwedge_i (A\si B_i)$ \quad for a finite or infinite set of $i$;
        \item $B_1\le B_2$ implies $A\sc B_1\le A\sc B_2$ and $A\si B_1\le A\si B_2$;
        \item $A\le B\Leftrightarrow A\si B = I$,\qquad $B\le \oA \Leftrightarrow A\sc B = 0$;
        \item $\oA\le B \Leftrightarrow \oB\le A \Leftrightarrow A\si B = B$,\qquad  $\oA\vee \oB = I \Leftrightarrow A\wedge B = 0 \Leftrightarrow A\si B = \oA$; 
        \item $B\le A \Leftrightarrow \oA\le\oB \Leftrightarrow A\sc B = B$,\qquad  $\oA\vee B = I \Leftrightarrow A \wedge \oB = 0 \Leftrightarrow A\sc B = A$; 
        \item $B\si \left(A\si \left[(A\si B)\si C\right] \right) = A\si \left[(A\si B)\si C\right]$.
      \end{enumerate}
      Further properties are shown in Fig. \ref{fig:sasaki-rules}, c.f. \cite{Gabriels2017}.
      \begin{figure}
        \footnotesize
        {
            \renewcommand{\arraystretch}{1.5}
            \begin{flushleft}\fbox{\textsf{Sasaki conjunction including two variables}}\end{flushleft}
            \begin{align*}
                0 = (A \sc A^\bot) &= (A\sc B)\sc \oA = A\sc (\oA\sc B) \qquad A = A\sc A \\
                A\sc B = A\sc (&A\sc B) = A\sc (B\sc A) = (A\sc B)\sc A = (A\sc B)\sc B = (A\sc B)\sc (A\sc B)= (A\sc B)\sc (B\sc A) \\
                A \sc (B \sc \oA) &= (A \sc B) \sc \oB = (A \sc B) \sc (A \sc \oB) = A \sc (B \sc (A \sc \oB)) = (A \sc B) \sc (B \sc \oA) \\
                &= (A \sc B) \sc (\oB \sc A) = ((A \sc B) \sc \oB) \sc A = ((A \sc B) \sc \oB) \sc B = A \sc (B \sc (\oA \sc \oB)) \\
                &= (A \sc (B \sc \oA)) \sc \oB = (A \sc (B \sc \oA)) \sc B = A \sc ((A \sc B) \sc \oB) = A \sc ((B \sc A) \sc \oA) 
            \end{align*}
            \begin{flushleft}\fbox{\textsf{Sasaki conjunction including three variables}}\end{flushleft}
            \begin{align*}
                (A\sc B)\sc C &= (A\sc B)\sc (A\sc C) = (A\sc B)\sc (C\sc A) = A\sc((A\sc B)\sc C)\\
                &= ((A\sc B)\sc C) \sc A = ((A\sc B)\sc C) \sc B = ((A\sc B)\sc C) \sc C \\
                A\sc(B\sc C) &= (A\sc B)\sc(B\sc C) = (A\sc (B\sc C))\sc A = (A\sc (B\sc C))\sc B\\
                &= A\sc((B\sc A)\sc C) = A\sc((B\sc C)\sc A) = A\sc(A\sc(B\sc C))
            \end{align*}
            \begin{flushleft}\fbox{\textsf{Sasaki implication including two variables}}\end{flushleft}
            \begin{align*}
                & A\si(A\si B) = A\si B && A\si(\oA\si B) = I && A\si(B\si\oA) = A\si\oB && A\si(B\si A) = (A\sc B)\si B\\
                & (A\si B)\si A = A && (A\si B)\si \oA = \oA\vee\oB && (A\si B)\si B = \oA\si B && (A\si B)\si \oB = (\oA\si\oB)\sc\oB
            \end{align*}
            \begin{flushleft}\fbox{\textsf{Sasaki implication and Sasaki conjunction including two variables}}\end{flushleft}
            \begin{align*}
                & A\si(A\sc B) = \oA\vee B && A\sc (A\si B) = A\wedge B  && (A\sc B)\si A = I              && (A\si B)\sc A = A\wedge B \\
                & A\si(\oA\sc B) = \oA     && A\sc (\oA\si B) = A        && (A\sc B)\si \oA = A\si\oB      && (A\si B)\sc \oA = \oA \\
                & A\si(B\sc A) = A\si B    && A\sc (B\si A) = A          && (A\sc B)\si B = A\si\oB        && (A\si B)\sc B = (\oA\si\oB)\si B \\
                & A\si(B\sc \oA) = \oA     && A\sc (B\si \oA) = A\sc \oB && (A\sc B)\si \oB = A\si(B\si A) && (A\si B)\sc \oB = \oA\sc\oB
            \end{align*}
        }
        \caption{Properties of Sasaki conjunction and Sasaki implication.}
        \label{fig:sasaki-rules}
    \end{figure}
    \end{lemma}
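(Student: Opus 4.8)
The plan is to treat all eight items uniformly as identities in the orthomodular lattice $\s(\h)$ and to isolate a small toolkit from which everything follows by formal manipulation. Two facts need only the orthocomplemented-lattice structure and I would dispatch them first. For item (1), expanding definitions gives $(A\sc B)^\bot = (A\wedge(\oA\vee B))^\bot = \oA \vee (A\wedge \oB) = A\si\oB$, using only the De Morgan laws for $\wedge,\vee,{}^\bot$ and the involution $A^{\bot\bot}=A$; the second identity of (1) is obtained from the first by substituting $B\mapsto\oB$ and orthocomplementing. This duality is the organizing principle of the whole proof: since $(A\sc B)^\bot = A\si\oB$, every statement about $\si$ is the orthocomplement of a corresponding statement about $\sc$, so in items (2), (3), (5), (6), (7) it suffices to prove the $\sc$-half and dualize. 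Item (4) is then immediate without orthomodularity: if $B_1\le B_2$ then $\oA\vee B_1\le \oA\vee B_2$, whence $A\sc B_1 = A\wedge(\oA\vee B_1)\le A\wedge(\oA\vee B_2) = A\sc B_2$ by monotonicity of $\wedge$ and $\vee$.

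Orthomodularity enters through a single \emph{projection lemma}: if $c\le A$ then $A\wedge(\oA\vee c)=c$, i.e. $A\sc c = c$. From this I would establish the Sasaki adjunction $A\sc B \le C \iff B \le A\si C$, a property equivalent to orthomodularity. Granting the adjunction, $A\sc(-)$ is a left adjoint in the complete lattice $\s(\h)$ and hence preserves arbitrary joins, which yields the $\sc$-halves of (2) and (3) uniformly in the finite and infinite cases; dually $A\si(-)$ is a right adjoint and preserves arbitrary meets, giving the $\si$-halves (these also follow from item (1)).

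For the order characterisations I would argue by short equivalences built on the projection lemma. For (5): $A\le B$ gives $A\wedge B=A$ and hence $A\si B=\oA\vee A=I$; conversely $A\si B=I$ forces $A=A\wedge I=A\wedge(\oA\vee(A\wedge B))=A\wedge B$ by the projection lemma applied to $A\wedge B\le A$, so $A\le B$, and the $\sc$-half is dual. Items (6) and (7) go the same way, unfolding the definitions and using the orthomodular law; for instance $A\si B=B\iff\oA\le B$ is precisely the orthomodular law $B=\oA\vee(A\wedge B)$, and $A\sc B=B\iff B\le A$ is the projection lemma together with the trivial converse $B=A\sc B\le A$. Finally, item (8) reduces to an inclusion: writing $Z=A\si\left[(A\si B)\si C\right]$, item (6) makes $B\si Z=Z$ equivalent to $\oZ\le B$, and item (1) gives $\oZ = A\sc\bigl((A\si B)\sc\oC\bigr)$. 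Since $(A\si B)\sc\oC \le A\si B$ and $A\sc(A\si B)=A\wedge B$ (itself a one-line consequence of the projection lemma), monotonicity yields $\oZ \le A\sc(A\si B) = A\wedge B \le B$, as required.

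The main obstacle is the projection-lemma/adjunction step, since it is the only place where orthomodularity rather than mere orthocomplementation is genuinely used, and where one must be careful to get the variance of the adjunction right; everything else is bookkeeping driven by the duality of item (1) and by monotonicity. The further identities collected in Fig.~\ref{fig:sasaki-rules} are mechanical consequences of this same toolkit and may be taken from \cite{Gabriels2017}.
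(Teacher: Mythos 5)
Your treatment of the enumerated items (1)--(8) is correct. For (1)--(7) the paper offers no argument at all (it declares them trivial and cites standard references on orthomodular lattices), so your systematic development --- De Morgan duality extracted from item (1), the projection lemma $c\le A \Rightarrow A\doublecap c = c$ as the single point where orthomodularity enters, and the Sasaki adjunction $A\doublecap B\le C \iff B\le A\rightsquigarrow C$ making $A\doublecap(-)$ a join-preserving left adjoint --- is a legitimate and in fact better-documented route to the same facts. For item (8) your argument coincides with the paper's: both reduce $B\rightsquigarrow Z = Z$, where $Z = A\rightsquigarrow\left[(A\rightsquigarrow B)\rightsquigarrow C\right]$, to $Z^\bot\le B$ via item (6), and establish the latter by the chain $Z^\bot = A\doublecap\left[(A\rightsquigarrow B)\doublecap C^\bot\right]\le A\doublecap(A\rightsquigarrow B) = A\wedge B\le B$ (the paper cites item (5) at this step, which appears to be a slip for (6); you invoke the right item). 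The one place you are too quick is the figure of further identities, which is formally part of the lemma: the paper itself points out that Theorem 4.1 of the cited reference does \emph{not} cover the three-variable identity $A\doublecap(B\doublecap C) = A\doublecap((B\doublecap A)\doublecap C)$, and proves it separately using a Hilbert-space-specific characterization of $A\doublecap B$ as the image $\{A|\phi\rangle : |\phi\rangle\in B\}$, together with the orthogonal decomposition $B = (B\wedge A^\bot)\vee(B\doublecap A)$ to conclude $AB = A(B\doublecap A)$ as operators. A blanket deferral of the figure to the literature therefore leaves that one identity unestablished; everything else in your plan goes through.
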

    \begin{proof}
        (1) - (7) are trivial; see \cite{beran2012orthomodular,Re13}.
        For (8), it suffices to show $$B \ge A\wedge B = A\sc (A\si B) \ge A\sc[(A\si B)\sc \oC] = (A\si [(A\si B) \si C])^\bot.$$ The result then follows from (5).
        
        We now turn to the properties shown in Fig.~\ref{fig:sasaki-rules}. Those with only two variables can be checked via \cite{hycko2005computation}. The equalities of the Sasaki conjunction with three variables can be easily verified using Theorem 4.1 in \cite{Gabriels2017}, except $A\sc(B\sc C) = A\sc((B\sc A)\sc C)$. To show this one, we first claim that
        $$|\psi\>\in A\sc B \quad\mbox{if and only if}\quad\exists\ |\phi\>\in B, \mbox{ s.t. } |\psi\> = A|\phi\>.$$
        For the $\Rightarrow$ direction, suppose $|\psi\>\in A \wedge (\oA \vee B)$. Then $A|\psi\> = |\psi\>$
          and $\exists\ |\psi_A^\bot\>\in\oA$ and $|\psi_B\>\in B$ such that $|\psi\> = |\psi_A^\bot\> + |\psi_B\>$. Note that $A|\psi\> = |\psi\> = A(|\psi_A^\bot\> + |\psi_B\>) = A|\psi_B\>$. It suffices to take $|\phi\> \define |\psi_B\>$. 
          
         For the opposite direction, we have $|\psi\> = A|\phi\> \in A$. Note that $A|\phi\> = |\phi\> - \oA|\phi\>$, $|\phi\>\in B$, and $\oA|\phi\>\in \oA$. Thus, $|\psi\> = A|\phi\> \in B\vee \oA$, and we conclude that $|\psi\>\in A \wedge (\oA \vee B) = A \doublecap B$.
          
        Using the claim twice, we have $|\psi\>\in A\sc(B\sc C)$ if and only if $\exists\ |\phi\>\in C, |\psi\> = AB|\phi\>$, and so $|\psi\>\in A\sc((B\sc A)\sc C)$ if and only if $\exists\ |\phi\>\in C, |\psi\> = A(B\sc A)|\phi\>$. However, note that $B = (B\wedge \oA)\vee (B\sc A)$ and $(B\wedge \oA) \perp (B\sc A)$. Thus, $AB = A((B\wedge \oA) \vee (B\sc A)) = A(B\sc A)$. 
    \end{proof}

	\section{Deferred proofs}
 \label{sec:deferred-proofs}
		
		\begin{proof}[Proof of Lemma~\ref{lem:densemantics}]
			The only nontrivial case is for the while loop, in which we have to prove that the infinite sum
			$
			\sum_{i=0}^{\infty} \p^\bot_{\bar{q}}\circ \e_i \circ \p_{\bar{q}}\circ \ldots \e_1\circ \p_{\bar{q}}
			$
			always converges for any choice of $\e_1, \e_2, \ldots$ in $\sem{S}$. However, this is direct from the facts that (1) the partial sums are nonincreasing with respect to the partial order $\le$ defined as follows: $\e\le \f$ iff $\f-\e$ is completely positive; and (2) the set of super-operators on $\h_{\qv(S)}$ forms a CPO under $\le$~\cite{ying2016foundations}.
		\end{proof}
		
		\begin{proof}[Proof of Lemma~\ref{lem:equivpcorrectness}]
			The first `iff' follows from the equivalent condition Eq.~\eqref{eq:corequiv} of $\models \ass{P}{S}{Q}$ while the second `iff' from the definition equation Eq.~\eqref{eq:defcor}.
		\end{proof}
		
		\begin{proof}[Proof of Lemma~\ref{lem:wlp}]
			
			Clauses (1)-(3) are easy to verify. 
			For (4), it suffices to prove that\\ $\mathcal{N}(PR^\bot P))=P^\bot\vee (P\wedge R)$. This can be seen from the fact that for any normalized state $|\eta\> = |\eta_1\> + |\eta_2\>$ where $|\eta_1\> \in P^\bot$ and $|\eta_2\>\in P$, 
			\begin{align*}
				\<\eta|P R^\bot P|\eta\> = 0 &\quad \mbox{iff}\quad \<\eta_2|R^\bot|\eta_2\> = 0\\
				&\quad \mbox{iff}\quad |\eta_2\> \in R\\
				&\quad \mbox{iff}\quad |\eta\> \in P^\bot \vee (P\wedge R).
			\end{align*}
			
			For (5), $rhs\le lhs$ follows from the fact that $\supp{\e(P)} \le Q$ iff $P\le \mathcal{N}(\e^\dag(Q^\bot))$. For the reverse part, there are two nontrivial cases to consider:
			\begin{itemize}
			\item Case 1: $Q\le R$ and $R\neq I$. Take $|\phi\>\in R^\bot$ and
				$
				\e \define 
					Set_{\phi}\circ \p^\bot_{\bar{q}} 
				$
				where $Set_{\phi}$ is the super-operator which sets the system $\bar{q}$ in state $|\phi\>$ and $\p^\bot$ denotes the super-operator associated with the projector $P^\bot$.
				Then $\supp{\e(P)} = 0 \le Q$, and so $\e\in  \sem{[P,Q]_{\bar{q}}}$. On the other hand, for any normalized state $|\eta\>\in \h_{\bar{q}}$,
				\begin{align*}
					\<\eta|\e^\dag(R^\bot)|\eta\> = 0 &\quad \mbox{iff}\quad \tr(R^\bot \e(|\eta\>\<\eta|)) = 0\\
					&\quad \mbox{iff}\quad \supp{\e(|\eta\>\<\eta|)} \le R\\
					&\quad \mbox{iff}\quad |\eta\> \in P.
				\end{align*} 
				Thus $wlp.[P,Q]_{\bar{q}}.R \le \sker{\e^\dag(R^\bot)} =  P$.
				\item Case 2: $Q\not\le R$ and $R\neq I$. Thus $Q\neq 0$. Take a normalised state $|\psi\>\in Q$ but $|\psi\>\not \in R$, and 
				$\e \define Set_{\psi}$.
				Then $\supp{\e(P)} \le Q$ and $\tr(R^\bot |\psi\>\<\psi|) >0$, and so $\e\in  \sem{[P,Q]_{\bar{q}}}$. Furthermore, for any normalized state $|\eta\>$, 
				\[
				\<\eta| \e^\dag(R^\bot)|\eta\> = \tr(R^\bot \e(|\eta\>\<\eta|)) = \tr(R^\bot |\psi\>\<\psi|) >0.
				\] Thus $\mathcal{N}(\e^\dag(R^\bot))=0$, and so $wlp.[P,Q]_{\bar{q}}.R =0$ as well.
			\end{itemize}
			Clause (6) follows easily from the fact that $$\sker{p\e_0^\dag(R^\bot)+(1-p)\e_1^\dag(R^\bot)}  = \sker{\e_0^\dag(R^\bot)} \wedge \sker{\e_1^\dag(R^\bot)}$$
			for any $0<p<1$.
			For (7), we compute that 
			\begin{align*}
				wlp.S_0.(wlp.S_1.R) & = \bigwedge\left\{\sker{\e_0^\dag\left(\bigvee\left\{\supp{\e_1^\dag(R^\bot)} : \e_1\in \sem{S_1}\right\}\right)}: \e_0\in \sem{S_0}\right\}\\
				& = \bigwedge\left\{\sker{\e_0^\dag\left(\supp{\e_1^\dag(R^\bot)}\right)} : \e_0\in \sem{S_0}, \e_1\in \sem{S_1}\right\}\\
				&= \bigwedge\left\{\sker{\e_0^\dag\circ \e_1^\dag(R^\bot)} : \e_0\in \sem{S_0}, \e_1\in \sem{S_1}\right\}\\
				&=wlp.(S_0;S_1).R
			\end{align*}
			where the first equality follows from the fact that $\supp{A}^\bot = \sker{A}$ for any hermitian operator $A$, the second from the fact that $\sker{\e(\bigvee_i R_i)} = \bigwedge_i \sker{\e(R_i)}$, and the third from $\sker{\e(\supp{\f(T)})}
			= \sker{\e\circ \f(T)}$, all of which are easy to verify.
			
			For (8), we compute that 
			\begin{align*}
				lhs &  = \bigwedge\left\{\sker{P\e_1^\dag(R^\bot)P + P^\bot\e_0^\dag(R^\bot)P^\bot} : \e_0\in \sem{S_0}, \e_1\in \sem{S_1}\right\}\\
				&=  \bigwedge\left\{\sker{P\e_1^\dag(R^\bot)P} : \e_1\in \sem{S_1}\right\} \wedge  \bigwedge\left\{\sker{P^\bot\e_0^\dag(R^\bot)P^\bot} : \e_0\in \sem{S_0}\right\}\\
				&= wlp.\left(\assert P[\bar{q}]; S_1\right).R\ \wedge\  wlp.\left(\assert P^\bot[\bar{q}]; S_0\right).R\\
				&=wlp.\assert P[\bar{q}].\left(wlp.S_1.R\right)\ \wedge\  wlp.\assert P^\bot[\bar{q}].\left(wlp.S_0.R\right)\\
				&=rhs
			\end{align*}
			
			For~\ref{lemitem:wlprn}, first let $S_0 \define \abort$ and $S_{n+1}\define \iif\ P[\bar{q}]\ \then\ S;S_n\ \eelse\ \sskip\ \pend$ for $n\geq 0$. It is then easy to prove by induction that for any $n$, $wlp.S_n.R =  R_n$ and $\sem{S_n} = \sum_{i=0}^n \sem{(\assert P[\bar{q}]; S)^i; \assert{P^\bot[\bar{q}]}}$. Thus
			\begin{align*}
				lhs &  = \bigwedge\left\{\sker{\sum_{i=0}^\infty \left[\p^\bot_{\bar{q}}\circ \e_i \circ \p_{\bar{q}}\circ \ldots \e_1\circ \p_{\bar{q}}\right]^\dag(R^\bot)} : 
				\forall i\geq 1, \e_i\in \sem{S}\right\}\\
				&=  \bigwedge_{n=0}^\infty \bigwedge\left\{\sker{\sum_{i=0}^n \left[\p^\bot_{\bar{q}}\circ \e_i \circ \p_{\bar{q}}\circ \ldots \e_1\circ \p_{\bar{q}}\right]^\dag(R^\bot)} : \e\in \sem{S}\right\}\\
				&= \bigwedge_{n=0}^\infty wlp.S_n.R = \bigwedge_{n=0}^\infty R_n = rhs. \qedhere
			\end{align*}	
		\end{proof}
		
		\begin{proof}[Proof of Lemma~\ref{lem:spost}]
			
			Again, clauses (1)-(3) are easy to verify. The proof of (4) is similar to Lemma~\ref{lem:wlp}(4) by noting that $\supp{PRP}^\bot = \sker{PRP} $. 			
			For (5), the part that $lhs\le rhs$ is easy. For the reverse part, there are two nontrivial cases to consider:
			\begin{itemize}
				\item Case 1: $R\le P$ and $R\neq 0$. Thus $P\neq 0$. Take $\rho$ such that $\supp{\rho}=Q$, and let $\e \define Set_{\rho}$ be the super-operator which sets the system $\bar{q}$ in state $\rho$.
				Then $Q=\supp{\e(P)} = \supp{\e(R)}$. Thus $\e\in  \sem{[P,Q]_{\bar{q}}}$, and $sp.[P,Q]_{\bar{q}}.R \ge  \supp{\e(R)} = Q$.
				\item Case 2: $R\not\le P$ and $R\neq 0$. Take $\rho$ such that $\supp{\rho}=I$, and 
				$
				\e \define Set_{\rho}\circ \p^\bot_{\bar{q}}
				$
				where $\p^\bot$ is the super-operator corresponding to the projector $P^\bot$.
				Then $\supp{\e(P)} = 0\le Q$, and hence $\e\in  \sem{[P,Q]_{\bar{q}}}$. However, $\supp{\e(R)} = I$. Thus, $sp.[P,Q]_{\bar{q}}.R \ge \supp{\e(R)} =I$.
			\end{itemize}
			(6) follows easily from the fact that for any partial density operators $\rho$ and $\sigma$, $\supp{\rho + \sigma} = \supp{\rho} \vee \supp{\sigma}$.
			
			For (7), we compute that 
			\begin{align*}
				sp.S_1.(sp.S_0.R) & = \bigvee\left\{\supp{\e_1\left(\bigvee\left\{\supp{\e_0(R)} : \e_0\in \sem{S_0}\right\}\right)}: \e_1\in \sem{S_1}\right\}\\
				& = \bigvee\left\{\supp{\e_1(\supp{\e_0(R)})} : \e_0\in \sem{S_0}, \e_1\in \sem{S_1}\right\}\\
				&= \bigvee\left\{\supp{\e_1\circ \e_0(R)} : \e_0\in \sem{S_0}, \e_1\in \sem{S_1}\right\}\\
				&=sp.(S_0;S_1).R
			\end{align*}
			where the second equality follows from the fact that $\supp{\e(\bigvee R_i)} = \bigvee \supp{\e(R_i)}$ while the third equality from $\supp{\e(\supp{\f(R)})}
			= \supp{\e\circ \f(R)}$, both of which are easy to verify.
			
			For (8), we compute that 
			\begin{align*}
				lhs &  = \bigvee\left\{\supp{\e_1(PRP) + \e_0(P^\bot RP^\bot)} : \e_0\in \sem{S_0}, \e_1\in \sem{S_1}\right\}\\
				&=  \bigvee\left\{\supp{\e_1(\supp{PRP})} : \e_1\in \sem{S_1}\right\} \vee  \bigvee\left\{\supp{\e_0(\supp{P^\bot RP^\bot})} : \e_0\in \sem{S_0}\right\}\\
				&=rhs
			\end{align*}
			where the last equality follows from the fact that $\supp{PRP} = P\doublecap R$ which we have proved in (4).
			
			The proof of (9) is quite involved, and it consists of three steps:
			\begin{itemize}
				\item Let $S_n$ be as defined in the proof of Lemma~\ref{lem:wlp}\ref{lemitem:wlprn}. Then 
				\begin{align*}
					lhs &  = \bigvee\left\{\supp{\sum_{i=0}^\infty \p^\bot_{\bar{q}}\circ \e_i \circ \p_{\bar{q}}\circ \ldots \e_1\circ \p_{\bar{q}}(R)} : 
					\forall i\geq 1, \e_i\in \sem{S}\right\}\\
					&=  \bigvee_{n=0}^\infty \bigvee\left\{\supp{\sum_{i=0}^n \p^\bot_{\bar{q}}\circ \e_i \circ \p_{\bar{q}}\circ \ldots \e_1\circ \p_{\bar{q}}(R)} : \e\in \sem{S}\right\}\\
					&= \bigvee_{n=0}^\infty sp.S_n.R.
				\end{align*}
				\item Let $f(X) \define sp.S.(P\doublecap X)$ for any projector $X$, so that $R_{n+1} = R\vee f(R_n)$. Then we have
				\begin{align*}
					f(R_1\vee R_2) &= sp.S.(P\doublecap(R_1\vee R_2)) = 
					sp.S.((P\doublecap R_1)\vee (P\doublecap R_2)) \\
					&= \bigvee\left\{\supp{\e((P\doublecap R_1)\vee (P\doublecap R_2))} : \e\in\sem{S}\right\} \\
					&= \bigvee\left\{\supp{\e(P\doublecap R_1)}\vee \supp{\e(P\doublecap R_2)} : \e\in\sem{S}\right\} \\
					&= \bigvee\left\{\supp{\e(P\doublecap R_1)} : \e\in\sem{S}\right\} \vee
					\bigvee\left\{\supp{\e(P\doublecap R_2)} : \e\in\sem{S}\right\}\\
					&= sp.S.(P\doublecap R_1)\vee sp.S.((P\doublecap R_2)\\
					& = f(R_1)\vee f(R_2).
				\end{align*}
				\item For any projector $X$, let $W_0(X) \triangleq 0,\ W_{n+1}(X) \triangleq X\vee f(W_n(X))$. Note from the second step that $$W_{n+1}(X) = X\vee f(X) \vee \cdots \vee f^n(X) = X\vee W_n(f(X)).$$
				We now prove by induction that for any $n\geq 0$ and projector $T$, $sp.S_n.T = P^\bot\doublecap W_n(T)$. The base case is trivial and the inductive step goes as follows:
				\begin{align*}
					sp.S_{n+1}.T &= 
					sp.S_n.(sp.S.(P\doublecap T)) \vee (P^\bot\doublecap T) & \mbox{(Definitions)}\\
					&= (P^\bot\doublecap W_n(f(T))) \vee (P^\bot\doublecap T) & \mbox{(Induction hypothesis)}\\
					&= P^\bot\doublecap (W_n(f(T)) \vee  T) & \mbox{(Lemma~\ref{lem:propsasaki})} \\
					&= P^\bot\doublecap W_{n+1}(T).  & \mbox{(Proven fact)}
				\end{align*}
				With this, the final result $lhs = rhs$ follows from the first step and the observation that $ R_n = W_n(R)$ for all $n\geq 0$.  \qedhere
			\end{itemize}
		\end{proof}

  		\begin{proof}[Proof of Theorem~\ref{thm:refinewlpspc}]
			From the definition of weakest liberal precondition (Definition~\ref{def:wlp}), $\models \ass{P}{S}{Q}$ iff $P\le wlp.S.Q$. Hence, the first part of the theorem follows straightforwardly. The second part can be reasoned in a similar manner.
		\end{proof}

	\begin{proof}[Poof of Theorem~\ref{thm:psoundpre}]
			The soundness of the structural rules in Fig.~\ref{fig:pstructurerules} and the common rules in Fig.~\ref{fig:prulespres} can be easily proven from Lemma~\ref{lem:wlp} or Lemma~\ref{lem:spost}.
			
			For the rules based on the weakest preconditions in Fig.~\ref{fig:prulespres}, we take only the last two as examples; the others are easier. Consider (\textsc{Wpc-Pres-Cond}) first. Let $R$ be any projector. Then
			\[wlp.rhs.R =  (P\rightsquigarrow wlp.[P_1,Q].R)           \wedge (P^\bot \rightsquigarrow wlp.[P_0,Q].R).
			\]
			There are two nontrivial case to consider:
			\begin{itemize}
				\item Case 1: $Q\le R$ and $R\neq I$. Then 
				\[
				wlp.rhs.R =(P\rightsquigarrow P_1) \wedge (P^\bot \rightsquigarrow P_0)=wlp.lhs.R.
				\]
				\item Case 2: $Q\not\le R$ and $R\neq I$. Then 
				\[	wlp.rhs.R =(P\rightsquigarrow 0)              \wedge (P^\bot \rightsquigarrow 0)=             P^\bot\wedge P = 0 = wlp.lhs.R. 
				\]
			\end{itemize}
			Now, for (\textsc{Wpc-Pres-While}), let $T\define (P\rightsquigarrow Q) \wedge (P^\bot \rightsquigarrow R)$. We claim that for any $n\geq 0$, $T\le R_n$ where $R_0\define 0$ and $R_{n+1} \define \left(P\rightsquigarrow wlp.[Q, T].R_n\right)\wedge\left(P^\bot\rightsquigarrow R\right)$ is defined in a similar way as in Lemma~\ref{lem:wlp}\ref{lemitem:wlprn}.
			The case of $n=0$ is trivial. Furthermore, if $T\le R_n$ then
			\begin{align*}
				R_{n+1}  &=  \left(P\rightsquigarrow wlp.[Q, T].R_n\right) \wedge\left(P^\bot\rightsquigarrow R\right)\\
				&= \begin{cases}
					P^\bot\rightsquigarrow R & \mbox{if } R_n=I\\
					(P\rightsquigarrow Q) \wedge (P^\bot \rightsquigarrow R) & \mbox{o.w.}\\
				\end{cases}\\		
				&\ge T.
			\end{align*}
			With this claim, we have $T\le \bigwedge_{n\geq 0} R_n = wlp.rhs.R$, and the result follows from Theorem~\ref{thm:generalref}. 
			
			For the rules based on the strongest postconditions in Fig.~\ref{fig:prulespres}, again we only take the last two as examples. For (\textsc{Spc-Pres-Cond}), 		
			we compute
				\[
					sp.rhs.P =  \left(sp.[R\doublecap P,Q].(R\doublecap P)\right) \vee \left(sp.[R^\bot \doublecap P, Q].(R^\bot\doublecap P)\right) \le Q
				\]
			and the result follows from Theorem~\ref{thm:generalref}.  
			Now, for (\textsc{Spc-Pres-While}), we have from Lemma~\ref{lem:spost} that $sp.rhs.Inv =  \bigvee_{n\geq 0} \left(P^\bot\doublecap R_n\right)$ where $R_0\define 0$ and for $n\geq 0$, 
			\[
				R_{n+1} \define Inv\vee sp.\left[P\doublecap Inv, Inv\right]_{\bar{q}}.\left(P\doublecap R_n\right).
			\]
			We prove by induction that $R_n \le Inv$. The case of $n=0$ is trivial. For the induction step, note that $R_n \le Inv$ implies $P\doublecap R_n \le P\doublecap Inv$. Then $sp.\left[P\doublecap Inv, Inv\right]_{\bar{q}}.\left(P\doublecap R_n\right)\le Inv$, and so $R_{n+1} \le Inv$ as well. 
			
			Finally, we have $sp.rhs.Inv =  \bigvee_{n\geq 0} \left(P^\bot\doublecap R_n\right) \le P^\bot\doublecap Inv$, and the result follows from Theorem~\ref{thm:generalref}.
		\end{proof}
    
	\begin{proof}[Poof of Theorem~\ref{thm:pcompletepre}]
			From \textsc{(Comm-Pres-Cons)}, it suffices to prove that for any $Q$ and $S$ with $qv(S)\subseteq \bar{q}$,
			\[ 
			\vdash_{\textsc{Wpc}} [wlp.S.Q, Q]_{\bar{q}}\le S \qquad \mbox{and}\qquad \vdash_{\textsc{Spc}} [Q, sp.S.Q]_{\bar{q}}\le S.\]
			This can be proven by induction on the structure of $S$. We only elaborate on the case where $S\define \whilestm{P[\bar{q}]}{S'}$; others are easier.
			
			First, from Lemma~\ref{lem:wlp}, $wlp.S.Q  = \bigwedge_{n\geq 0}R_n$ where $R_0 \define I$ and $$R_{n+1}  \define  \left(P\rightsquigarrow wlp.S'.R_n\right) \wedge\left(P^\bot\rightsquigarrow Q\right).$$	
			Let $T \define wlp.S.Q$. Then it is easy to show from Lemma~\ref{lem:propsasaki} and the fact that $wlp.S'$ is Scott-continuous that $T= \left(P\rightsquigarrow wlp.S'.T\right) \wedge\left(P^\bot\rightsquigarrow Q\right)$. Thus
			\begin{align*}
				[wlp.S.Q, Q]_{\bar{q}} &=  [\left(P\rightsquigarrow wlp.S'.T\right) \wedge\left(P^\bot\rightsquigarrow Q\right), Q]_{\bar{q}} \\
				&\le \while\ P[\bar{q}]\ \ddo\ [wlp.S'.T, T]_{\bar{q}}\ \pend & \textsc{(Wpc-Pres-While)}\\
				&\le \while\ P[\bar{q}]\ \ddo\ S'\ \pend &  \textsc{(Stru-While)},\ \mbox{hypothesis}
			\end{align*}
		
		Similarly, from Lemma~\ref{lem:spost}, $sp.S.Q  = \bigvee_{n\geq 0}  \left(P^\bot\doublecap R_n\right)$ where $R_0\define 0$ and
		$$R_{n+1} \define Q \vee sp.S'.(P\doublecap R_n).$$
		Let $Inv \define \bigvee_{n\geq 0}  R_n$. Then $Q\le Inv$. Furthermore, from Lemma~\ref{lem:propsasaki} and the fact that $sp.S'$ is Scott-continuous, we know $sp.S.Q = P^\bot \doublecap Inv$ and $sp.S'. \left(P\doublecap Inv\right) \le Inv$. Thus 
		\begin{align*}
			[P\doublecap Inv, Inv]_{\bar{q}} &\le [P\doublecap Inv, sp.S'. \left(P\doublecap Inv\right) ]_{\bar{q}} & \textsc{(Comm-Pres-Cons)}  \\
			&\le S' &   \mbox{hypothesis}
		\end{align*}
		and so
		\begin{align*}
			[Q, sp.S.Q]_{\bar{q}} &\le  [Inv, P^\bot \doublecap Inv]_{\bar{q}} & \textsc{(Comm-Pres-Cons)} \\
			& \le  \while\ P[\bar{q}]\ \ddo\ [P\doublecap Inv, Inv]_{\bar{q}}\ \pend & \textsc{(Spc-Pres-While)} \\
			&\le \while\ P[\bar{q}]\ \ddo\ S'\ \pend & \textsc{(Stru-While)} 
		\end{align*}
		\end{proof}
  
    \begin{proof}[Proof of Theorem~\ref{thm:generalassert}]
        First, from Lemma~\ref{lem:equivpcorrectness} and Theorem~\ref{thm:generalref}, we only need to prove (1) $\{P\}; S \equiv \{P\}; S; \{Q\}$ iff $sp.S.P \le Q$ and (2) $S; \{Q\} \equiv \{P\}; S; \{Q\}$ iff $P^\bot \le wlp.S.Q^\bot$.
        
        For (1), let $T$ be a projector. Then $sp.\left(\{P\}; S\right).T = sp.S.(P\doublecap T)$ while
        \[
        sp.\left(\{P\}; S; \{Q\}\right).T = Q\doublecap sp.S.(P\doublecap T).
        \]
        Note that $X\doublecap Y = Y$ iff $Y\le X$ for any projectors $X$ and $Y$. We have
        \begin{align*}
            \{P\}; S \equiv \{P\}; S; \{Q\} & \qquad \mbox{iff}\qquad \forall T,\ sp.S.(P\doublecap T) \le Q\\
            & \qquad \mbox{iff}\qquad sp.S.P \le Q.
        \end{align*}
        Similarly, since $wlp.\left(S; \{Q\}\right).T = wlp.S.(Q\rightsquigarrow T)$ and
        \[
        wlp.\left(\{P\}; S; \{Q\}\right).T = P\rightsquigarrow wlp.S.(Q\rightsquigarrow T).
        \]
        Note that $X\rightsquigarrow Y = Y$ iff $X^\bot \le Y$. Then
        \begin{align*}
            S; \{Q\} \equiv \{P\}; S; \{Q\} & \qquad \mbox{iff}\qquad \forall T,\  P^\bot \le wlp.S.(Q\rightsquigarrow T)\\
            & \qquad \mbox{iff}\qquad P^\bot \le wlp.S.Q^\bot. \qedhere
        \end{align*}
    \end{proof}		
		
		\begin{proof}[Poof of Theorem~\ref{thm:psoundass}]
			We first prove the soundness of the common rules.
			Rule \textsc{(Comm-Asst-True)} is trivial and \textsc{(Comm-Asst-Post)} and \textsc{(Comm-Asst-Pre)} are easy from Theorem~\ref{thm:generalassert}.
			For \textsc{(Comm-Asst-Seq)}, we first note from \textsc{(Stru-Seq)} that 
			\begin{align*}
				\{P\};  S_1 \equiv \{P\}; S_1; \{Q\} & \qquad \mbox{implies} \qquad\{P\};  S_1; S_2\qquad \equiv \{P\}; S_1; \{Q\}; S_2\\
				\{Q\};  S_2 \equiv \{Q\}; S_2; \{R\}& \qquad \mbox{implies} \qquad\{P\}; S_1; \{Q\}; S_2  \equiv \{P\}; S_1; \{Q\}; S_2; \{R\} 
			\end{align*}
			The result then follows from the transitivity of $\equiv$ (or rule \textsc{(Stru-Tran)}).
			
			Next, we prove the soundness of the rules based on the weakest liberal preconditions. 
			
			\textsc{(Wpc-Asst-Pres)}: For any projector $X$, we compute
			\begin{align*}
				wlp.\left( \{P\}; [P\rightsquigarrow Q, R]\right).T & = P\rightsquigarrow wlp. [P\rightsquigarrow Q, R].T\\
				&= \begin{cases}
					P\rightsquigarrow I &\mbox{ if } T=I\\
					P\rightsquigarrow ( P\rightsquigarrow Q) & \mbox{ if } R\le T \mbox{ and } T\neq I\\
					P\rightsquigarrow 0 & \mbox{o.w.}
				\end{cases}\\
				&= P\rightsquigarrow wlp. [Q, R].T\\
				&=wlp.\left( \{P\}; [Q, R]\right).T 
			\end{align*}
			where the third equality follows from the fact that $P\rightsquigarrow ( P\rightsquigarrow Q) = P\rightsquigarrow Q$.
			
			 \textsc{(Wpc-Asst-Cond)}: For simplicity, we denote $P_1 \define P$ and $P_0 \define P^\bot$. Then from the assumptions, 
				\begin{align*}
					wlp.S_i.(R\rightsquigarrow X) &= (P_i\rightsquigarrow Q)\rightsquigarrow wlp.S_i.(R\rightsquigarrow X).
				\end{align*}
				Thus for any projector $T$,
				\begin{align*}
					&wlp.\left(\{Q\}; \iif\ P[\bar{q}]\ \then\ \{P\rightsquigarrow Q\}; S_1\ \eelse\ \{P^
					\bot\rightsquigarrow Q\}; S_0\ \pend; \{R\}\right).T\\
					& = Q\rightsquigarrow \bigwedge_{i=0}^1 \left(P_i\rightsquigarrow \left[(P_i\rightsquigarrow Q)\rightsquigarrow wlp.S_i.(R\rightsquigarrow T)\right]\right)\\
					&= \bigwedge_{i=0}^1  \left(Q\rightsquigarrow \left(P_i\rightsquigarrow \left[(P_i\rightsquigarrow Q)\rightsquigarrow wlp.S_i.(R\rightsquigarrow T)\right]\right)\right)\\
					&= \bigwedge_{i=0}^1  \left(P_i\rightsquigarrow \left[(P_i\rightsquigarrow Q)\rightsquigarrow wlp.S_i.(R\rightsquigarrow T)\right]\right)\\
					&= \bigwedge_{i=0}^1  \left(P_i\rightsquigarrow wlp.S_i.(R\rightsquigarrow T)\right)\\
					&=wlp.\left(\iif\ P[\bar{q}]\ \then\ S_1\ \eelse\ S_0\ \pend; \{R\}\right).T
				\end{align*}
				where  the second and third equalities follow from Lemma~\ref{lem:propsasaki}.

			\textsc{(Wpc-Asst-While)}: For any projector $T$, let  $R_0 =  R_0' =I$ and 
			\begin{align*}
				R_{n+1} &= (P\rightsquigarrow wlp.S.R_n) \wedge (P^\bot \rightsquigarrow ((P^\bot \rightsquigarrow Q)\rightsquigarrow T)), \\
				R_{n+1}' &= \left(P\rightsquigarrow \left[(P\rightsquigarrow Q) \rightsquigarrow wlp.S.(Q \rightsquigarrow R_n')\right]\right) \wedge (P^\bot \rightsquigarrow ((P^\bot \rightsquigarrow Q)\rightsquigarrow T))
			\end{align*}
			for any $n\geq 0$.
			Note that $P^\bot\rightsquigarrow (P^\bot\rightsquigarrow T) = P^\bot\rightsquigarrow T$.
			Then from Lemma~\ref{lem:wlp}\ref{lemitem:wlprn},
			\begin{align*}
				&wlp.\left(\{Q\}; \while\ P[\bar{q}]\ \ddo\ \{P\rightsquigarrow Q\}; S; \{Q\}\ \pend; \{P^\bot \rightsquigarrow Q\} \right).T\\
				& = Q \rightsquigarrow \bigwedge_{n\geq 0}R_n' = \bigwedge_{n\geq 0} \left(Q \rightsquigarrow R_n'\right)
			\end{align*}
			where the last equality follows from Lemma~\ref{lem:propsasaki}.
			We claim that for any $n\geq 0$, $Q \rightsquigarrow R_n' = R_n.$ The case where $n=0$ is easy since $(Q \rightsquigarrow I) = I$. Now we compute that
			\begin{align*}
				&Q \rightsquigarrow \left(P\rightsquigarrow \left[(P\rightsquigarrow Q) \rightsquigarrow wlp.S.(Q \rightsquigarrow R_n')\right]\right)  \\
				&= Q \rightsquigarrow \left(P\rightsquigarrow \left[(P\rightsquigarrow Q) \rightsquigarrow wlp.S.R_n\right]\right)  \\
				&= P\rightsquigarrow \left[(P\rightsquigarrow Q) \rightsquigarrow wlp.S.R_n\right]\\
				& = P\rightsquigarrow wlp.S.R_n
			\end{align*}
			where the first equality follows from the induction hypothesis, the second from Lemma~\ref{lem:propsasaki}, and the third from the assumption that $ S; \{Q\} \equiv \{P\rightsquigarrow Q\}; S; \{Q\}$. Furthermore, using Lemma~\ref{lem:propsasaki} again, we have
			\begin{align*}
				&Q \rightsquigarrow \left(P^\bot \rightsquigarrow \left[(P^\bot \rightsquigarrow Q)\rightsquigarrow T\right]\right) = P^\bot \rightsquigarrow \left[(P^\bot \rightsquigarrow Q)\rightsquigarrow T\right].
			\end{align*}
			Combining the above two derivatives, we have $Q \rightsquigarrow R_{n+1}' = R_{n+1}$ from Lemma~\ref{lem:propsasaki}.
			With the claim, we have $wlp.rhs.T = \bigwedge_{n\geq 0} R_n = wlp.lhs.T$.

			Finally, we prove the soundness of rules based on the strongest postconditions.

			 \textsc{(Spc-Asst-Post)}: The first equivalence follows directly from the first part of Theorem~\ref{thm:generalassert}. For the second equivalence, we note that $\left(wlp.\{P\}.Q^\bot\right)^\bot = \left(P\rightsquigarrow Q^\bot\right)^\bot = P\doublecap Q$.

			\textsc{(Spc-Asst-Pres)}:  For any projector $T$, we compute
			\begin{align*}
				sp.\left([Q, P\doublecap R]; \{P\}\right).T & = P\doublecap sp. [Q, P\doublecap R].T\\
				&= \begin{cases}
					P\doublecap 0 &\mbox{ if } T=0\\
					P\doublecap (P\doublecap R) & \mbox{ if } T\le Q \mbox{ and } T\neq 0\\
					P\doublecap I & \mbox{o.w.}
				\end{cases}\\
				&= P\doublecap sp. [Q, R].T\\
				&=sp.\left([Q, R];\{P\}\right).T
			\end{align*}
			where the third equality follows from the fact that $P\doublecap ( P\doublecap Q) = P\doublecap Q$.
			
			\textsc{(Spc-Asst-Cond)}: For simplicity, we denote $P_1 \define P$ and $P_0 \define P^\bot$. Then from the assumptions, 
				\begin{align*}
					sp.S_i.((P_i\doublecap Q)\doublecap X) &= R\doublecap sp.S_i.((P_i\doublecap Q)\doublecap X).
				\end{align*}
				Thus for any projector $T$,
				\begin{align*}
					&sp.\left(\{Q\}; \iif\ P[\bar{q}]\ \then\ \{P\doublecap Q\}; S_1\ \eelse\ \{P^
					\bot\doublecap Q\}; S_0\ \pend; \{R\} \right).T\\
					& = R\doublecap \left(\bigvee_{i=0}^1 sp.S_i.\left((P_i\doublecap Q)\doublecap (P_i\doublecap (Q\doublecap T))\right)\right)\\
					&= \bigvee_{i=0}^1  \left(R\doublecap sp.S_i.\left((P_i\doublecap Q)\doublecap (P_i\doublecap (Q\doublecap T))\right)\right)\\
					&= \bigvee_{i=0}^1  sp.S_i.\left((P_i\doublecap Q)\doublecap (P_i\doublecap (Q\doublecap T))\right)\\
					&= \bigvee_{i=0}^1  sp.S_i.\left(P_i\doublecap (Q\doublecap T)\right)\\
					&=sp.\left(\{Q\}; \iif\ P[\bar{q}]\ \then\ S_1\ \eelse\ S_0\ \pend\right).T
				\end{align*}
				where the second equality follows from Lemma~\ref{lem:propsasaki}, and the fourth from the fact that $P_i\doublecap (Q\doublecap T) \le P_i\doublecap Q$.
				
				\textsc{(Spc-Asst-While)}: Let $S_n$ be defined as in the proof of Lemma~\ref{lem:wlp}~\ref{lemitem:wlprn},  $S'_0 \define \abort$, and $$S'_{n+1}\define \iif\ P[\bar{q}]\ \then\  \{P\doublecap Inv\}; S; \{Inv\}; S'_n\ \eelse\ \sskip\ \pend$$ for $n\geq 0$. Then for any projector $T$,
				\begin{align}
					&sp.\left(\{Inv\}; \while\ P[\bar{q}]\ \ddo\ \{P\doublecap Inv\}; S; \{Inv\}\ \pend; \{P^\bot\doublecap Inv\} \right).T \notag\\
					& = (P^\bot\doublecap Inv)\doublecap \left(\bigvee_{n\geq 0} sp.S'_n.(Inv\doublecap T)\right)\notag\\
					& = \bigvee_{n\geq 0} (P^\bot\doublecap Inv)\doublecap \left(sp.S'_n.(Inv\doublecap T)\right) \tag{*}
				\end{align}
				where the last equality follows from Lemma~\ref{lem:propsasaki}.
				We claim that for any $n\geq 0$ and any $X$,
				\begin{equation*}
					(P^\bot\doublecap Inv)\doublecap \left(sp.S'_n.(Inv\doublecap X)\right) = sp.S_n.(Inv\doublecap X).
				\end{equation*}
				The case where $n=0$ is easy, since $S_0 = S'_0 = \abort$. Now we compute that
				\begin{align*}
					&(P^\bot\doublecap Inv)\doublecap \left(sp.S'_{n+1}.(Inv\doublecap X)\right)\\
					& = (P^\bot\doublecap Inv)\doublecap \left( sp.S'_n.\left[Inv\doublecap \{sp.S.\left[(P\doublecap Inv)\doublecap (P\doublecap (Inv\doublecap X))\right]\}\right]\vee \left(P^\bot \doublecap (Inv\doublecap X)\right)\right)\\
					& = \left( (P^\bot\doublecap Inv)\doublecap  sp.S'_n.\left[Inv\doublecap \{sp.S.\left[(P\doublecap Inv)\doublecap (P\doublecap (Inv\doublecap X))\right]\}\right]\right)\vee \left(P^\bot \doublecap (Inv\doublecap X)\right)\\
					& = sp.S_n.\left[Inv\doublecap \{sp.S.\left[(P\doublecap Inv)\doublecap (P\doublecap (Inv\doublecap X))\right]\}\right]\vee \left(P^\bot \doublecap (Inv\doublecap X)\right)\\
					& = sp.S_n.\left(sp.S.\left[(P\doublecap Inv)\doublecap (P\doublecap (Inv\doublecap X))\right]\right)\vee \left(P^\bot \doublecap (Inv\doublecap X)\right)\\
					& = sp.S_n.\left(sp.S.\left[P\doublecap (Inv\doublecap X)\right]\right)\vee \left(P^\bot \doublecap (Inv\doublecap X)\right)\\
					& = sp.S_{n+1}.(Inv\doublecap X)
				\end{align*}
				where the second equality follows from Lemma~\ref{lem:propsasaki}, the third from the induction hypothesis, and the fourth from the assumption that $\{P\doublecap Inv\};  S \equiv \{P\doublecap Inv\}; S; \{Inv\}$.
				
				With the claim, we have
				\begin{align*}
					(*)& =\bigvee_{n\geq 0} sp.S_n.(Inv\doublecap T)=sp.\left(\{Inv\}; \pwstm\right).T. \qedhere
				\end{align*}
		\end{proof}

		\section{Constructions for full quantum-to-quantum Bernoulli factory}
		\label{app:q2q-bernoulli}
		{
			\renewcommand{\arraystretch}{1.3}
			\begin{figure}
				\footnotesize
				\begin{tabular}{rl}
					$\mathit{Sum}[S_1,\cdots, S_n]\triangleq$ & \hspace{-0.3cm} $\mathit{Add}[\mathit{Add}[\cdots \mathit{Add}[S_1, S_2], \cdots], S_n]$ \\
					$\mathit{Exp}_n\triangleq$ & \hspace{-0.3cm} $\underbrace{\mathit{Mul}[\mathit{Mul}[\cdots \mathit{Mul}[\mathit{Init}_p, \mathit{Init}_p], \cdots], \mathit{Init}_p]}_{\mathrm{repeat\ }n \mathrm{\ times}}$ \quad $(n\geq 2)$\\
					$\mathit{Poly}_g\triangleq$ & \hspace{-0.3cm} $\mathit{Sum}[\mathit{Cst}_{\alpha_0}, \mathit{Mul}[\mathit{Cst}_{\alpha_1},\mathit{Init}_p], \mathit{Mul}[\mathit{Cst}_{\alpha_2},\mathit{Exp}_2], \cdots,  \mathit{Mul}[\mathit{Cst}_{\alpha_n},\mathit{Exp}_n]]$\\
					$\mathit{Simul}_h\triangleq$ & \hspace{-0.3cm} $\mathit{Add}[ \mathit{Mul}[ \mathit{Mul}[ \mathit{Poly}_{g_1}, \mathit{Inv}[\mathit{Poly}_{g_2}] ], \qcoin] , \mathit{Mul}[ \mathit{Poly}_{g_3}, \mathit{Inv}[\mathit{Poly}_{g_4}] ] ]$
				\end{tabular}
				\caption{Further constructs for quantum-to-quantum Bernoulli factory, where
					$g(p) = \sum_i\alpha_ip^i$ with $\alpha_i\in\mathbb{C}$ is polynomial of $p$, $h(p) = \frac{g_1(p)}{g_2(p)}\sqrt{\frac{p}{1-p}} + \frac{g_3(p)}{g_4(p)}$ is any quantum-to-quantum simulable function\cite{JZS18} with $g_i$ being polynomials of $p$.
				}
				\label{fig:q2q-bernoulli-A}
			\end{figure}
		}

		Based on the constructs presented in Fig. \ref{fig:q2q-bernoulli}, we further construct the programs $\mathit{Sum}$, $\mathit{Exp}_n$, $\mathit{Poly}_g$ and $\mathit{Simul}_h$ such that
		{
			\small
			\begin{align*}  
				&[I, |h_1+\cdots+h_n\>]_q \le \mathit{Sum}[F_{h_1},\cdots,F_{h_n}](q)\quad 
				&&[I, |p^n\>]_q \le \mathit{Exp}_n(q)  \\     
				&[I, |g\>]_q \le \mathit{Poly}_g(q)  \quad  
				&&[I, |h\>]_q \le \mathit{Simul}_h(q).
			\end{align*}
		}       
		
		$\mathit{Sum}$ and $\mathit{Exp}_n$ are refined by calling $\mathit{Init}_p$, $\mathit{Mul}$ and $\mathit{Add}$ repeatedly\footnote{We can also introduce an auxiliary variable that counts the iteration number rather than repeatedly calling subroutines in program scheme. Since $n$ and $h_i$ are regarded as parameters, we keep them in these simple forms.}:
		\begin{align*}
			[I, |h_1+\cdots+h_n\>] &\le \mathit{Add}[\mathit{Add}[\cdots \mathit{Add}[F_{h_1}, F_{h_2}], \cdots], F_{h_n}]\\
			[I, |p^n\>] &\le \mathit{Mul}[F_{p^{n-1}}, F_{p}] \le \mathit{Mul}[\mathit{Mul}[\cdots \mathit{Mul}[F_{p}, F_{p}], \cdots], F_{p}] \\
			&\le \underbrace{\mathit{Mul}[\mathit{Mul}[\cdots \mathit{Mul}[\mathit{Init}_p, \mathit{Init}_p], \cdots], \mathit{Init}_p]}_{\mathrm{repeat\ }n \mathrm{\ times}}.
		\end{align*}
		Then for any polynomial $g(p) = \alpha_0 + \alpha_1p + \sum_{i=2}^n\alpha_{i}p^{i}$, we give $\mathit{Poly}_g$ that refines $[I, |g\>]$:
		\begin{align*}
			[I, |g\>] &\le \mathit{Sum}[\mathit{Cst}_{\alpha_0}, \mathit{Mul}[\mathit{Cst}_{\alpha_1},\mathit{Init}_p], 
			\mathit{Mul}[\mathit{Cst}_{\alpha_2},\mathit{Exp}_2], \cdots,  \mathit{Mul}[\mathit{Cst}_{\alpha_n},\mathit{Exp}_n]].
		\end{align*}
		Finally, we construct the program $\mathit{Simul}_h$ that refine $[I, |h\>]$, i.e., it produces $|h\>$, where $h$ is an arbitrary simulable function of the form
		$h(p) = \frac{g_1(p)}{g_2(p)}\sqrt{\frac{p}{1-p}} + \frac{g_3(p)}{g_4(p)}$ with $g_i(p)$ being polynomials of $p$:
		\begin{align*}
			[I, |h\>] &\le \mathit{Add}[ \mathit{Mul}[ \mathit{Mul}[ \mathit{Poly}_{g_1}, \mathit{Inv}[\mathit{Poly}_{g_2}] ], \qcoin] , \mathit{Mul}[ \mathit{Poly}_{g_3}, \mathit{Inv}[\mathit{Poly}_{g_4}] ] ].
		\end{align*}
		
		Now, we finish building all the remaining ingredients presented in Fig. \ref{fig:q2q-bernoulli-A}, as well as the final program $\mathit{Simul}_h$ that produces $|h\>$ for any simulable function $h$.

		\section{\texttt{Quire}: Rz rotation example}
        \label{sec:Rz in Quire}
		This section demonstrates the refinement of the Rz rotation example in Sec.~\ref{subsec:zrotation}.

		\subsubsection*{Start the server.}
		\texttt{Quire} is operated by commands. We can either invoke its interface in Python code or start a server for interactive proving. For the latter approach, the Python boot script is as follows:

		\lstset{style=py}
		\begin{lstlisting}[language=Python]
    import numpy as np
    from quire import quire_server, predefined
    opts = { "Rz" : predefined.Rz(np.arccos(3/5)) }
    quire_server("./code", "./output", opts)
        \end{lstlisting}
		Here the parameters \textcolor{PineGreen}{\textbf{\texttt{"./code"}}} and \textcolor{PineGreen}{\textbf{\texttt{"./output"}}} specify the file for \texttt{Quire} code input and the information output. The server will monitor the modifications made in \textcolor{PineGreen}{\textbf{\texttt{"./code"}}} and update the \textcolor{PineGreen}{\textbf{\texttt{"./output"}}} file. \texttt{opts} is a dictionary for string-matrix pairs which introduce user-defined operators into the prover. In our example, it is the target quantum gate \texttt{Rz}.
		
		\subsubsection*{Construct and test operators.}
		We start with defining new constants in the \textcolor{PineGreen}{\textbf{\texttt{"./code"}}} file. The following code constructs operators $P^{00}$ and $P^{\neq 00}$ and outputs their matrix representation. 

\begin{tikzpicture}
  \node[fill=gray!20, inner xsep=5pt, inner ysep =5pt] (box) {
    \begin{minipage}{0.96\textwidth}
      \lstset{style=Quire}
      \begin{lstlisting}
Def P00 := [|00$\>$]. // The projector on |00>
Def Pnot00 := P00^$\bot$.
Eval P00.
      \end{lstlisting}\end{minipage}
    };
\end{tikzpicture}

		\texttt{Quire} is able to calculate projective predicates and test the order between operators. As an example, the following command verifies the statement that $P^{00}_{p, q} \wedge |+\>_{p}\<+| \sqsubseteq \boldsymbol{0}$. 

\begin{tikzpicture}
  \node[fill=gray!20, inner xsep=5pt, inner ysep =5pt] (box) {
    \begin{minipage}{0.96\textwidth}
      \lstset{style=Quire}
      \begin{lstlisting}
Test P00[p q] $\wedge$ Pp[p] <= c0[]. // [p ... q] denotes quantum registers
        \end{lstlisting}
    \end{minipage}
    };
\end{tikzpicture}
		
		\subsubsection*{Construct programs and proofs.}
We define the program \graycode{pCircuit} that will be used in the refinement.

\begin{tikzpicture}
  \node[fill=gray!20, inner xsep=5pt, inner ysep =5pt] (box) {
    \begin{minipage}{0.96\textwidth}
      \lstset{style=Quire}
      \begin{lstlisting}
Def pCircuit := Prog
    H[q0]; H[q1];
    CCX[q0 q1 t]; S[t]; CCX[q0 q1 t];
    H[q0]; H[q1];
    if Pnot00[q0 q1] then Z[t] else skip end.
    \end{lstlisting}
      \end{minipage}
    };
\end{tikzpicture}

		The refinement process can also be constructed in the form of a program statement. The following command checks $[|0\>_q\<0|, |1\>_q\<1|] \sqsubseteq q \apply X$ and defines the program \graycode{ex} as \graycode{X[q]} with the history of refinement.
  
\begin{tikzpicture}
  \node[fill=gray!20, inner xsep=5pt, inner ysep =5pt] (box) {
    \begin{minipage}{0.96\textwidth}
      \lstset{style=Quire}
\begin{lstlisting}
Def ex := Prog < P0[q], P1[q] > <= X[q].
\end{lstlisting}
      \end{minipage}
    };
\end{tikzpicture}

		\subsection{Stepwise refinement of \texttt{Rz}}
		Similar to other interactive theorem provers, \texttt{Quire} supports the tactic mode for stepwise refinement. The following code refines the target prescription $[\Omega_{t,t'}, \mathcal{R}_z (\theta)_t(\Omega_{t,t'})] \equiv [\Omega_{t,t'}, R_z(\theta)_t \Omega_{t, t'} R_z(\theta)_t^\dagger]$.

\begin{tikzpicture}
  \node[fill=gray!20, inner xsep=5pt, inner ysep =5pt] (box) {
    \begin{minipage}{0.96\textwidth}
      \lstset{style=Quire}
\begin{lstlisting}
Refine pf : < Omega[t t'], Rz[t] * Omega[t t'] * Rz[t]$\dagger$ >.
    Step Seq Pnot00[q0 q1] * Omega[t t'].
    Step [q0 q1] :=0; X[q0].
    Def Inv0 := (Pnot00[q0 q1] $\otimes$ Omega[t t']) $\vee$ (P00[q0 q1] $\otimes$ (Rz[t] * Omega[t t'] * Rz[t]$\dagger$)).
    Step While Pnot00[q0 q1] Inv IQOPT Inv0.
    Step Seq P00[q0 q1] $\otimes$ Omega[t t'].
    Step [q0 q1] :=0.
    Step proc pCircuit.
End.
\end{lstlisting}      
\end{minipage}
    };
\end{tikzpicture}

		See Fig. \ref{fig:refine illustration} for an illustration.
		As we proceed with every step, the updated information for the current goals is shown in the output file. As in Sec. \ref{subsec:zrotation}, the first step is to insert an intermediate predicate using \graycode{\textcolor{NavyBlue}{\textbf{Step Seq}}} command. 
		It turns the original prescription into two subgoals. 
		The next step finishes the first goal using \graycode{[q0 q1] :=0; X[q0]}. Regarding the second goal, we define the invariant \graycode{Inv0} as in Sec. \ref{subsec:zrotation}, then use command \graycode{\textcolor{NavyBlue}{\textbf{Step While}}} to refine the goal prescription to a while program according to \textsc{(Spc-Pres-While)}. The following steps complete the goal with an initialization \graycode{[q0 q1] :=0} and the subprocedure \graycode{pCircuit}.
		
		In the following code, the proof \texttt{pf} is printed, the refinement result \graycode{S0} is extracted as a quantum program, and the classical simulation of $\llbracket S0 \rrbracket (|+\>_t\<+|)$ is calculated. 

\begin{tikzpicture}
  \node[fill=gray!20, inner xsep=5pt, inner ysep =5pt] (box) {
    \begin{minipage}{0.96\textwidth}
      \lstset{style=Quire}
\begin{lstlisting}
Show pf.
Def S0:= Extract pf. Show S0.
Def rho := [[proc S0]](Pp[t]). Show rho.
\end{lstlisting}
\end{minipage}
    };
\end{tikzpicture}

		\section{Documentation of \texttt{Quire}}
		\label{sec: doc_quire}
		This section explains the Python interface, commands and syntax of \texttt{Quire}. For further details, please refer to the GitHub repository at: \url{https://github.com/LucianoXu/Quire}.
		
		\subsection{Python interface}
		\begin{itemize}
			\item \color{blue} \texttt{quire\_code(input\_code, opts)} \\
			\color{black} Reset the prover with extra operators in \texttt{opts}, and process the \texttt{Quire} commands in the string \texttt{input\_code}.
			
			\item \color{blue} \texttt{quire\_file(input\_path, opts)}\\
			\color{black} Reset the prover with extra operators in \texttt{opts}, and process the \texttt{Quire} commands in the file at \texttt{input\_path}.
			
			\item \color{blue} \texttt{quire\_server(input, output, opts)}\\
			\color{black} Reset the prover with extra operators in \texttt{opts}, and start an interactive server which processes \texttt{Quire} commands in the file \texttt{input} and outputs information in the file \texttt{output}.\\
			Note: Modify and save the \texttt{input} file to update the input, and use \texttt{Ctrl+C} to close the server.
		\end{itemize}

		In the following introduction of syntax, $C$ denotes an identifier for a constant, $stm$ denotes a quantum program, $qvar$ denotes a quantum register, $o$ denotes an unlabelled operator, and $oi$ denotes a labelled quantum operator.
		
		\subsection{\texttt{Quire} commands.}
		The \texttt{Quire} tool is manipulated by a simple imperative language, which consists of a sequence of commands. 
		
		The possible commands include: 
		\begin{itemize}
			\item \color{blue} $\texttt{Def}\ C\ \texttt{:=}\ o \texttt{.}$ \\
			\color{black} Define the constant $C$ as the unlabelled operator $o$.
			
			\item \color{blue} $\texttt{Def}\ C\ \texttt{:=}\ oi \texttt{.}$\\
			\color{black} Define the constant $C$ as the labelled operator $oi$. Here and below, labelled operators mean quantum operators operating on the specified quantum systems.
			
			\item \color{blue} $\texttt{Def}\ C\ \texttt{:= [[}stm \texttt{]](} oi \texttt{).} $\\
			\color{black} Define the constant $C$ as the classical simulation result of executing the quantum program $stm$ on the quantum state $oi$.
			
			\item \color{blue} $\texttt{Def}\ C\ \texttt{:=}\ \texttt{Prog}\ stm \texttt{.}$ \\
			\color{black} Define the constant $C$ as the quantum program $stm$.
			
			\item \color{blue} $\texttt{Def}\ C_1\ \texttt{:=}\ \texttt{Extract}\ C_2 \texttt{.}$\\
			\color{black} Define the constant $C_1$ as the program extracted from the program/proof $C_2$.
			
			\item \color{blue} $\texttt{Refine}\ C : pres \texttt{.}$\\
			\color{black} Define $C$ as the prescription $pres$ and start the stepwise refinement mode on it.
			
			\item \color{blue} $\texttt{Step}\ stm \texttt{.}$\\
			\color{black} (refinement mode) Try to refine the current goal with the program $stm$. 
			
			\item \color{blue} $\texttt{Step}\ \texttt{Seq}\ oi \texttt{.}$\\
			\color{black} (refinement mode) Apply the refinement rule \textsc{(Comm-Pres-Cons)} $[P, Q] \sqsubseteq [P, R]; [R, Q]$ to the current goal, where $R$ is specified by $oi$.
			
			\item \color{blue} $\texttt{Step}\ \texttt{If}\ oi \texttt{.}$ \\
			\color{black} (refinement mode) Apply the refinement rule \textsc{(Spc-Pres-Cond)}: 
			$$
			[P,Q]_{\bar{q}}\equiv \iif\ R[\bar{q}]\ \then\ [R\doublecap P, Q]_{\bar{q}}\ \eelse\ [R^\bot\doublecap P, Q]_{\bar{q}}
			$$ 
			to the current goal, where $R$ is specified by $oi$.
			
			\item \color{blue} $\texttt{Step}\ \texttt{While}\ oi_1\ \texttt{Inv}\ oi_2 \texttt{.}$\\
			\color{black} (refinement mode) Apply the refinement rule \textsc{(Spc-Pres-While)}
			$$
			[Inv, P^\bot \doublecap Inv]_{\bar{q}}\le \while\ P[\bar{q}]\ \ddo\ [P\doublecap Inv, Inv]_{\bar{q}}\ \pend
			$$
			to the current goal, where $P$, $Inv$ are specified by $oi_1$, $oi_2$ respectively.

                \item \color{blue} $\texttt{WeakenPre}\ oi \texttt{.}$\\
                \color{black} (refinement mode) Weaken the precondition of the goal to $oi$.

                \item \color{blue} $\texttt{StrengthenPost}\ oi \texttt{.}$\\
                \color{black} (refinement mode) Strengthen the postcondition of the goal as $oi$.
			
			\item \color{blue} $\texttt{Choose}\ N \texttt{.}$\\
			\color{black} (refinement mode) Chose the $N$-th goal as the current goal.
			
			\item \color{blue} $\texttt{End} \texttt{.}$\\
			\color{black} (refinement mode) Complete the refinement when all goals are clear.
			
			\item \color{blue} $\texttt{Pause} \texttt{.}$\\
			\color{black} (interactive server) Pause the parsing of the input file so that the current information of the prover can be shown in the output file.
			
			\item \color{blue} $\texttt{Show}\ \texttt{Def} \texttt{.}$\\
			\color{black} Print all the names of definitions in the environment.
			
			\item \color{blue} $\texttt{Show}\ C \texttt{.}$\\
			\color{black} Print the definition of $C$.
			
			\item \color{blue} $\texttt{Eval}\ C \texttt{.}$\\
			\color{black} Evaluate the definition $C$ (e.g. operator expressions) and print the value.
			
			\item \color{blue} $\texttt{Test}\ o_1\ \texttt{=}\ o_2 \texttt{.}$\\
			\color{black} Test whether $o_1 = o_2$ for the unlabelled operators $o_1$ and $o_2$.
			
			\item \color{blue} $\texttt{Test}\ o_1\ \texttt{<=}\ o_2 \texttt{.}$\\
			\color{black} Test whether $o_1 \sqsubseteq o_2$ for the unlabelled operators $o_1$ and $o_2$.
			
			\item \color{blue} $\texttt{Test}\ oi_1\ \texttt{=}\ oi_2 \texttt{.}$\\
			\color{black} Test whether $oi_1 = oi_2$ for the labelled operators $o_1$ and $o_2$.
			
			\item \color{blue} $\texttt{Test}\ oi_1\ \texttt{<=}\ oi_2 \texttt{.}$\\
			\color{black} Test whether $oi_1 \sqsubseteq oi_2$ for the labelled operators $o_1$ and $o_2$.
		\end{itemize}
		
		\subsection{Program Syntax}
		A quantum program $stm$ is constructed in the following ways:
		\begin{itemize}
			\item {\color{blue} $\texttt{abort}$}\\
			The abort program.
			\item {\color{blue} $\texttt{skip}$}\\
			The skip program.
			\item {\color{blue} $qvar \texttt{:=0}$}\\
			The initialization program.
			\item {\color{blue} $oi$}\\
			The unitary transformation.
			
			\item {\color{blue} $\texttt{assert}\ oi$}\\
			The assertion program.
			\item {\color{blue} $\texttt{< } oi_1\texttt{, } oi_2 \texttt{ >}$}\\
			The program prescription.
			
			\item {\color{blue} $ stm_1 \texttt{; }stm_2 $}\\
			The sequential composition. Right associative.
			
			\item {\color{blue} $ \texttt{(} stm_1\ \texttt{[} p\ \oplus\texttt{]}\ stm_2 \texttt{)}$\\
				$ \texttt{(} stm_1\ \texttt{[} p\ \texttt{\textbackslash oplus}\texttt{]}\ stm_2 \texttt{)}$}\\
			The probabilistic composition. $p$ is a real number.
			
			\item {\color{blue} $ \texttt{if}\ oi\ \texttt{then}\ stm_1\ \texttt{else}\ stm_0\ \texttt{end}$}\\
			The if program.
			
			\item {\color{blue} $ \texttt{while}\ oi\ \texttt{do}\ stm\ \texttt{end}$}\\
			The while program.
			
			\item {\color{blue} $ \texttt{proc}\ C$}\\
			The program of subprocedure $C$.
			
			\item {\color{blue} $ pres\ \texttt{<=}\ stm$}\\
			The program representing a refinement from prescription $prec$ to program $stm$. In other words, it is the program $stm$ with the refinement history.  An error will be reported if the refinement is not valid.
			
		\end{itemize}
		
		\subsection{Operator Syntax}
            In Quire, operators are categorized into unlabelled operators and labelled ones. Calculations involving unlabelled operators are limited to those with corresponding data types (qubit numbers). Additionally, due to automatic extensions, calculations between labelled operators are always permitted.
		
		The grammar for unlabelled operators is:
		$$
		\begin{aligned}
			o ::=\ &C\ |\ [v]\ |\ - o\ |\ o + o\ |\ o - o\\
			& |\ c * o\ |\ c\ o\ \\
			& |\ o * o\ |\ o\dagger \\
			& |\ o \otimes o\ \\
			& |\ o \vee o\ |\ o \wedge o\ |\ o \verb|^| \bot \\
			& |\ o \rightsquigarrow o\ |\ o \Cap o.
		\end{aligned}
		$$
  
        The operator $\texttt{[} v \texttt{]}$ corresponds to the projector $|v\>\<v|$. The syntax for $v$ is:
        $$
        v ::=\ \ket{\texttt{<bit string>}}\ |\ v + v\ |\ c * v\ |\ c\ v.
        $$
        		
		The grammar for labelled operators is:
		$$
		\begin{aligned}
			oi ::= \ & \text{\texttt{IQOPT}}\ C\ |\ o\ qvar \\
			& |\ -oi\ |\ oi + oi\ |\ oi - oi\\
			& |\ c*oi\ |\ c\ oi\\
			& |\ oi * oi\ |\ oi\dagger\\
			& |\ oi \otimes oi\\
			& |\ oi \vee oi\ |\ oi \wedge oi\ |\ oi\ \verb|^|\bot\\
			& |\ oi \rightsquigarrow oi\ |\ oi \Cap oi.
		\end{aligned}
		$$
		The Unicode characters can be replaced by ASCII strings: $\dagger$ by \texttt{\textbackslash dagger}, $\otimes$ by \texttt{\textbackslash otimes}, $\vee$ by \texttt{\textbackslash vee}, $\wedge$ by \texttt{\textbackslash wedge}, $\bot$ by \texttt{\textbackslash bot}, $\rightsquigarrow$ by \texttt{\textbackslash SasakiImply} and $\Cap$ by \texttt{\textbackslash SasakiConjunct}.
		
	\end{document}